\algnewcommand{\IfThenElse}[3]{%
  \State \algorithmicif\ #1\ \algorithmicthen\ #2\ \algorithmicelse\ #3}
\renewcommand{\paragraph}[1]{\smallskip\noindent\emph{#1}}
\renewcommand{\subsubsection}[1]{\medskip\noindent\textbf{#1}}
\title{Robust Almost-Sure Reachability in Multi-Environment MDPs}
\author{}
\authorrunning{}
\institute{}
\author{Marck van der Vegt\orcidID{0000-0003-2451-5466} \and Nils Jansen\orcidID{0000-0003-1318-8973} \and Sebastian Junges\orcidID{0000-0003-0978-8466}}
\authorrunning{M.\ van der Vegt  \and N.\ Jansen  \and S.\ Junges}
\institute{
  Radboud University, Nijmegen, the Netherlands\\%
  \email{\{marck.vandervegt,  nils.jansen, sebastian.junges\}@ru.nl}%
}
\def\@citecolor{blue}%
\def\@urlcolor{blue}%
\def\@linkcolor{RedViolet}%
\def\orcidID#1{\smash{\href{http://orcid.org/#1}{\protect\raisebox{-1.25pt}{\protect\includegraphics{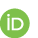}}}}}
\newcommand{\storm}{\textsc{Storm}\xspace}
\newcommand{\prism}{\textsc{Prism}\xspace}
\newcommand{\PaGE}{\textsc{PaGE}}
\newcommand{\unif}{\mathsf{unif}}
\newcommand{\dirac}{\mathsf{dirac}}
\newcommand{\assign}{\leftarrow}
\newcommand{\tuple}[1]{\langle #1 \rangle}
\newcommand{\mdp}{\mathcal{M}}
\newcommand{\memdp}{\mathcal{N}}
\newcommand{\game}{\mathcal{B}}
\newcommand{\egraph}{\mathit{GE}_\memdp}
\newcommand{\egraphV}{\mathit{V}_\memdp}
\newcommand{\egraphE}{\mathit{E}_\memdp}
\newcommand{\custombsg}[1]{\game_{#1}}
\newcommand{\bsg}{\custombsg{\memdp}}
\newcommand{\supp}{\mathit{Supp}}
\newcommand{\dist}{\mathit{Dist}}
\newcommand{\last}{\mathit{last}}
\newcommand{\powerset}[1]{\mathcal{P}\left(#1\right)}
\newcommand{\Path}{\textsc{Path}}
\newcommand{\restrictenv}[2]{{#1}_{\downarrow#2}}
\newcommand{\reachable}[1]{\mathsf{Reachable}(#1)}
\newcommand{\onlyreachable}[2]{\mathsf{ReachFragment}(#1, #2)}
\newcommand{\cut}[2]{{#1}{\mid\!#2}}
\newcommand{\envact}{\alpha_\otimes}
\newcommand{\wintmp}{W}
\newcommand{\notwin}{L}
\newcommand{\pa}{1} %
\newcommand{\pb}{2}
\newcommand{\initstate}{s_\iota}
\newcommand{\exampleend}{\hfill$\blacksquare$}
\newcommand{\win}[2]{\mathsf{Win}_{#1}^{#2}}
\newcommand{\scatterplotsize}[0]{0.32\textwidth}
\newcommand{\scatterplot}[6]{%
	\begin{tikzpicture}
	\begin{axis}[
	width=\scatterplotsize,
	height=\scatterplotsize,
	axis equal image,
	xmin=0.04,
	ymin=0.04,
	ymax=9000,
	xmax=9000,
	xmode=log,
	ymode=log,
	axis x line=bottom,
	axis y line=left,
	xtick={1,9,90,900},
	xticklabels={1,9,90,900},
	extra x ticks = {2700, 9000},
	extra x tick labels = {TO, MO},
	extra x tick style = {grid = major},
	ytick={1,9,90,900},
	yticklabels={1,9,90,900},
	extra y ticks = {2700, 9000},
	extra y tick labels = {TO, MO},
	extra y tick style = {grid = major},
	xlabel={\scriptsize #4},
	xlabel style={yshift=0cm},
	ylabel={\scriptsize #6},
	ylabel style={yshift=-0.4cm},
	yticklabel style={font=\tiny},
	xticklabel style={rotate=290,anchor=west,font=\tiny},
	]

	\addplot[
	scatter,
	only marks,
	mark=o,mark size=1.8,
	scatter/use mapped color={
	  draw=green!50!black
	}
	]%
	table [col sep=comma,x=#3,y=#5] {#1};
	\addplot[
	scatter,
	only marks,
	mark=x,
	scatter/use mapped color={
	  draw=red
	}
	]%
	table [col sep=comma,x=#3,y=#5] {#2};
	\addplot[no marks] coordinates {(0.01,0.01) (9000,9000) };
	\addplot[no marks, densely dotted] coordinates {(0.01,0.1) (900,9000)};
	\addplot[no marks, densely dotted] coordinates {(0.1,0.01) (9000,900)};
	\end{axis}
	\end{tikzpicture}
}
\newcommand{\positivedatalocation}{resources/formatted_data.csv}
\newcommand{\negativedatalocation}{resources/formatted_data2.csv}
\newcommand{\benchmark}[4]{%
  \scatterplot{\positivedatalocation}{\negativedatalocation}{#1}{#2}{#3}{#4}
}
\DeclareRobustCommand{\good}{\Simley{0.5}{0.2}\xspace}
\DeclareRobustCommand{\bad}{\Simley{-0.5}{0.2}\xspace}
\newcommand{\Simley}[2]{%
 \begin{tikzpicture}[scale=#2]
 \newcommand*{\SmileyRadius}{1.0}%
 ;
 
 \pgfmathsetmacro{\eyeX}{0.5*\SmileyRadius*cos(30)}
 \pgfmathsetmacro{\eyeY}{0.5*\SmileyRadius*sin(30)}
 \draw [line width=0.25mm] (\eyeX-0.25,\eyeY) -- (\eyeX-0.25,\eyeY+0.375);
 \draw [line width=0.25mm] (-\eyeX+0.25,\eyeY) -- (-\eyeX+0.25,\eyeY+0.375);
 
 \pgfmathsetmacro{\xScale}{2*\eyeX/180}
 \pgfmathsetmacro{\yScale}{1.0*\eyeY}
 \draw[line width=0.25mm, domain=-\eyeX:\eyeX]
 plot ({\x},{
  -0.1+#1*0.15 %
  -#1*1.75*\yScale*(sin((\x+\eyeX)/\xScale))-\eyeY});
 \end{tikzpicture}%
}%
\newcolumntype{L}[1]{>{\raggedright\let\newline\\\arraybackslash\hspace{0pt}}m{#1}}
\newcolumntype{C}[1]{>{\centering\let\newline\\\arraybackslash\hspace{0pt}}m{#1}}
\newcolumntype{R}[1]{>{\raggedleft\let\newline\\\arraybackslash\hspace{0pt}}m{#1}}
\def\thmt@rst@storecounters#1{%
\vspace{-1.9ex}%
  \bgroup
  \def\@currentlabel{}%
  \@for\thmt@ctr:=\thmt@innercounters\do{%
    \thmt@sanitizethe{\thmt@ctr}%
    \protected@edef\@currentlabel{%
      \@currentlabel
      \protect\def\@xa\protect\csname the\thmt@ctr\endcsname{%
        \csname the\thmt@ctr\endcsname}%
      \ifcsname theH\thmt@ctr\endcsname
        \protect\def\@xa\protect\csname theH\thmt@ctr\endcsname{%
          (restate \protect\theHthmt@dummyctr)\csname theH\thmt@ctr\endcsname}%
      \fi
      \protect\setcounter{\thmt@ctr}{\number\csname c@\thmt@ctr\endcsname}%
    }%
  }%
  \label{thmt@@#1@data}%
  \egroup
}%
\tikzset{
  ->/.style={-{Stealth[length=2mm, width=1mm]}},
}
\begin{document}

\maketitle

\begin{abstract}
  Multiple-environment MDPs (MEMDPs) capture finite sets of MDPs that share the states but differ in the transition dynamics.
	These models form a proper subclass of partially observable MDPs (POMDPs).
  We consider the synthesis of policies that robustly satisfy an almost-sure reachability property in MEMDPs, that is, \emph{one} policy that satisfies a property \emph{for all} environments.
	For POMDPs, deciding the existence of robust policies is an EXPTIME-complete problem.
	We show that this problem is PSPACE-complete for MEMDPs, while the policies require exponential memory in general.
  We exploit the theoretical results to develop and implement an algorithm that shows promising results in synthesizing robust policies for various benchmarks.
\end{abstract}

\section{Introduction}
Markov decision processes (MDPs) are the standard formalism to model sequential decision making under uncertainty.
A typical goal is to find a policy that satisfies a temporal logic specification~\cite{DBLP:books/daglib/0020348}.
Probabilistic model checkers such as \storm~\cite{DBLP:journals/sttt/HenselJKQV22} and \prism~\cite{KNP11} efficiently compute such policies.
A concern, however, is the robustness against potential perturbations in the environment. 
MDPs cannot capture such uncertainty about the shape of the environment.

Multi-environment MDPs (MEMDPs)~\cite{DBLP:conf/fsttcs/RaskinS14,DBLP:conf/aips/ChatterjeeCK0R20} contain a set of MDPs, called environments, over the same state space. 
The goal in MEMDPs is to find a single policy that satisfies a given specification in \emph{all} environments.
MEMDPs are, for instance, a natural model for MDPs with unknown system dynamics, where several domain experts provide their interpretation of the dynamics~\cite{DBLP:conf/aaai/ChadesCMNSB12}. These different MDPs together form a MEMDP.
MEMDPs also arise in other domains:
The guessing of a (static) password is a natural example in security.
In robotics, a MEMDP captures unknown positions of some static obstacle. 
One can interpret MEMDPs as a (disjoint) union of MDPs in which an agent only has partial observation, i.e., every MEMDP can be cast into a linearly larger partially observable MDP (POMDP)~\cite{DBLP:journals/ai/KaelblingLC98}. 
Indeed, some famous examples for POMDPs are in fact MEMDPs, such as \emph{RockSample}~\cite{DBLP:conf/uai/SmithS05} and \emph{Hallway}~\cite{DBLP:conf/icml/LittmanCK95}.
Solving POMDPs is notoriously hard~\cite{DBLP:journals/ai/MadaniHC03}, and thus, it is worthwhile to investigate natural subclasses.

We consider \emph{almost-sure specifications} where the probability needs to be one to reach a set of target states.
In MDPs, it suffices to consider memoryless policies.
Constructing such policies can be efficiently implemented by means of a  graph-search~\cite{DBLP:books/daglib/0020348}.
For MEMDPs, 
we consider the following problem:
\begin{center}
\emph{
	Compute \emph{one} policy that almost-surely reaches the target in \emph{all} environments.
	}
\end{center}
Such a policy robustly satisfies an almost-sure specification for a set of MDPs.

\paragraph{Our approach.}
Inspired by work on POMDPs, we construct a belief-observation MDP (BOMDP)~\cite{DBLP:journals/jcss/ChatterjeeCT16} that tracks the states of the MDPs and the (support of the) belief over potential environments.
We show that a policy satisfying the almost-sure property in the BOMDP also satisfies the property in the MEMDP. 
 
Although the BOMDP is exponentially larger than the MEMDP, we exploit its particular structure to create a PSPACE algorithm to decide whether such a robust policy exists.
The essence of the algorithm is a recursive construction of a fragment of the BOMDP, restricted to a setting in which the belief-support is fixed. Such an approach is possible, as the belief in a MEMDP behaves monotonically: Once we know that we are not in a particular environment, we never lose this knowledge.
This behavior is in contrast to POMDPs, where there is no monotonic behavior in belief-supports. 
The difference is essential: Deciding almost-sure reachability in POMDPs is EXPTIME-complete~\cite{DBLP:journals/jcss/Reif84,de1999verification}.
In contrast, the problem of deciding whether a policy for almost-sure reachability in a MEMDP exists is indeed PSPACE\emph{-complete}. 
We show the hardness using a reduction from the \emph{true quantified Boolean formula problem}. 
Finally, we cannot hope to extract a policy with such an algorithm, as the smallest policy for MEMDPs may require exponential memory in the number of environments. 

The PSPACE algorithm itself recomputes many results. 
For practical purposes, we create an algorithm that iteratively explores parts of the BOMDP. 
The algorithm additionally uses the MEMDP structure to generalize the set of states from which a winning policy exists and deduce efficient heuristics for guiding the exploration. 
The combination of these ingredients leads to an efficient and competitive prototype on top of the model checker \storm.

\medskip\noindent\textbf{Related work. } We categorize related work in three areas.

\smallskip\noindent\emph{MEMDPs.}
Almost-sure reachability for MEMDPs for exactly two environments has been studied by~\cite{DBLP:conf/fsttcs/RaskinS14}. 
We extend the results to arbitrarily many environments. 
This is nontrivial: For two environments, the decision problem has a polynomial time routine~\cite{DBLP:conf/fsttcs/RaskinS14}, whereas we show that the problem is PSPACE-complete for an arbitrary number of environments.
MEMDPs and closely related models such as hidden-model MDPs, hidden-parameter MDPs, multi-model MDPs, and concurrent MDPs~\cite{DBLP:conf/aaai/ChadesCMNSB12,DBLP:conf/qest/ArmingBCKS18,DBLP:journals/iiset/SteimleKD21,DBLP:journals/mmor/BuchholzS19} have been considered for quantitative properties\footnote{Hidden-parameter MDPs are different than MEMDPs in that they assume a prior over MDPs.
However, for almost-sure properties, this difference is irrelevant.}. 
The typical approach is to consider approximative algorithms for the undecidable problem in POMDPs~\cite{DBLP:conf/aips/ChatterjeeCK0R20} or adapt reinforcement learning algorithms~\cite{DBLP:conf/nips/AzarLB13,DBLP:journals/corr/abs-2111-09794}. These approximations are not applicable to almost-sure properties.

\smallskip\noindent\emph{POMDPs.}
One can build an underlying potentially infinite belief-MDP~\cite{DBLP:journals/ai/KaelblingLC98} that corresponds to the POMDP -- using model checkers~\cite{DBLP:journals/rts/Norman0Z17,DBLP:conf/atva/BorkJKQ20,DBLP:conf/tacas/BorkKQ22} to verify this MDP can answer the question for MEMDPs. 
For POMDPs, almost-sure reachability is decidable in exponential time~\cite{DBLP:journals/jcss/Reif84,de1999verification} via a construction similar to ours. Most qualitative properties beyond almost-sure reachability are undecidable~\cite{DBLP:journals/jacm/BaierGB12,DBLP:conf/csl/ChatterjeeCT13}.
Two dedicated algorithms that limit the search to policies with small memory requirements and employ a SAT-based approach~\cite{DBLP:conf/aaai/ChatterjeeCD16,DBLP:conf/cav/JungesJS21} to this NP-hard problem~\cite{de1999verification} are implemented in~\storm. We use them as baselines.

\smallskip\noindent\emph{Robust models.}
The high-level representation of MEMDPs is structurally similar to featured MDPs~\cite{DBLP:journals/fac/ChrszonDKB18,DBLP:conf/cav/AndriushchenkoC21} that represent sets of MDPs.
The proposed techniques are called family-based model checking and compute policies for every MDP in the family, whereas we aim to find one policy for all MDPs.
Interval MDPs~\cite{DBLP:conf/lics/JonssonL91,DBLP:journals/mor/WiesemannKR13,DBLP:conf/isola/Jaeger0BLJ20} and SGs~\cite{SGs} do not allow for dependencies between states and thus cannot model features such as various obstacle positions.
Parametric MDPs~\cite{DBLP:conf/qest/ArmingBCKS18,DBLP:conf/concur/WinklerJPK19,DBLP:journals/corr/abs-2207-06801} assume controllable uncertainty and do not consider robustness of policies.

\medskip\noindent\textbf{Contributions. }
We establish PSPACE-completeness for deciding almost-sure reachability in MEMDPs and show that the policies  may be exponentially large. 
Our iterative algorithm, which is the first specific to almost-sure reachability in MEMDPs,  builds fragments of the BOMDP. 
An empirical evaluation shows that the iterative algorithm  outperforms approaches dedicated to POMDPs. %

\section{Problem Statement}
\label{sec:problem_statement}

In this section, we provide some background and formalize the problem statement.%

For a set $X$, $\dist(X)$ denotes the set of probability distributions over $X$.
For a given distribution $d \in \dist(X)$, we denote its support as $\supp(d)$.
 For a finite set $X$, let $\unif(X)$ denote the uniform distribution. 
 $\dirac(x)$ denotes the Dirac distribution on $x\in X$.
We use short-hand notation for functions and distributions, $f = [ x \mapsto a, y \mapsto b ]$ means that $f(x) = a$ and $f(y) = b$.
We write $\powerset{X}$ for the powerset of $X$. %
For $n \in \mathbb{N}$ we write $[n] = \{ i \in \mathbb{N} \mid 1 \le i \le n \}$.

\newcommand{\pathbelief}{\mathcal{B}}
\newcommand{\stateset}{S}
\newcommand{\actionset}{A}
\newcommand{\transitionfuncs}{\{p_i\}_{i \in I}}
\newcommand{\initdist}{\iota_{\text{init}}}
\begin{definition}[MDP]
  A \emph{Markov Decision Process} is a tuple $\mdp = \tuple{\stateset, \actionset, \initdist, p}$ where $\stateset$ is the finite set of states, $\actionset$ is the finite set of actions, $\initdist \in \dist(\stateset)$ is the initial state distribution, and $p\colon \stateset \times \actionset \to \dist(\stateset)$ is the transition function.
\end{definition}
The transition function is total, that is, for notational convenience MDPs are \emph{input-enabled}.
This requirement does not affect the generality of our results.
A \emph{path} of an MDP is a sequence
$\pi = s_{0}a_{0}s_1a_{1}\ldots s_{n}$ such that $\initdist(s_{0}) > 0$ and $p(s_{i},a_{i})(s_{i+1}) > 0$ for all $0 \le i < n$.
The last state of $\pi$ is $\last(\pi)=s_n$.
The set of all finite paths is $\Path$
  and $\Path(S')$
denotes the paths starting in a state from $S'\subseteq \stateset$.
The set of \emph{reachable states} from $S'$ is $\reachable{S'}$.
If $S' = \supp(\initdist)$ we just call them \emph{the} reachable states.
The MDP restricted to reachable states from a distribution $d \in \dist(S)$ is $\onlyreachable{\mdp}{d}$, where $d$ is the new initial distribution.
A state $s\in S$ is \emph{absorbing} if $\reachable{\{s\}}=\{s\}$.
An MDP is acyclic, if each state is absorbing or not reachable from its successor states.

Action choices are resolved by a \emph{policy} $\sigma\colon \Path \to \dist(\actionset)$ that maps paths to distributions over actions.
A policy of the form $\sigma \colon \stateset \to \dist(\actionset)$ is called \emph{memoryless}, \emph{deterministic} if we have $\sigma\colon \Path \to \actionset$; and, \emph{memoryless deterministic} for $\sigma \colon \stateset \to \actionset$.
For an MDP $\mdp$, we denote the probability of a policy $\sigma$ reaching some target set $T \subseteq \stateset$ starting in state $s$ as ${\Pr}_{\mdp}(s \rightarrow T \mid \sigma)$. 
More precisely, ${\Pr}_{\mdp}(s \rightarrow T \mid \sigma)$ denotes the probability of all paths from $s$ reaching $T$ under $\sigma$. 
We use ${\Pr}_{\mdp}(T \mid \sigma)$ if $s$ is distributed according to $\initdist$.
\begin{definition}[MEMDP]
  A \emph{Multiple Environment MDP} is a tuple $\memdp = \tuple{\stateset, \actionset, \initdist, \transitionfuncs}$ with $\stateset, \actionset, \initdist$ as for MDPs, and $\transitionfuncs$ is a \emph{set of transition functions}, where $I$ is a finite set of \emph{environment} indices.
\end{definition}
Intuitively, MEMDPs form sets of MDPs (environments) that share states and actions, but differ in the transition probabilities.
For MEMDP $\memdp$ with index set $I$ and a set $I' \subseteq I$, we define the restriction of environments as the MEMDP $\restrictenv{\memdp}{I'} = \tuple{\stateset, \actionset, \initdist, \{p_{i}\}_{i \in I'}}$.
Given an environment $i \in I$, we denote its corresponding MDP as $\memdp_{i}=\tuple{S, A, \initdist, p_{i}}$.
A MEMDP with only one environment is an MDP.
Paths and policies are defined on the states and actions of MEMDPs and do not differ from MDP policies. A MEMDP is acyclic, if each MDP is acyclic.
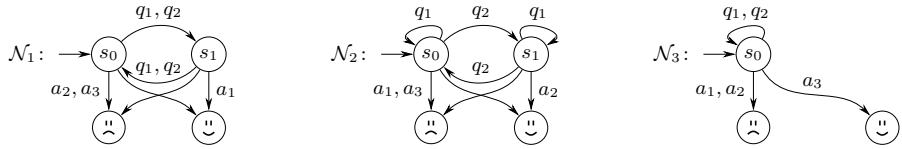
\begin{figure}[t]
  \centering
  \resizebox{\linewidth}{!}{
    \begin{tikzpicture}
  \begin{scope}
  \node (si) {$\memdp_{1}\colon$};
  \node[circle,draw,inner sep=2pt] (s0) [right=of si, xshift=-0.5cm] {$s_{0}$};
  \node[circle,draw,inner sep=2pt] (s1) [right=of s0] {$s_{1}$};

  \draw[->] (si) -- (s0);

  \draw[->] (s0) to[out=45, in=135] node[above]{$q_{1}, q_{2}$} (s1);
  \draw[->] (s1) to[out=-135, in=-45] node[above]{$q_{1}, q_{2}$} (s0);

  \node[align=center,circle,draw,inner sep=2pt] (frowney) [below=0.6cm of s0] {$\bad$}; 
  \node[align=center,circle,draw,inner sep=2pt] (smiley) [below=0.6cm of s1] {$\good$};	

  \draw[->] (s0) to[out=-60, in=135] (smiley);%
  \draw[->] (s1) to node[right]{$a_{1}$} (smiley);

  \draw[->] (s0) to node[left]{$a_{2},a_{3}$} (frowney);
  \draw[->] (s1) to[out=-120, in=45] (frowney);
  \end{scope}

  \begin{scope}[xshift=5cm]
  \node (si) {$\memdp_{2}\colon$};
  \node[circle,draw,inner sep=2pt] (s0) [right=of si, xshift=-0.5cm] {$s_{0}$};
  \node[circle,draw,inner sep=2pt] (s1) [right=of s0] {$s_{1}$};

  \draw[->] (si) -- (s0);
  \draw[->] (s0) to[out=45, in=135] node[above]{$q_{2}$} (s1);
  \draw[->] (s1) to[out=-135, in=-45] node[above]{$q_{2}$} (s0);

  \draw[->] (s0) to[out=60, in=150, looseness=3] node[above]{$q_{1}$} (s0);
  \draw[->] (s1) to[out=120, in=30, looseness=3] node[above]{$q_{1}$} (s1);

  \node[align=center,circle,draw,inner sep=2pt] (frowney) [below=0.6cm of s0] {$\bad$}; 
  \node[align=center,circle,draw,inner sep=2pt] (smiley) [below=0.6cm of s1] {$\good$}; 

  \draw[->] (s0) to[out=-60, in=135] (smiley);%
  \draw[->] (s1) to node[right]{$a_{2}$} (smiley);

  \draw[->] (s0) to node[left]{$a_{1},a_{3}$} (frowney);
  \draw[->] (s1) to[out=-120, in=45] (frowney);
  \end{scope}

  \begin{scope}[xshift=10cm]
  \node (si) {$\memdp_{3}\colon$};
  \node[circle,draw,inner sep=2pt] (s0) [right=of si, xshift=-0.5cm] {$s_{0}$};

  \draw[->] (si) -- (s0);
  \draw[->] (s0) to[out=60, in=150, looseness=3] node[above]{$q_{1}, q_{2}$} (s0);

  \node[align=center,circle,draw,inner sep=2pt] (frowney) [below=0.6cm of s0] {$\bad$}; 
  \node[align=center,circle,draw,inner sep=2pt] (smiley) [below=0.6cm of s0, xshift=2.0cm] {$\good$};

  \draw[->] (s0) to[out=-60, in=135] node[above]{$a_{3}$} (smiley);
  \draw[->] (s0) to node[left]{$a_{1},a_{2}$} (frowney);
  \end{scope}
\end{tikzpicture}
  }
  \caption{Example MEMDP}
  \label{fig:example_memdp}
\end{figure}

\begin{example}
\label{ex:runningmemdp}
  \cref{fig:example_memdp} shows an MEMDP with three environments $\memdp_i$.
  An agent can ask two questions, $q_{1}$ and $q_{2}$. The response is either  `switch' ($s_{1} \leftrightarrow s_{2}$), or `stay' (loop).
In $\memdp_1$, the response to $q_1$ and $q_2$ is to switch. 
  In $\memdp_2$, the response to $q_1$ is stay, and to $q_2$ is switch. 
  The agent can guess the environment using $a_{1}, a_{2}, a_{3}$. 
  Guessing $a_i$ leads to the target $\{\good\}$ only in environment $i$.
  Thus, an agent must deduce the environment via $q_{1}, q_{2}$ to surely reach the target.
  \exampleend
\end{example}

\begin{definition}[Almost-Sure Reachability]
  An almost-sure reachability property is defined by a set $T \subseteq S$ of target states. 
  A policy $\sigma$ satisfies the property $T$ for MEMDP $\memdp = \tuple{S, A, \initdist, \{p_{i}\}_{i \in I}}$ iff $\forall i \in I\colon {\Pr}_{\memdp_{i}}(T \mid \sigma)=1$.
\end{definition}
In other words, a policy $\sigma$ satisfies an almost-sure reachability property $T$, called \emph{winning}, if and only if the probability of reaching $T$ \emph{within each MDP} is one. 
By extension, a state $s \in S$ is winning if there exists a winning policy when starting in state $s$. 
Policies and states that are not winning are losing.

\noindent
We will now define both the decision and policy problem:
\begin{mdframed}
  Given a MEMDP $\memdp$ and an almost-sure reachability property $T$.\\ The \textbf{Decision Problem} asks to decide if a policy  exists that satisfies $T$. \\ The \textbf{Policy Problem} asks to compute such a policy, if it exists.
\end{mdframed}
In \cref{sec:complexity} we discuss the computational complexity of the decision problem. Following up, in \cref{sec:algorithm} we present our algorithm for solving the policy problem. Details on its implementation and evaluation will be presented in \cref{sec:implementation_experiments}.

\newcommand{\observations}{Z}
\newcommand{\obsfun}{O}
\newcommand{\obs}{z}
\newcommand{\pomdp}{\mathcal{G}}
\newcommand{\union}[1]{{#1}_{\sqcup}}
\newcommand{\custombomdp}[1]{\mathcal{G}_{#1}}
\newcommand{\bomdp}{\custombomdp{\memdp}}
\newcommand{\localbomdp}[1]{\textsc{Loc}\mathcal{G}({#1})}
\newcommand{\outdated}[1]{\textcolor{orange}{#1}}
\section{A Reduction To Belief-Observation MDPs}
In this section, we reduce the policy problem, and thus also the decision problem, to finding a policy in an exponentially larger belief-observation MDP. 
This reduction is an elementary building block for the construction of our PSPACE algorithm and the practical implementation.
Additional information such as proofs for statements throughout the paper are available in the technical report~\cite{technicalreport}.

\subsection{Interpretation of MEMDPs as Partially Observable MDPs} \label{sec:pomdps}
\begin{definition}[POMDP]
	A partially observable MDP (POMDP) is a tuple $\tuple{\mdp, \observations, \obsfun}$ %
	with an MDP $\mdp = \tuple{S, A, \initdist, p}$,
	  a set $\observations$ of \emph{observations}, and an \emph{observation function}
	 $\obsfun\colon S \rightarrow \observations$. 
\end{definition}
A POMDP is an MDP where states are labelled with observations. 
We lift $\obsfun$ to paths and use $\obsfun(\pi) = \obsfun(s_1) a_{1} \obsfun(s_{2}) \hdots \obsfun(s_n)$. We use observation-based policies $\sigma$, i.e., policies s.t.\ for $\pi, \pi' \in \Path$,
$
 \obsfun(\pi) = \obsfun(\pi')  \text{ implies } \sigma(\pi) = \sigma(\pi'). 
 $
A MEMDP can be cast into a POMDP that is made up as the disjoint union: 
\begin{definition}[Union-POMDP]
\label{def:unionpomdp}
  Given an MEMDP $\memdp = \tuple{\stateset, \actionset, \initdist, \transitionfuncs}$ we define its \emph{union-POMDP} $\union{\memdp} = \tuple{\tuple{\stateset', \actionset, \initdist', p'}, \observations, \obsfun}$, with states $\stateset' = S \times I$, initial distribution $\initdist'(\tuple{s, i}) = \initdist(s) \cdot |I|^{-1}$, transitions  $p'(\tuple{s, i}, a)(\tuple{s', i}) = p_{i}(s, a)(s')$,
    observations $\observations = S$, and observation function $\obsfun(\tuple{s, i}) = s$.
\end{definition}%
A policy may observe the state $s$ but not in which MDP we are. This forces any observation-based policy to take the same choice in all environments.
\begin{restatable}{lemma}{unionpomdp} \label{thm:unionpomdp}
	Given MEMDP $\memdp$, there exists a winning policy iff there exists an observation-based policy $\sigma$ such that ${\Pr}_{\union{\memdp}}(T \mid \sigma)=1$.
\end{restatable}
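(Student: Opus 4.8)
The plan is to establish a bijective correspondence between policies for the MEMDP $\memdp$ and observation-based policies for the union-POMDP $\union{\memdp}$, and then show that this correspondence preserves the almost-sure reachability property in each environment. Recall that paths and policies in $\memdp$ live over $S$ and $A$, while paths in $\union{\memdp}$ live over $S' = S \times I$; but since $\obsfun(\tuple{s,i}) = s$, the observation of a path $\tuple{s_0,i}a_0\tuple{s_1,i}a_1\cdots$ in $\union{\memdp}$ is exactly $s_0 a_0 s_1 a_1 \cdots$, which is a path in $\memdp$. So I would define, for an observation-based POMDP policy $\sigma'$, the MEMDP policy $\sigma(\pi) := \sigma'(\pi')$ for any $\pi'$ with $\obsfun(\pi') = \pi$ (well-defined because $\sigma'$ is observation-based), and conversely, for a MEMDP policy $\sigma$, define $\sigma'(\pi') := \sigma(\obsfun(\pi'))$, which is trivially observation-based. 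This gives the desired correspondence.

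Next I would relate the reachability probabilities. The key observation is that $\union{\memdp}$ is a disjoint union: once the $I$-component is fixed at $\tuple{s,i}$, every transition $p'(\tuple{s,i},a)(\tuple{s',i'})$ is zero unless $i' = i$, and equals $p_i(s,a)(s')$. Hence a path in $\union{\memdp}$ starting from the support of $\initdist'$ stays within a single copy $\{\tuple{\cdot,i}\}$, and the sub-MDP on that copy is isomorphic (via the projection $\tuple{s,i}\mapsto s$) to $\memdp_i$. Under the policy correspondence, the stochastic process on copy $i$ induced by $\sigma'$ matches the process on $\memdp_i$ induced by $\sigma$. Since $\initdist'(\tuple{s,i}) = \initdist(s)\cdot|I|^{-1}$, we get
\[
{\Pr}_{\union{\memdp}}(T \mid \sigma') \;=\; \sum_{i \in I} |I|^{-1}\cdot {\Pr}_{\memdp_i}(T \mid \sigma),
\]
where on the right $T$ is read as the target in $\memdp_i$ (the target set $T\subseteq S$ lifts to $T\times I$ in $\union{\memdp}$, and a path reaches $T\times I$ iff its projection reaches $T$). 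A convex combination with strictly positive weights equals $1$ iff every summand equals $1$, so ${\Pr}_{\union{\memdp}}(T\mid\sigma') = 1$ iff ${\Pr}_{\memdp_i}(T\mid\sigma) = 1$ for all $i\in I$, which is precisely the definition of $\sigma$ being winning for $\memdp$. Combining with the policy correspondence in both directions yields the "iff".

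The main obstacle is making the path-level and measure-level identifications fully rigorous: one must check that $\sigma$ defined from an observation-based $\sigma'$ is genuinely well-defined on all of $\Path$ (including paths of $\memdp$ that are realizable in some but not all environments — here the definition still makes sense because we only need $\sigma(\pi)$ for $\pi$ in $\memdp$, and every such $\pi$ has at least one preimage $\pi'$ whenever it has positive probability under the relevant environment), and that the induced probability measures on infinite paths agree, which is a standard cylinder-set argument but needs the disjointness of the copies to push through cleanly. Once the per-environment isomorphism and the convex-combination argument are in place, the rest is bookkeeping.
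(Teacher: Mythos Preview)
Your proposal is correct and follows essentially the same approach as the paper: both exploit that the observation set of $\union{\memdp}$ equals the state set $S$ of $\memdp$ to set up a one-to-one correspondence between MEMDP policies and observation-based POMDP policies, and both use the disjoint-union structure to argue that winning in $\union{\memdp}$ is equivalent to winning in every environment. Your convex-combination identity is slightly more explicit than the paper's per-initial-state argument, but the content is the same.
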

\noindent
The statement follows as, first, any observation-based policy of the POMDP can be applied to the MEMDP, second, vice versa, any MEMDP policy is observation-based, and third, the induced MCs under these policies are isomorphic.

\newcommand{\update}[4]{\mathsf{Up}(#1, #2, #3, #4)}

\subsection{Belief-observation MDPs}
For POMDPs, memoryless policies are not sufficient, which makes computing policies intricate. We therefore add the information that the history --- i.e., the path until some point --- contains.
In MEMDPs, this information is the \emph{(environment-)belief (support)} $J \subseteq I$, as the set of environments that are consistent with a path in the MEMDP. Given a belief $J \subseteq I$ and a state-action-state transition $s \xrightarrow{a} s'$, then we define $\update{J}{s}{a}{s'} = \{ i \in J \mid p_{i}(s, a,s') > 0\}$, i.e., the subset of environments in which the transition exists.
For a path $\pi \cdot s \in \Path$, we define its corresponding belief $\pathbelief(\pi) \subseteq I$ recursively as:\begin{align*}
  \pathbelief(s_{0}) &= I \quad\text{ and }\quad
  \pathbelief(\pi \cdot  s a s') = \update{\pathbelief(\pi \cdot s)}{s}{a}{s'}
\end{align*}
The belief in a MEMDP monotonically decreases along a path, i.e., if we know that we are not in a particular environment, this remains true indefinitely.

We aim to use a model where memoryless policies suffice. To that end, we cast MEMDPs into the exponentially larger belief-observation MDPs~\cite{DBLP:journals/jcss/ChatterjeeCT16}\footnote{This translation is notationally simpler than going via the union-POMDP.}.
\begin{definition}[BOMDP]
  For a MEMDP $\memdp = \tuple{\stateset, \actionset, \initdist, \transitionfuncs}$, we define its \emph{belief-observation MDP} (BOMDP) as a POMDP $\bomdp = \tuple{\tuple{\stateset', \actionset, \initdist', p'}, \observations, \obsfun}$ with states $\stateset' = S \times I \times \powerset{I}$, initial distribution
    $\initdist'(\tuple{s, j, I}) = \initdist(s) \cdot |I|^{-1}$, transition relation $p'(\tuple{s, j, J}, a)(\tuple{s', j, J'}) = p_{j}(s, a, s')$ with $J' = \update{J}{s}{a}{s'}$, observations $\observations = S \times \powerset{I}$, and observation function $\obsfun(\tuple{s, j, J}) = \tuple{s, J}$.
\end{definition}
Compared to the union-POMDP, BOMDPs also track the belief by updating it accordingly.
We clarify the correspondence between paths of the BOMDP and the MEMDP. 
For a path $\pi$ through the MEMDP,  we can mimic this path exactly in the MDPs $\memdp_j$ for $j \in \pathbelief(\pi)$.
As we track $\pathbelief(\pi)$ in the state, we can deduce from the BOMDP state in which environments we can be.
\begin{restatable}{lemma}{pathslemma} \label{lem:paths}
  For MEMDP $\memdp$ and the path $\tuple{s_1,j,J_1}a_1\tuple{s_2,j,J_2} \hdots \tuple{s_n,j,J_n}$ of the BOMDP $\bomdp$, let $j \in J_{1}$. Then: $J_n \neq \emptyset$ and 
  the path $s_1a_1 \hdots s_n$ exists in MDP $\memdp_i$ iff $i \in J_1 \cap J_n$.
\end{restatable}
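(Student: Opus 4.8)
The plan is to prove both conjuncts by induction on the length $n$ of the BOMDP path, exploiting the fact that the third component of the BOMDP state is exactly the belief-support $\pathbelief$ of the projected path, together with the monotonicity of the belief update. First I would establish the key invariant: for the given BOMDP path, $J_k = \update{\cdots}{}{}{}$ applied iteratively means $J_k = \{ i \in J_1 \mid \text{the path } s_1 a_1 \cdots s_k \text{ exists in } \memdp_i \}$. This is the heart of the argument; once it is in hand, both claims follow. The containment $J_k \subseteq J_1$ is immediate from the definition of $\update{}{}{}{}$, which only ever removes indices.

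For the invariant itself, the base case $k=1$ is trivial: $J_1 = J_1$, and the length-zero path $s_1$ exists in every MDP (it consists of a single state, and by construction of $\bomdp$ the initial distribution only places mass on states $\tuple{s,j,I}$ with $\initdist(s)>0$, so $s_1$ is initial in each $\memdp_i$; more generally for intermediate states the property is carried by the induction). For the step, suppose $J_k = \{ i \in J_1 \mid s_1 a_1 \cdots s_k \text{ exists in } \memdp_i \}$. By the BOMDP transition relation we have $J_{k+1} = \update{J_k}{s_k}{a_k}{s_{k+1}} = \{ i \in J_k \mid p_i(s_k,a_k,s_{k+1}) > 0 \}$. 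An index $i$ lies in this set iff $i \in J_k$ — i.e.\ $s_1 a_1 \cdots s_k$ exists in $\memdp_i$ — and additionally $p_i(s_k,a_k,s_{k+1})>0$, which is precisely the condition that the one-step extension $s_1 a_1 \cdots s_k a_k s_{k+1}$ exists in $\memdp_i$. This closes the induction and gives $J_n = \{ i \in J_1 \mid s_1 a_1 \cdots s_n \text{ exists in } \memdp_i\}$, which is exactly the claimed iff-characterisation since $i \in J_1 \cap J_n \iff i \in J_n$ (as $J_n \subseteq J_1$).

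It remains to show $J_n \neq \emptyset$. Here I use the hypothesis $j \in J_1$ together with the fact that $\bomdp$ carries the same second component $j$ along the whole path and that its transition probabilities in state $\tuple{\cdot,j,\cdot}$ are given by $p_j$. Since the BOMDP path is a genuine path, each step has $p'(\tuple{s_k,j,J_k},a_k)(\tuple{s_{k+1},j,J_{k+1}}) = p_j(s_k,a_k,s_{k+1}) > 0$, so by the characterisation above $j \in J_n$; hence $J_n$ is nonempty. (Equivalently, one observes $j$ is never removed by any $\update{}{}{}{}$ along this particular path.)

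The only mildly delicate point — the step I expect to need the most care — is the handling of intermediate states in the base/initial case: the lemma is stated for an arbitrary path of $\bomdp$, not necessarily one starting at an initial state, so I should phrase the induction so that "the length-zero path $s_1$ exists in $\memdp_i$ for all $i \in J_1$" is simply a definitional convention (a single state with no transitions is a path in every MDP over the same state set), rather than appealing to $\initdist'$. With that convention fixed, everything else is the routine unfolding of $\update{}{}{}{}$ described above, and no further subtlety arises.
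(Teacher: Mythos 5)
Your proof is correct and takes essentially the same route as the paper's: both arguments amount to unfolding the belief-update operator inductively along the path to characterise $J_n$ as the environments of $J_1$ in which the projected path exists (you package this as a single invariant, the paper as two separate inductions for the two directions). You additionally spell out the $J_n \neq \emptyset$ claim via $j$ surviving every update, which the paper's proof leaves implicit; that is a welcome bit of extra care, not a divergence.
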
 \noindent
Consequently, the belief of a path can be uniquely determined by the observation of the last state reached, hence the name belief-observation MDPs.
\begin{restatable}{lemma}{beliefobservation} \label{lem:beliefobservation}
For every pair of  paths $\pi, \pi'$ in a BOMDP, we have:
  \[ \obsfun(\last(\pi)) = \obsfun(\last(\pi'))\quad\text{ implies }\quad \pathbelief(\pi) = \pathbelief(\pi'). \]
\end{restatable}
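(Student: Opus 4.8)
The plan is to reduce the claim about two arbitrary BOMDP paths to a statement about a single path, using \Cref{lem:paths} as the main engine. The key observation is that the belief $\pathbelief(\pi)$ of a BOMDP path $\pi$ ending in a state $\tuple{s,j,J_n}$ is exactly recorded in the last component $J_n$; once that is established, if $\obsfun(\last(\pi)) = \obsfun(\last(\pi'))$ then by definition of $\obsfun$ the two paths end in states sharing the same $\powerset{I}$-component, so their beliefs coincide.

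First I would make precise the auxiliary fact that for any BOMDP path $\pi = \tuple{s_1,j,J_1}a_1 \hdots \tuple{s_n,j,J_n}$ starting from an initial state (so $J_1 = I$ and $j \in I = J_1$), we have $\pathbelief(\pi) = J_n$. This is where \Cref{lem:paths} enters: that lemma states that, for $j \in J_1$, the projected path $s_1 a_1 \hdots s_n$ exists in $\memdp_i$ iff $i \in J_1 \cap J_n$. Since $\pathbelief(\pi)$ is by definition the set of environments consistent with the projected path $s_1 a_1 \hdots s_n$ — i.e. exactly the set of $i$ for which that path exists in $\memdp_i$ — and since $J_1 = I$ makes $J_1 \cap J_n = J_n$, we get $\pathbelief(\pi) = J_n$. (Strictly, one should also confirm that $\pathbelief$ as defined by the recursion $\pathbelief(s_0) = I$, $\pathbelief(\pi \cdot s a s') = \update{\pathbelief(\pi\cdot s)}{s}{a}{s'}$ really does compute the set of environments in which the projected path exists; this follows by a routine induction on path length directly from the definition of $\mathsf{Up}$, and can be folded into the proof of \Cref{lem:paths} or stated as an immediate consequence of its statement.)

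Given this, the lemma is immediate: let $\pi, \pi'$ be BOMDP paths with $\obsfun(\last(\pi)) = \obsfun(\last(\pi'))$. Writing $\last(\pi) = \tuple{s, j, J}$ and $\last(\pi') = \tuple{s', j', J'}$, the definition $\obsfun(\tuple{s,j,J}) = \tuple{s,J}$ gives $s = s'$ and, crucially, $J = J'$. By the auxiliary fact, $\pathbelief(\pi) = J$ and $\pathbelief(\pi') = J'$, hence $\pathbelief(\pi) = \pathbelief(\pi')$.

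The only real subtlety — and the step I would be most careful about — is the alignment between the two distinct notions of belief in play: the path-indexed $\pathbelief(\cdot)$ defined recursively on MEMDP paths, and the bookkeeping component $J_n$ carried inside BOMDP states. The transition relation of the BOMDP updates this component using precisely the same $\mathsf{Up}$ operator that defines $\pathbelief$, so an induction on path length shows the BOMDP-state component along $\pi$ agrees step-by-step with $\pathbelief$ of the corresponding MEMDP path prefix, provided the path starts at an initial state where both equal $I$. If one instead wants the statement for paths not necessarily rooted at the initial distribution, the equality becomes $\pathbelief(\pi) = J_1 \cap J_n$ via the full strength of \Cref{lem:paths}, but since $\pathbelief$ is only ever defined for paths from $s_0$ the rooted case is all that is needed. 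Everything else is unwinding definitions.
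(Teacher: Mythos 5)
Your proposal is correct and matches the paper's own argument: both rest on the observation that the BOMDP transition relation updates the $J$-component with exactly the $\mathsf{Up}$ operator defining $\pathbelief$, so a step-by-step induction gives $\pathbelief(\pi) = J_n$ for the last state $\tuple{s,j,J_n}$, and equal observations then force equal beliefs. The detour through \cref{lem:paths} is dispensable — the direct induction you describe in your final paragraph is precisely the paper's proof — but it is not wrong.
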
\noindent
For notation, we define $S_J = \{ \tuple{s, j, J} \mid j \in J, s \in S \}$, and analogously write $\observations_J =  \{ \tuple{s, J} \mid s \in S \}$. We lift the target states $T$ to states in the BOMDP: $T_{\bomdp} = \{ \tuple{s, j, J} \mid s \in T, J \subseteq I, j \in J\}$ and define target observations $T_\observations = \obsfun(T_{\bomdp})$.

\begin{definition}[Winning in a BOMDP]
	Let $\bomdp$ be a BOMDP with target observations $T_\observations$.
	An observation-based policy $\sigma$ is winning from some observation $\obs \in \observations$, 
	if for all $s \in \obsfun^{-1}(\obs)$ 
	it holds that $\Pr_{\bomdp}( s \rightarrow \obsfun^{-1}(T_\observations) \mid \sigma) = 1$.
\end{definition} 
Furthermore, a policy $\sigma$ is \emph{winning} if it is winning for the initial distribution $\initdist$.
An \emph{observation $\obs$ is winning} if there exists a winning policy for $\obs$. 
The \emph{winning region} $\win{\bomdp}{T}$ is the set of all winning observations.

\noindent Almost-sure winning in the BOMDP corresponds to winning in the MEMDP. 
\begin{restatable}{theorem}{winningmemdpbomdp} \label{thm:winning_memdp_bomdp}
There exists a winning policy for a MEMDP $\memdp$ with target states $T$  iff there exists a winning policy in the BOMDP $\bomdp$ with target states $T_{\bomdp}$.
\end{restatable}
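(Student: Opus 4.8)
The plan is to prove both directions by exhibiting, for each environment $i \in I$, a probability-preserving correspondence between the paths of $\memdp_i$ and the paths of the fragment of $\bomdp$ reachable from an initial state with middle component $i$; \cref{lem:paths,lem:beliefobservation} supply exactly the bookkeeping needed to make the two policy translations well-defined. (One could alternatively observe that $\bomdp$ is $\union{\memdp}$ augmented with a deterministic, observable belief-tracker and invoke \cref{thm:unionpomdp}, but going through \cref{lem:paths} directly is cleaner.)

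For the direction from MEMDP to BOMDP, let $\sigma$ be a winning policy for $\memdp$. Every BOMDP path $\hat\pi = \tuple{s_1,j,J_1}a_1\cdots\tuple{s_n,j,J_n}$ projects to the MEMDP path $\mathrm{pr}(\hat\pi) = s_1 a_1 \cdots s_n$; I would define $\hat\sigma(\hat\pi) = \sigma(\mathrm{pr}(\hat\pi))$. By \cref{lem:beliefobservation} each belief $J_k$ is determined by the observation sequence, so $\hat\sigma$ only depends on $\obsfun(\hat\pi)$, i.e.\ it is observation-based. Now fix an initial state $\tuple{s,i,I}$ with $\initdist(s)>0$. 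By \cref{lem:paths}, $\mathrm{pr}$ is a bijection between the BOMDP paths from $\tuple{s,i,I}$ and the paths of $\memdp_i$ from $s$; an easy induction on length shows it preserves probabilities (each BOMDP transition carries weight $p_j(s,a,s')$), and since $i \in J_k$ along such a path (\cref{lem:paths}), a BOMDP path visits $\obsfun^{-1}(T_\observations)$ iff its projection visits $T$. Hence $\Pr_{\bomdp}(\tuple{s,i,I} \to \obsfun^{-1}(T_\observations) \mid \hat\sigma) = \Pr_{\memdp_i}(s \to T \mid \sigma) = 1$ for every initial state, so $\hat\sigma$ is winning.

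For the converse, let $\hat\sigma$ be a winning policy in $\bomdp$. Since $\hat\sigma$ is observation-based, $\hat\sigma(\hat\pi)$ depends only on $\obsfun(\hat\pi)$ and in particular not on the constant middle component. For a MEMDP path $\pi = s_1 a_1 \cdots s_n$ I would define $\lift(\pi)$ to be the BOMDP path annotating each $s_k$ with $\tuple{s_k, j, \pathbelief(s_1 a_1 \cdots s_k)}$ for an arbitrary $j \in \pathbelief(\pi)$: this is a genuine path of $\bomdp$ by the definitions of $p'$ and $\pathbelief$, and $\hat\sigma(\lift(\pi))$ is independent of the choice of $j$ by the previous remark, so $\sigma(\pi) := \hat\sigma(\lift(\pi))$ is a well-defined MEMDP policy. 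Fix $i \in I$. Because $\hat\sigma$ is winning for $\initdist'$ and $\Pr_{\bomdp}(T_\observations \mid \hat\sigma)$ is a convex combination with positive weights of the $\Pr_{\bomdp}(\tuple{s,i,I} \to \obsfun^{-1}(T_\observations) \mid \hat\sigma)$ over the $s$ with $\initdist(s)>0$, $\hat\sigma$ is winning from each such $\tuple{s,i,I}$. Applying \cref{lem:paths} again, $\lift$ is a probability-preserving bijection between the paths of $\memdp_i$ from $s$ under $\sigma$ and the paths of $\bomdp$ from $\tuple{s,i,I}$ under $\hat\sigma$, with matching target membership; therefore $\Pr_{\memdp_i}(T \mid \sigma) = 1$. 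As $i$ was arbitrary, $\sigma$ is winning for $\memdp$.

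The main obstacle I anticipate is the careful verification that the two policy translations are legitimate: that $\hat\sigma$ is observation-based in the BOMDP (which is where \cref{lem:beliefobservation} — the redundancy of the belief given the observations — is essential), and that $\lift(\pi)$ is independent of the auxiliary index $j$, so that $\sigma$ is a function on MEMDP paths rather than on BOMDP paths. Everything else reduces to the inductive claim that $\mathrm{pr}$ and $\lift$ are mutually inverse probability-preserving maps between the relevant path spaces, which is essentially the statement of \cref{lem:paths}.
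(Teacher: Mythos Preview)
Your argument is correct and self-contained, but it takes a different route from the paper. The paper does not carry out the path-bijection construction; its appendix proof simply invokes an external result (Theorem~20 of Suilen et al., 2024) for Rabin objectives and observes that almost-sure reachability reduces to a Rabin condition once targets are made absorbing. The main text sketches yet another route: observation-based \emph{memoryless} policies suffice for almost-sure reachability, and for any such policy the induced Markov chains on $\union{\memdp}$ and on $\bomdp$ are bisimilar, whence the result via \cref{thm:unionpomdp}. Your direct probability-preserving correspondence between paths of $\memdp_i$ and paths of the $i$-fibre of $\bomdp$ is more elementary and avoids both the external citation and the detour through memoryless sufficiency; the paper's approaches, in turn, generalise more readily (to Rabin objectives, or to any property for which memoryless policies suffice).

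One small remark: in the MEMDP-to-BOMDP direction you invoke \cref{lem:beliefobservation} to argue that $\hat\sigma$ is observation-based, but this is unnecessary. Since $\hat\sigma(\hat\pi)=\sigma(\mathrm{pr}(\hat\pi))$ and $\mathrm{pr}(\hat\pi)$ is already a function of the observation sequence $\obsfun(\hat\pi)$ (just drop the $J_k$ components), observation-basedness is immediate. \cref{lem:beliefobservation} is genuinely needed only in the converse direction, where it guarantees that your $\lift$ map (and hence $\sigma$) is well-defined independently of the auxiliary index~$j$---and you use it correctly there.
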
 \noindent
Intuitively, the important aspect is that for almost-sure reachability, observation-based memoryless policies are sufficient~\cite{DBLP:journals/ai/ChatterjeeCGK16}.
For any such policy, the induced Markov chains on the union-POMDP and the BOMDP are bisimilar~\cite{DBLP:journals/jcss/ChatterjeeCT16}.

BOMDPs make policy search conceptually easier.
First, as memoryless policies suffice for almost-sure reachability,
winning regions are independent of fixed policies: For policies $\sigma$ and $\sigma'$ that are winning in observation $\obs$ and $\obs'$, respectively, there must exist a policy $\hat{\sigma}$ that is winning for both $\obs$ and $\obs'$.
Second, winning regions can be determined in polynomial time in the size of the BOMDP~\cite{DBLP:journals/jcss/ChatterjeeCT16}.

\subsection{Fragments of BOMDPs}
To avoid storing the exponentially sized BOMDP, we only build  fragments: %
We may select any set of observations as \emph{frontier} observations and make the states with those observations absorbing.  We later discuss the selection of frontiers.
\begin{definition}[Sliced BOMDP]
  For a BOMDP $\bomdp = \tuple{\tuple{S, A, \initdist, p}, \observations, \obsfun}$ and a set of \emph{frontier observations} $F \subseteq \observations$, we define a BOMDP $\cut{\bomdp}{F} = \tuple{\tuple{S, A, \initdist, p'}, \observations, \obsfun}$ with:
  \vspace{-0.7em}
  \[
    \forall s \in S, a \in A\colon p'(s, a) =
    \begin{cases}
      \dirac(s) & \text{if } \obsfun(s) \in F,\\
      p(s,a) & \text{otherwise.}\\
    \end{cases}
  \]
  \vspace{-1.2em}
  \end{definition}
We exploit this sliced BOMDP to derive constraints on the set of winning states.
\begin{restatable}{lemma}{winningfrontier}\label{lem:winningfrontier}
  For every BOMDP $\bomdp$ with states $S$ and targets $T$ and for all frontier observations $F \subseteq \observations$ it holds that:\ $\win{\cut{\bomdp}{F}}{T} \subseteq \win{\bomdp}{T} \subseteq \win{\cut{\bomdp}{F}}{T \cup F}$.
\end{restatable}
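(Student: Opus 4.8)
The plan is to establish the two inclusions separately, in each case by taking a winning policy in one BOMDP and showing that either it, or a simple modification of it, is winning in the other. The key structural observation I would use is that $\cut{\bomdp}{F}$ and $\bomdp$ agree on all transitions out of states whose observation is not in $F$; they differ only in that $\cut{\bomdp}{F}$ makes the $F$-states absorbing. In particular, any finite path that does not pass through an $F$-state (except possibly at its very end) exists in $\bomdp$ iff it exists in $\cut{\bomdp}{F}$, and has the same probability under a fixed observation-based policy. I would also recall from the earlier discussion that for almost-sure reachability in BOMDPs it suffices to consider memoryless observation-based policies, and that winning regions are policy-independent in the sense stated after \cref{thm:winning_memdp_bomdp} — so I may freely combine winning policies for different observations into a single one.

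For the first inclusion, $\win{\cut{\bomdp}{F}}{T} \subseteq \win{\bomdp}{T}$: let $\sigma$ be winning for observation $\obs$ in $\cut{\bomdp}{F}$. I would argue that the same $\sigma$ is winning for $\obs$ in $\bomdp$. The point is that in $\cut{\bomdp}{F}$, reaching $T$ with probability one means in particular that the $F$-states are never reached with positive probability along a non-$T$ prefix — because an $F$-state in $\cut{\bomdp}{F}$ is absorbing and, since $F$-states are not in $\obsfun^{-1}(T_\observations)$ unless already in $T$, reaching a non-target $F$-state would trap a positive-probability set of paths, contradicting almost-sure reachability. Hence the reachable fragment under $\sigma$ from $\obs$ in $\cut{\bomdp}{F}$ uses no modified transitions, so it coincides with the reachable fragment under $\sigma$ from $\obs$ in $\bomdp$, giving $\Pr_{\bomdp}(s \to \obsfun^{-1}(T_\observations) \mid \sigma) = 1$ for all $s \in \obsfun^{-1}(\obs)$. (One subtlety: if $\obs \in F$ itself; then $\obsfun^{-1}(\obs)$ is absorbing in $\cut{\bomdp}{F}$ and winning there forces $\obsfun^{-1}(\obs) \subseteq \obsfun^{-1}(T_\observations)$, so the same observation is trivially winning in $\bomdp$. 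I would handle this as a small separate case.)

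For the second inclusion, $\win{\bomdp}{T} \subseteq \win{\cut{\bomdp}{F}}{T \cup F}$: let $\sigma$ be winning for $\obs$ in $\bomdp$; I claim $\sigma$ is winning for $\obs$ in $\cut{\bomdp}{F}$ with the enlarged target $T \cup F$. Consider any path in $\cut{\bomdp}{F}$ from a state with observation $\obs$ under $\sigma$. Either it hits an $F$-state — then it has reached the enlarged target — or it never does, in which case it behaves exactly as in $\bomdp$ and, since $\sigma$ is winning there, reaches $T$ with the appropriate conditional probability. Formally I would decompose the probability mass: the set of $\sigma$-paths from $\obs$ that avoid $F$ forever is, by the path-correspondence above, in measure-preserving bijection with the corresponding $\sigma$-paths in $\bomdp$, all of which reach $T$ almost surely; the remaining mass hits $F$ and is absorbed in $F \subseteq T \cup F$. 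Summing gives probability one of reaching $\obsfun^{-1}((T \cup F)_\observations)$, i.e. $\obs \in \win{\cut{\bomdp}{F}}{T \cup F}$.

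The main obstacle is the measure-theoretic bookkeeping in the second inclusion: one must be careful that "avoiding $F$ forever" and "eventually hitting $F$" partition the path space, that the first event's restriction to $\cut{\bomdp}{F}$ and to $\bomdp$ really are measure-isomorphic (cylinder sets map to cylinder sets with equal probabilities, which follows because $\sigma$ is observation-based and the transition functions agree off $F$), and that conditioning does not cause trouble when the $F$-avoiding set has probability zero. None of this is deep, but it is the step where a sloppy argument would hide a gap, so I would state the path-correspondence as an explicit sublemma and invoke it in both directions. The first inclusion's only delicate point is the claim that an almost-sure winning policy in $\cut{\bomdp}{F}$ cannot route positive probability into a non-target $F$-state, which I would justify by the standard fact that from an absorbing non-target state the target is unreachable.
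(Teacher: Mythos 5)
Your proposal is correct and takes essentially the same route as the paper's proof: both inclusions are handled by observing that a winning policy in $\cut{\bomdp}{F}$ cannot route positive probability into a (non-target) absorbing frontier state, and that a winning policy in $\bomdp$ either avoids $F$ and reaches $T$ or hits $F$, which lies in the enlarged target. Your version is somewhat more careful about the path-measure correspondence and the edge case $\obs \in F$, but the underlying argument is identical.
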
 \noindent
Making (non-target) observations absorbing extends the set of losing observations, while adding target states extends the set of winning observations.

\section{Computational Complexity}
\label{sec:complexity}
The BOMDP $\bomdp$ above yields an exponential time \emph{and space} algorithm via  \cref{thm:winning_memdp_bomdp}.
We can avoid the exponential memory requirement.
This section shows the PSPACE-completeness of deciding whether a winning policy exists.
\begin{theorem}
  The almost-sure reachability decision problem is PSPACE-complete.
\end{theorem}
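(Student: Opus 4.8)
The proof has two parts. For membership in PSPACE the plan is to turn the exponential-time-and-space procedure behind \cref{thm:winning_memdp_bomdp} into a polynomial-space one, exploiting that the belief shrinks monotonically along every path. I would compute the winning observations of $\bomdp$ level by level on the belief lattice, via a depth-first recursion $\textsc{Win}(J)$ returning $\win{\bomdp}{T}\cap\observations_J$ for a belief $J\subseteq I$. The call for $J$ builds only the fragment of $\bomdp$ on the states $S_J$ together with the states one step away whose belief is a proper subset of $J$; since every such belief equals $\update{J}{s}{a}{s'}$ for some $s,a,s'$, there are only polynomially many of them, so this fragment has polynomial size --- it is the reachable part of a sliced BOMDP $\cut{\bomdp}{F}$ with those smaller-belief observations as frontier $F$ (cf.\ \cref{lem:winningfrontier}). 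Calling $\textsc{Win}$ recursively on each frontier belief and then declaring the winning frontier observations to be additional targets and the losing ones to be sinks, the winning observations of this polynomial-size BOMDP are computable in polynomial time. Since the belief strictly decreases along recursive calls, the recursion depth is at most $|I|$, so a depth-first traversal stores only $|I|$ polynomial-size fragments at a time, yielding a polynomial-space (exponential-time) algorithm; the decision problem is answered by checking $\tuple{s,I}\in\textsc{Win}(I)$ for all $s\in\supp(\initdist)$, which is sound because memoryless policies suffice and winning regions are policy-independent.

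For PSPACE-hardness I would reduce from TQBF. Given $\Phi=Q_1x_1\cdots Q_nx_n\,\varphi$ with $\varphi=\bigwedge_{j=1}^m C_j$ in CNF (any clause containing a variable in both polarities is a tautology and is dropped), I construct a polynomial MEMDP $\memdp_\Phi$ with one environment $D_j$ per clause, a ``round'' state $r_i$ per variable, an absorbing target $\textsf{goal}$ and an absorbing sink. In an existential round $i$ the policy picks the value of $x_i$ by its action at $r_i$: every $D_j$ whose clause thereby becomes satisfied moves to $\textsf{goal}$ (permanently leaving the belief), all others move to $r_{i+1}$. In a universal round $i$, $r_i$ has two successors $t_i,f_i$; each still-active $D_j$ moves to $t_i$ unless $x_i\in C_j$ and to $f_i$ unless $\neg x_i\in C_j$ (hence to both with probability $\tfrac12$ if $C_j$ mentions neither), and from $t_i,f_i$ play resumes at $r_{i+1}$; after round $n$ every remaining $D_j$ moves to the sink. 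Then the belief at $r_i$ is exactly $\{D_j\mid C_j\text{ not yet satisfied by the partial assignment}\}$, environment $D_j$ explores precisely the universal assignments on which $C_j$ is not made true by a universal literal, and $D_j$ reaches $\textsf{goal}$ iff $C_j$ eventually gets satisfied. Because for a CNF the $\exists$-player's optimal move at round $i$ depends only on $i$ and the set of currently unsatisfied clauses --- exactly the information $(r_i,\text{belief})$ visible to an observation-based policy, cf.\ \cref{thm:unionpomdp,thm:winning_memdp_bomdp} --- one obtains that a winning policy for $\memdp_\Phi$ exists iff $\Phi$ is true. Since $\memdp_\Phi$ has polynomial size, this gives PSPACE-hardness, and with the first part PSPACE-completeness.

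I expect the crux to lie in the correctness of the belief-level decomposition underlying the membership algorithm: one must prove that the winning region genuinely factors through the belief lattice, which amounts to strengthening \cref{lem:winningfrontier} to an equality once the winning status of the frontier is known, and to gluing the winning sub-policies for the various sub-beliefs into a single observation-based policy of $\bomdp$. The gluing is sound precisely because the belief is part of the observation (\cref{lem:beliefobservation}), so the sub-policies for distinct belief levels act on disjoint observations and cannot interfere, and because a common winning policy exists for any set of winning observations when memoryless policies suffice. A secondary difficulty is the equivalence argument in the hardness proof --- in particular verifying that $D_j$ reaches $\textsf{goal}$ exactly when $C_j$ holds for all universal assignments --- which rests on the observation that, on every branch $D_j$ actually explores, $C_j$ can only be satisfied by an existential literal.
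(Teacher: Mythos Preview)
Your proposal is correct and follows essentially the same approach as the paper. For membership you describe precisely the paper's recursion (Algorithm~1 and Lemmas~\ref{lem:winninglocally}, \ref{lem:locbsgsize}): build the local, polynomial-size fragment of $\bomdp$ on $S_J$, recurse on the finitely many successor beliefs, feed the recursively computed winning status of the frontier back as extra targets, and bound the recursion depth by $|I|$.

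The hardness argument is also a TQBF reduction with one environment per clause, as in the paper, but you realise the universal rounds differently. The paper makes the universal choice a $\tfrac12/\tfrac12$ coin that is \emph{identical in all environments}, and then lets each environment branch to the target $W$ if the chosen value happens to satisfy its literal. You instead let each environment \emph{deterministically} take the branch that falsifies its own universal literal (or go $\tfrac12/\tfrac12$ when it has none), so that $D_j$ explores exactly those universal assignments on which $C_j$ is not already satisfied by a universal literal. Both encodings are sound; yours is arguably leaner because environments leave the belief only at existential rounds, which gives the clean invariant ``belief $=$ unsatisfied clauses'' and makes the bijection with QBF positions transparent. A small remark: the observation that an optimal $\exists$-move depends only on the round and the set of unsatisfied clauses, while true, is not actually needed for either direction of the equivalence --- the history already encodes the universal assignment via the $t_i/f_i$ states, so a direct translation between winning MEMDP policies and winning QBF strategies suffices, just as in the paper's proof.
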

 The result follows from Lemmas~\ref{lem:decision_pspace_hard} and \ref{lem:pspacealg} below. In \cref{sec:policy_problem}, we show that representing the winning policy itself may however require exponential space.

\subsection{Deciding Almost-Sure Winning for MEMDPs in PSPACE}
\label{sec:decision_problem}

We develop an algorithm with a polynomial memory footprint. 
The algorithm exploits locality of cyclic behavior in the BOMDP, as formalized by  an acyclic \emph{environment graph} and \emph{local BOMDPs} that match the nodes in the environment graph.
The algorithm recurses on the environment graph while memorizing results from polynomially many local BOMDPs.

\subsubsection{The graph-structure of BOMDPs.}
First, along a path of the MEMDP, we will only gain information and are thus able to rule out certain environments~\cite{DBLP:conf/aips/ChatterjeeCK0R20}.
Due to the monotonicity of the update operator, we have for any BOMDP that $\tuple{s,j,J} \in \reachable{\tuple{s', j, J'}}$ implies $J \subseteq J'$.
We define a graph over environment sets that describes how the belief-support can update over a run.
\begin{figure}[t]
\centering
\begin{tikzpicture}
	\node[rectangle,draw] (111) {\scriptsize $\{1,2,3\}$};
	\node[rectangle,draw] (011) at (2,-0.66) {\scriptsize $\{2,3\}$};
	\node[rectangle,draw] (110) at (2,0.66) {\scriptsize $\{1,2\}$};
	\node[rectangle,draw] (100) at (4,0.66) {\scriptsize $\{1\}$};
	\node[rectangle,draw] (010) at (4,0) {\scriptsize $\{2\}$};
	\node[rectangle,draw] (001) at (4,-0.66) {\scriptsize $\{3\}$};
	\node[rectangle,draw] (101) at (6,0) {\scriptsize $\{1,3\}$};

	\draw[->] (111) -- (011);
	\draw[->] (111) -- (110);
	\draw[->] (111) -- (100);
	\draw[->] (111) -- (010);
	\draw[->] (111) -- (001);

	\draw[->] (110) -- (100);
	\draw[->] (110) -- (010);
	\draw[->] (011) -- (010);
	\draw[->] (011) -- (001);

	\draw[->] (101) -- (100);
	\draw[->] (101) -- (001);
\end{tikzpicture}
\caption{The environment graph for our running example.}
\label{fig:envgraphex}
\end{figure}
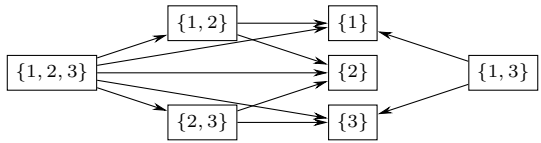%
\begin{definition}[Environment graph]\label{def:envgraph}
Let $\memdp$ be a MEMDP and $p$ the transition function of $\bomdp$.
The \emph{environment graph} $\egraph=(\egraphV,\egraphE)$ for $\memdp$ is a directed graph with vertices $\egraphV=\powerset{I}$ and edges 
\[\egraphE = \{ \tuple{J, J'} \mid \exists s, s' \in S, a \in A, j \in I. p(\tuple{s,j,J}, a, \tuple{s', j, J'}) > 0 \text{ and } J \neq J' \}. \]
\end{definition}%
\begin{example}
\cref{fig:envgraphex} shows the environment graph for the MEMDP in Ex.~\ref{ex:runningmemdp}. It consists of the different belief-supports. For example, the transition from $\{1,2,3\}$ to $\{2,3\}$ and to $\{1\}$ is due to the action $q_1$ in state $s_0$, as shown in Fig.~\ref{fig:example_memdp}.%
\exampleend
\end{example}%
Paths in the environment graph abstract paths in the BOMDP. Path fragments where the belief-support remains unchanged are summarized into one step, as we do not create edges of the form $\tuple{J,J}$. We formalize this idea:
Let $\pi=\tuple{s_1, j, J_1}a_1\tuple{s_2, j, J_2}\dots\tuple{s_n, j, J_n}$ be a path in the BOMDP. 
For any  $J \subseteq I$, we call $\pi$ a \emph{$J$-local path}, if $J_i = J$ for all $i \in [n]$.
\begin{restatable}{lemma}{localpath} \label{lem:localpath}
  For a MEMDP $\memdp$ with environment graph $\egraph$, there is a path $J_1 \dots J_n$ if there is a path $\pi = \pi_1\dots\pi_n$ in  $\bomdp$  s.t.\ every $\pi_i$ is $J_i$-local.
\end{restatable}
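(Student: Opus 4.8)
The plan is to prove the implication directly, essentially edge by edge. Assume we are given a path $\pi = \pi_1 \dots \pi_n$ of $\bomdp$ in which each $\pi_i$ is $J_i$-local. It suffices to show, for every $i < n$, that either $J_i = J_{i+1}$ or $\tuple{J_i, J_{i+1}} \in \egraphE$: in the first case the duplicated vertex is simply dropped, in the second it is a genuine edge, so after deleting consecutive duplicates the sequence $J_1 \dots J_n$ is a path in $\egraph$. If one additionally reads the decomposition as being into \emph{maximal} local fragments, as the surrounding remark about summarising unchanged fragments into one step suggests, then no duplicates arise and $J_1 \dots J_n$ is literally a path of $\egraph$.

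The key step is to exhibit, for each $i < n$, a single transition of $\bomdp$ that realises the jump from belief-support $J_i$ to $J_{i+1}$. Since $\pi$ is one path of $\bomdp$ and $\pi_i,\pi_{i+1}$ are consecutive fragments, the last state of $\pi_i$ and the first state of $\pi_{i+1}$ are consecutive along $\pi$; because $\pi_i$ is $J_i$-local and $\pi_{i+1}$ is $J_{i+1}$-local, these states have the form $\tuple{s,j,J_i}$ and $\tuple{s',j,J_{i+1}}$, where $j$ is the environment index carried by every state of $\pi$ (it is preserved along transitions of $\bomdp$, cf.\ \cref{lem:paths}). Hence there is an action $a$ with $p(\tuple{s,j,J_i}, a, \tuple{s',j,J_{i+1}}) > 0$ for the transition function $p$ of $\bomdp$. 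If $J_i \neq J_{i+1}$, this is precisely the witness required by \cref{def:envgraph}, so $\tuple{J_i, J_{i+1}} \in \egraphE$, as needed. One may equivalently phrase the whole argument as an induction on $n$, using this observation in the inductive step.

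I expect the only real obstacle to be pinning down the notation for the decomposition $\pi = \pi_1 \dots \pi_n$: making precise how consecutive fragments are glued --- the first state of $\pi_{i+1}$ is the $\pi$-successor of the last state of $\pi_i$, not an overlapping copy of it --- and, relatedly, handling the degenerate case $J_i = J_{i+1}$ by merging the two fragments into a single (still valid, still $J_i$-local) path before reading off the environment-graph path. Beyond this bookkeeping the statement is immediate from the definition of $\egraphE$, and for this direction no further structural properties of MEMDPs, not even monotonicity of the belief update, are required.
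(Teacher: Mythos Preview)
Your proposal is correct and takes essentially the same approach as the paper: both identify, for each $i < n$, the single BOMDP transition joining the last state of $\pi_i$ to the first state of $\pi_{i+1}$ and read off the edge $\tuple{J_i, J_{i+1}} \in \egraphE$ directly from the definition. You are in fact slightly more careful than the paper, which tacitly assumes the maximal-fragment decomposition and does not separately discuss the degenerate case $J_i = J_{i+1}$ or the bookkeeping of how fragments are glued.
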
 \noindent
The shape of the environment graph is crucial for the algorithm we develop.
\begin{restatable}{lemma}{envgraphshape} \label{lem:envgraphshape}
Let $\egraph = (\egraphV, \egraphE)$ be an environment graph for MEMDP $\memdp$.
First, $\egraphE(J,J')$ implies $J' \subsetneq J$. Thus, $G$ is acyclic and has maximal path length $|I|$. The maximal outdegree of the graph is $|S|^2|A|$.
\end{restatable}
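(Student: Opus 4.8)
The plan is to prove \cref{lem:envgraphshape} directly from the definitions, handling the three claims in sequence.

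\textbf{Claim 1: $\egraphE(J,J')$ implies $J' \subsetneq J$.} Suppose $\tuple{J,J'} \in \egraphE$. By \cref{def:envgraph}, there exist $s,s' \in S$, $a \in A$, $j \in I$ with $p(\tuple{s,j,J},a,\tuple{s',j,J'}) > 0$ and $J \neq J'$. By the definition of the BOMDP transition relation, $J' = \update{J}{s}{a}{s'} = \{ i \in J \mid p_i(s,a,s') > 0 \}$, so by construction $J' \subseteq J$. Combined with $J \neq J'$, this gives $J' \subsetneq J$. (This is exactly the monotonicity of the update operator already noted before the lemma; I would just cite it.)

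\textbf{Claim 2: acyclicity and maximal path length $|I|$.} Acyclicity is immediate from Claim 1: along any path $J_1 J_2 \dots J_k$ in $\egraph$ we have $J_1 \supsetneq J_2 \supsetneq \dots \supsetneq J_k$, a strictly decreasing chain of subsets of $I$, so no vertex can repeat and there are no cycles. For the bound on path length, a strictly decreasing chain of subsets of a set of size $|I|$ has at most $|I|+1$ vertices, hence at most $|I|$ edges; since path length counts edges, the maximal path length is $|I|$. (If the paper's convention counts vertices I would adjust to $|I|+1$, but "length $|I|$" matches the edge count for the strictly decreasing chain $I \supsetneq \dots$, so I'll state it as edges.)

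\textbf{Claim 3: maximal outdegree $|S|^2|A|$.} Fix $J \subseteq I$. Every outgoing edge $\tuple{J,J'}$ is witnessed, per \cref{def:envgraph}, by some triple $\tuple{s,s',a} \in S \times S \times A$ (the index $j$ is not needed to bound the count, since $J'$ is determined by $s,a,s'$ and does not depend on $j$). The map sending a witnessing triple to the target $J' = \update{J}{s}{a}{s'}$ is a function from $S \times S \times A$ onto the set of $J$-successors, so the number of distinct successors is at most $|S \times S \times A| = |S|^2|A|$. Hence the outdegree of $J$ is at most $|S|^2|A|$, and this holds for every vertex.

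I do not anticipate a serious obstacle here; the lemma is essentially a bookkeeping consequence of the monotone update operator and the finiteness of $\powerset{I}$. The only mild subtlety worth a sentence is that the outdegree bound should not double-count: different witnessing triples may yield the same successor $J'$, but that only helps the bound. I would also remark that all three bounds are used later — acyclicity and path length $\le |I|$ to control the recursion depth of the PSPACE algorithm, and the polynomial outdegree to bound the branching at each recursive step — so it is worth stating them crisply.
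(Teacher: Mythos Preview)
Your proof is correct and, for Claims~1 and~2, matches the paper's argument exactly (the paper simply says ``the monotonicity regarding $J,J'$ follows from definition of the belief update''). For Claim~3, the paper does not argue directly but instead forward-references \cref{lem:locbsgsize}, whose proof contains the same combinatorial observation you make: the successor $J'$ is determined by $(s,a,s')$ alone, so the number of distinct successors of $J$ is at most $|S|^2|A|$. Your direct counting argument is cleaner and self-contained, whereas the paper's deferral ties the outdegree bound to the local-BOMDP size lemma; both rest on the identical fact that $\update{J}{s}{a}{s'}$ does not depend on $j$.
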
 \noindent
The monotonicity regarding $J, J'$ follows from definition of the belief update.
The bound on the outdegree is a consequence from Lemma~\ref{lem:locbsgsize} below.

\subsubsection{Local belief-support BOMDPs.}
Before we continue, we remark that the (future) dynamics in a BOMDP only depend on the current state and set of environments. More formally, we capture this intuition as follows.
\begin{restatable}{lemma}{restrictedbsg} \label{lem:restrictedbsg}
Let $\bomdp$ be a BOMDP with states $S'$. 
For any state $\tuple{s, j, J} \in S'$, let 
$\memdp' = \onlyreachable{\restrictenv{\memdp}{J}}{\dirac(s)}$ and
$Y = \{ \tuple{s, i, J} \mid i \in J\}$.
Then: \[ \onlyreachable{\bomdp}{\unif(Y)} = \custombomdp{\memdp'}.\]
\end{restatable}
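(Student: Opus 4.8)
The plan is to show that the two BOMDPs $\onlyreachable{\bomdp}{\unif(Y)}$ and $\custombomdp{\memdp'}$ are literally the same POMDP (same states, actions, transitions, observations), once we identify states appropriately. The key observation is that from any state $\tuple{s,j,J}$, the third component of reachable BOMDP states is always a \emph{subset} of $J$ (by monotonicity of the belief update, as noted before \cref{def:envgraph}), and moreover the belief update $\update{J''}{t}{a}{t'}$ for $J'' \subseteq J$ only ever inspects the transition functions $p_i$ with $i \in J'' \subseteq J$. Hence the dynamics reachable from $Y$ depend only on the environments in $J$ and on the states reachable from $s$ within those environments, which is exactly what $\memdp' = \onlyreachable{\restrictenv{\memdp}{J}}{\dirac(s)}$ records.

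Concretely, I would proceed as follows. First, unfold both definitions. The state space of $\custombomdp{\memdp'}$ is $S' \times J \times \powerset{J}$ where $S' = \reachable{\{s\}}$ in $\restrictenv{\memdp}{J}$, with initial distribution putting mass $|J|^{-1}$ on each $\tuple{s,i,J}$, i.e.\ exactly $\unif(Y)$. On the other side, $\onlyreachable{\bomdp}{\unif(Y)}$ has initial distribution $\unif(Y)$ as well, and its state space is the set of BOMDP states reachable from $Y$. So the two initial distributions already coincide under the identity map on triples. Second, I would argue the reachable-state sets agree: by \cref{lem:paths}, a BOMDP path from $\tuple{s,i,J}$ stays in states $\tuple{t,i,J''}$ with $J'' \subseteq J$ and $t$ reachable from $s$ in $\memdp_i$ with $i$ still in the belief, hence $t \in \reachable{\{s\}}$ within $\restrictenv{\memdp}{J}$; conversely every such state is reached by mimicking the corresponding MEMDP path. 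Third, I would check the transition functions match: for a reachable state $\tuple{t,j',J''}$ with $J'' \subseteq J$, the BOMDP transition is $p'(\tuple{t,j',J''},a)(\tuple{t',j',\update{J''}{t}{a}{t'}}) = p_{j'}(t,a,t')$, and since $j' \in J'' \subseteq J$, the value $p_{j'}(t,a,t')$ and the updated belief $\update{J''}{t}{a}{t'} = \{i \in J'' \mid p_i(t,a,t') > 0\}$ are identical whether computed in $\memdp$ or in $\restrictenv{\memdp}{J}$ — the update never references an environment outside $J$. Finally, the observation functions agree trivially since both send $\tuple{t,j',J''} \mapsto \tuple{t,J''}$.

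The main obstacle, such as it is, is bookkeeping rather than mathematical depth: one must be careful that "$\onlyreachable{\cdot}{\cdot}$" is being applied consistently — restricting environments to $J$ \emph{and then} restricting to states reachable from $s$ — and that the belief-powerset component on the right-hand side is $\powerset{J}$, not $\powerset{I}$, yet this causes no mismatch because the left-hand side never reaches a state with a belief not contained in $J$ anyway. So the equality should be read as an equality of the \emph{reachable fragments}, and the cleanest write-up fixes the obvious bijection $\tuple{t,j',J''} \mapsto \tuple{t,j',J''}$ and verifies it is an isomorphism of POMDPs by the four checks above (initial distribution, states, transitions, observations), invoking \cref{lem:paths} for the reachability argument and monotonicity of $\mathsf{Up}$ for the transition-function argument.
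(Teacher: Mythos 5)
Your proposal is correct and follows exactly the argument the paper itself gives (which appears only as the informal remark after the lemma: restricting to $J$ does not change the transition functions $p_i$ for $i \in J$, and by monotonicity of $\mathsf{Up}$ only beliefs $J'' \subseteq J$ are reachable); your four-part isomorphism check, including the caveat that the equality is one of reachable fragments, is a faithful and somewhat more careful write-up of that same reasoning.
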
 \noindent
The key insight is that restricting the MEMDP does not change the transition functions for the environments $j \in J$.
Furthermore, using monotonicity of the update, we only reach BOMDP-states whose behavior is determined by the environments in $J$.

This intuition allows us to analyze the BOMDP locally and lift the results to the complete BOMDP. 
We define a local BOMDP as the part of a BOMDP starting in any state in  $S_J$.
All observations not in $\observations_J = S \times \{J\}$ are made absorbing.
\begin{definition}[Local BOMDP]
\label{def:localbsg}
Given a MEMDP $\memdp$ with BOMDP $\bomdp$ and a set of environments $J$. 
The \emph{local BOMDP} for environments $J$ is the fragment \[ \localbomdp{J} = \onlyreachable{\cut{\custombomdp{\restrictenv{\memdp}{J}}}{F}}{\unif(S_J)}\quad\text{ where }\quad F = \observations \setminus \observations_J\ . \]
\end{definition}
This definition of a local BOMDP coincides with a fragment of the complete BOMDP. 
We then mark exactly the winning observations restricted to the environment sets $J' \subsetneq J$ as winning in the local BOMDP and compute all winning observations in the local BOMDP. These observations are winning in the  complete BOMDP. The following concretization of \cref{lem:winningfrontier} formalizes this.
\begin{restatable}{lemma}{winninglocally}\label{lem:winninglocally}%
  Consider a MEMDP $\memdp$ and a subset of environments $J$.
  \[
    O\left(\win{\localbomdp{J}}{T'_{\bomdp}}\right) \cap  \observations_J ~=~  O\left(\win{\bomdp}{T_{\bomdp}}\right) \cap \observations_J \quad\text{with}\quad T'_{\bomdp} = T_{\bomdp} \cup (\win{\bomdp}{T_{\bomdp}}  \setminus S_J).
  \]
\end{restatable}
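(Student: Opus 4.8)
The statement says that computing the winning region of the \emph{local} BOMDP for $J$ — after additionally marking as winning those observations already known to be winning in the \emph{global} BOMDP but whose support lies outside $S_J$ — recovers exactly the $\observations_J$-part of the global winning region. I would prove the two inclusions ``$\subseteq$'' and ``$\supseteq$'' separately, in each case transporting winning policies between the local BOMDP and the global BOMDP through the structural correspondence established earlier.

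\paragraph{Setting up the correspondence.} First I would invoke \cref{lem:restrictedbsg} and the remark following \cref{def:localbsg} to make precise that $\localbomdp{J}$ is literally the fragment $\cut{\bomdp}{F}$ of the global BOMDP restricted to states reachable from $\unif(S_J)$, with $F = \observations \setminus \observations_J$ — i.e., the global BOMDP with all observations outside $\observations_J$ made absorbing, and unreachable states (under $J$) discarded. This reduces the claim to an application of \cref{lem:winningfrontier} with that particular $F$, \emph{plus} a careful accounting of how the extra target set $T'_{\bomdp} = T_{\bomdp} \cup (\win{\bomdp}{T_{\bomdp}} \setminus S_J)$ interacts with the sliced model. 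The point of this extra target set is exactly to close the gap in \cref{lem:winningfrontier}: that lemma gives $\win{\cut{\bomdp}{F}}{T} \subseteq \win{\bomdp}{T} \subseteq \win{\cut{\bomdp}{F}}{T\cup F}$, and the frontier $F = \observations_{J'}$-states for $J' \subsetneq J$ are precisely the states in $S\setminus S_J$; marking the genuinely winning ones among them as targets (and \emph{only} those) should tighten both inclusions to equalities on $\observations_J$.

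\paragraph{The two inclusions.} For ``$\supseteq$'': take an observation $\obs \in \observations_J$ that is winning in $\bomdp$ via a memoryless observation-based policy $\sigma$ (memoryless suffices by the discussion after \cref{thm:winning_memdp_bomdp}). Restrict $\sigma$ to the states of $\localbomdp{J}$. Any path in $\localbomdp{J}$ from $\obsfun^{-1}(\obs)$ either stays inside $S_J$ forever — then it behaves identically in $\bomdp$ and reaches $T_\observations$ almost surely — or it leaves $S_J$, i.e., hits a frontier state with observation in $\observations_{J'}$, $J' \subsetneq J$; by \cref{lem:paths}/\cref{lem:beliefobservation} the belief is determined by the observation, and since $\sigma$ was globally winning, that frontier observation is itself globally winning, hence lies in $\win{\bomdp}{T_{\bomdp}} \setminus S_J$ and is therefore a target of $T'_{\bomdp}$. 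So every path reaches $T'_\observices$ almost surely in $\localbomdp{J}$. For ``$\subseteq$'': take $\obs \in \observations_J$ winning in $\localbomdp{J}$ for target $T'_{\bomdp}$ via memoryless $\sigma_{\mathrm{loc}}$. For each frontier observation $\obs'$ with $\obs' \in \win{\bomdp}{T_{\bomdp}}$ reachable under $\sigma_{\mathrm{loc}}$, pick a global winning policy, and amalgamate all of these with $\sigma_{\mathrm{loc}}$ into a single policy $\hat\sigma$ on $\bomdp$ (this is where I would use the ``common winning policy'' property stated after \cref{thm:winning_memdp_bomdp}, together with the fact that the different frontier observations live in disjoint $\observations_{J'}$-fragments that are reached after leaving $\observations_J$, so no conflict arises). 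Under $\hat\sigma$, from $\obsfun^{-1}(\obs)$ one almost surely reaches $T_\observations \cup \{\text{winning frontier obs}\}$, and from each winning frontier observation one almost surely reaches $T_\observations$; composing these gives almost-sure reachability of $T_\observations$ in $\bomdp$, so $\obs \in \win{\bomdp}{T_{\bomdp}}$.

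\paragraph{Main obstacle.} The delicate point is the amalgamation step in the ``$\subseteq$'' direction: I must argue that gluing the local policy to a family of global winning policies — one per winning frontier observation — yields a \emph{well-defined observation-based} policy, i.e., that no state (equivalently, no observation, by \cref{lem:beliefobservation}) is assigned two conflicting actions. This hinges on the monotonicity of the belief update (\cref{lem:envgraphshape}): once a path leaves $S_J$ its belief is a strict subset $J' \subsetneq J$ and can never return to $J$ or cross into an incomparable $J''$, so the $\observations_{J'}$-fragments reached from distinct winning frontier observations are reachability-disjoint, and each can be governed by its own global winning policy independently without affecting $\observations_J$. I would also need to double-check the edge case where a winning frontier observation's global winning policy itself routes back through some observation already constrained by $\sigma_{\mathrm{loc}}$ — but since $\sigma_{\mathrm{loc}}$ only constrains $\observations_J$ and those are unreachable from any $\observations_{J'}$ with $J' \subsetneq J$, this cannot happen. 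Everything else is a routine unfolding of the definitions of ``winning'' and of $\cut{\cdot}{F}$ together with \cref{lem:winningfrontier}.
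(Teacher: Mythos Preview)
Your proposal is correct and follows essentially the same two-inclusion argument as the paper: restrict a global winning policy to the local BOMDP for ``$\supseteq$'', and glue the local winning policy to global winning policies at the frontier for ``$\subseteq$''. Your treatment is in fact more careful than the paper's on the amalgamation step---the paper simply says ``switching to $\sigma'$ at the frontier state'' without discussing well-definedness, whereas you explicitly invoke belief monotonicity (\cref{lem:envgraphshape}) to rule out conflicts---but the underlying strategy is identical.
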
%
\noindent Furthermore, local BOMDPs are polynomially bounded in the size of the MEMDP.
\begin{restatable}{lemma}{locbsgsize}\label{lem:locbsgsize}%
Let $\memdp$ be a MEMDP with states $S$ and actions $A$. 
$\localbomdp{J}$ has at most $\mathcal{O}(|S|^2\cdot|A|\cdot|J|)$ states and $\mathcal{O}(|S|^{2}\cdot|A|\cdot|J|^2)$ transitions%
\footnote{The number of transitions is the number of nonzero entries in $p$}.
\end{restatable}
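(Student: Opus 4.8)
The claim is a size bound on the local BOMDP $\localbomdp{J}$: at most $\mathcal{O}(|S|^2\cdot|A|\cdot|J|)$ states and $\mathcal{O}(|S|^2\cdot|A|\cdot|J|^2)$ transitions. The plan is to count the reachable states of $\localbomdp{J}$ by exploiting the definition: $\localbomdp{J}$ is obtained from $\custombomdp{\restrictenv{\memdp}{J}}$ by (i) making every observation outside $\observations_J = S \times \{J\}$ absorbing, and (ii) restricting to states reachable from $\unif(S_J)$. So I would split the reachable state set into two parts: the \emph{internal} states, which have belief-support exactly $J$, and the \emph{frontier} states, which are reached in one step from an internal state and have belief-support a strict subset $J' \subsetneq J$ (these are immediately made absorbing, so no further states are generated).

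\textbf{Counting internal states.} A reachable internal state has the form $\tuple{s,j,J}$ with $s \in S$ and $j \in J$. The crude bound is therefore $|S|\cdot|J|$. However, $\localbomdp{J}$ keeps only states reachable from $\unif(S_J)$ within the restricted MEMDP, and by \cref{lem:restrictedbsg} this reachable fragment over belief-support $J$ is $\custombomdp{\memdp'}$ for $\memdp' = \onlyreachable{\restrictenv{\memdp}{J}}{\dirac(s)}$; in any case the internal states are a subset of $S \times J \times \{J\}$, giving $\mathcal{O}(|S|\cdot|J|)$. This is already within the stated bound, so for the state count the dominating term comes from the frontier.

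\textbf{Counting frontier states.} A frontier state $\tuple{s',j,J'}$ with $J' \subsetneq J$ arises as the target of a transition $p'(\tuple{s,j,J},a)(\tuple{s',j,J'}) > 0$ from an internal state, where $J' = \update{J}{s}{a}{s'}$. The key observation is that, for a \emph{fixed} source state-component $s$ and action $a$, the belief-update $\update{J}{s}{a}{s'}$ is determined entirely by $s'$: ranging $s'$ over $S$ yields at most $|S|$ distinct target beliefs $J'$. Hence the number of distinct frontier \emph{observations} $\tuple{s',J'}$ reachable in one step is bounded by $|S|$ (choices of $s$) $\times\ |A|$ (choices of $a$) $\times\ |S|$ (choices of $s'$, which fixes both $s'$ and $J'$), i.e.\ $\mathcal{O}(|S|^2\cdot|A|)$. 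Each such frontier observation $\tuple{s',J'}$ is realized by at most $|J'| \le |J|$ states (one per $j \in J'$). Altogether the frontier contributes $\mathcal{O}(|S|^2\cdot|A|\cdot|J|)$ states, which dominates the internal count and gives the claimed state bound. I note in passing that this same counting argument — at most $|S|^2|A|$ distinct successor beliefs $J' \neq J$ from belief $J$ — is exactly what justifies the outdegree bound $|S|^2|A|$ asserted in \cref{lem:envgraphshape}.

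\textbf{Counting transitions.} For the transition count I would bound, for each source state, the number of outgoing nonzero entries of $p'$. From a frontier (absorbing) state there is exactly one self-loop per action, contributing $\mathcal{O}(|S|^2\cdot|A|\cdot|J|)$ nonzero entries total, which is within budget. From an internal state $\tuple{s,j,J}$ under action $a$, the successor $\tuple{s',j,J'}$ is uniquely determined by $s'$, so there are at most $|S|$ nonzero entries; summing over the $\mathcal{O}(|S|\cdot|J|)$ internal states and $|A|$ actions gives $\mathcal{O}(|S|^2\cdot|A|\cdot|J|)$ — again within the stated $\mathcal{O}(|S|^2\cdot|A|\cdot|J|^2)$. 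So the transition bound would follow even more easily than the state bound; the extra factor of $|J|$ in the statement leaves slack. I should double-check whether the intended accounting instead routes frontier-state self-loops through \emph{all} of $S\times\powerset{I}$ before restriction, which would inflate the count — but since $\localbomdp{J}$ is explicitly defined via $\onlyreachable{\cdot}{\unif(S_J)}$, only reachable states carry transitions, so the reachability-based count above is the right one.

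\textbf{Main obstacle.} The only genuinely delicate point is making the "successor belief is a function of $s'$" observation precise and confirming it gives the $|S|^2|A|$ frontier-observation bound rather than a weaker $|S|^2|A|\cdot 2^{|J|}$ bound; everything else is bookkeeping over the two-layer (internal / frontier) structure of the sliced, reachability-restricted BOMDP. I would also make sure the base case $J = I$ and the trivial case $|J| = 1$ (where $\localbomdp{J}$ is essentially $\onlyreachable{\memdp_j}{\dirac(s)}$ with all non-$J$ observations absorbing, so no frontier at all) are consistent with the bound, which they are.
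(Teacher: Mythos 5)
Your proposal is correct and follows essentially the same route as the paper: the decisive observation in both is that for a fixed source state $\tuple{s,j,J}$ and action $a$ the successor belief $J'=\update{J}{s}{a}{s'}$ is determined by $s'$ alone, so the $|S|\cdot|J|\cdot|A|$ internal state--action pairs contribute at most $|S|$ successors each, giving $\mathcal{O}(|S|^2\cdot|A|\cdot|J|)$ states; your internal/frontier split is just a more explicit bookkeeping of that same count. Your transition bound is in fact tighter than the paper's (which simply allows a further ``duplication'' factor of $|J|$), the only nit being that if one literally counts one self-loop per action at every absorbing frontier state, the frontier contributes $\mathcal{O}(|S|^2\cdot|A|^2\cdot|J|)$ rather than $\mathcal{O}(|S|^2\cdot|A|\cdot|J|)$ nonzero entries --- immaterial under any reasonable accounting of the fragment, and glossed over by the paper as well.
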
%

\subsubsection{A PSPACE algorithm.}
We present Algorithm~\ref{alg:search_algorithm} for the MEMDP \textbf{decision problem}, which recurses depth-first over the paths in the environment graph\footnote{In contrast to depth-first-search, we do not memorize nodes we visited earlier.}.
We first state the correctness and the space complexity of this algorithm.

\newcommand{\pspacealgname}{\textsc{ASWinning}}
\begin{algorithm}[t]
\begin{algorithmic}[1]
  \Function{Search}{MEMDP $\memdp = \tuple{S, A, {\{p_i\}}_{i \in I}, \initdist}$, $J \subseteq I$, $T \subseteq S$}
      \State $T' \gets \{ \tuple{s,j, J} \mid j \in J, s \in T \}$ 
      \For{\upshape$J'$ s.t.\ $\egraphE(J,J')$} \Comment{Consider the edges in the env.~graph (Def.~\ref{def:envgraph})}
          \State $W_{J'} \assign \textsc{Search}(\memdp, J', T)$ \Comment{Recursion!}
          \State $T' \assign T' \cup \{ \tuple{s, j, J'} \mid j \in J, \tuple{s, J'} \in W_{J'} \}$\label{line:deftprime}
      \EndFor
      \State \Return $\win{\localbomdp{J}}{T'} \cap \observations_J$\Comment{Construct BOMDP as in Def.~\ref{def:localbsg}, then model check}
  \EndFunction
\\
  \Function{\pspacealgname}{MEMDP $\memdp = \tuple{S, A, {\{p_i\}}_{i \in I}, \initdist}$, $T \subseteq S$}
      \State \Return $\obsfun(\supp(\initdist)) \subseteq \textsc{Search}(\memdp, I, T)$
  \EndFunction
\end{algorithmic}
\caption{Search algorithm}%
\label{alg:search_algorithm}%
\end{algorithm}

\begin{lemma}\label{lem:pspacealg}
	\pspacealgname{} in Alg.~\ref{alg:search_algorithm} solves the decision problem in PSPACE.
\end{lemma}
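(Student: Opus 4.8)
The plan is to establish two claims: (i) \pspacealgname{} is correct, i.e., it returns \textbf{true} iff a winning policy exists for $\memdp$ and $T$; and (ii) it runs in polynomial space. For correctness, I would argue by induction on the longest path from $J$ in the environment graph $\egraph$ that \textsc{Search}$(\memdp, J, T)$ returns exactly $O(\win{\bomdp}{T_{\bomdp}}) \cap \observations_J$, the winning observations of the full BOMDP that have belief-support $J$. The base case is a set $J$ with no outgoing edges in $\egraph$ (which by \cref{lem:envgraphshape} includes all singletons and more generally all $\subseteq$-minimal reachable supports): here the \textbf{for}-loop is empty, $T' = \{\tuple{s,j,J} \mid j \in J, s \in T\}$, and the call returns $\win{\localbomdp{J}}{T'} \cap \observations_J$. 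Since no belief-support strictly below $J$ is reachable, $T_{\bomdp} \cap S_J$ consists precisely of the target states inside $S_J$ and no state outside $S_J$ is winning-and-reachable in a way that matters, so $T'_{\bomdp}$ from \cref{lem:winninglocally} coincides with this $T'$ on the relevant part, and \cref{lem:winninglocally} gives the claim.

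For the inductive step, fix $J$ and assume the statement holds for every $J' \subsetneq J$ with $\egraphE(J,J')$ (these have strictly shorter maximal outgoing paths by \cref{lem:envgraphshape}). By the induction hypothesis each recursive call yields $W_{J'} = O(\win{\bomdp}{T_{\bomdp}}) \cap \observations_{J'}$, so after the loop $T' = \{\tuple{s,j,J} \mid j\in J, s \in T\} \cup \{\tuple{s,j,J'} \mid j \in J, \egraphE(J,J'), \tuple{s,J'}\in W_{J'}\}$. I need this $T'$ to agree, as far as the local BOMDP for $J$ can see it, with the $T'_{\bomdp} = T_{\bomdp} \cup (\win{\bomdp}{T_{\bomdp}} \setminus S_J)$ of \cref{lem:winninglocally}. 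The point is that $\localbomdp{J}$ makes every observation outside $\observations_J$ absorbing, so only the winning status of states $\tuple{s,j,J'}$ with $J' \subsetneq J$ that are \emph{immediate successors} of $\observations_J$-states matters, and these successors have supports $J'$ with $\egraphE(J,J')$; for those, $\tuple{s,j,J'}$ is winning in $\bomdp$ iff $\tuple{s,J'} \in W_{J'}$ by \cref{lem:beliefobservation} (winning is an observation property) and the induction hypothesis. Hence $T'$ marks exactly the right frontier states as winning, and \cref{lem:winninglocally} delivers \textsc{Search}$(\memdp,J,T) = O(\win{\bomdp}{T_{\bomdp}}) \cap \observations_J$. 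Applying this to $J = I$ and using \cref{thm:winning_memdp_bomdp} together with the definition of winning for the initial distribution, \pspacealgname{} returns \textbf{true} iff a winning policy for the MEMDP exists.

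For the space bound, observe that the recursion depth is at most $|I|$ by \cref{lem:envgraphshape}. At each level of the recursion we store: the current set $J$ (size $\le |I|$), the accumulating target set $T'$ and the returned winning set, each a set of BOMDP observations of which there are at most $|S| \cdot |J|$ relevant ones restricted to $\observations_J$ plus at most $|S|$ per successor support $J'$ — but crucially the number of successors $J'$ considered at a node is bounded by the outdegree $|S|^2|A|$ (\cref{lem:envgraphshape}), and each contributes a polynomially sized $W_{J'}$; iterating the \textbf{for}-loop one $J'$ at a time, we need to keep only the current $W_{J'}$ and the cumulative $T'$. Thus each stack frame uses $\mathrm{poly}(|\memdp|)$ space, and the whole stack uses $|I| \cdot \mathrm{poly}(|\memdp|) = \mathrm{poly}(|\memdp|)$ space. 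The final line $\win{\localbomdp{J}}{T'} \cap \observations_J$ requires constructing $\localbomdp{J}$, which by \cref{lem:locbsgsize} has polynomially many states and transitions, and computing its winning region, which is polynomial time (hence polynomial space) by the almost-sure reachability algorithm for POMDPs cited earlier. Enumerating the edges $\egraphE(J,J')$ in the \textbf{for}-loop only requires inspecting all $\tuple{s,j,a,s'}$, which is polynomial. Combining, \pspacealgname{} runs in PSPACE.

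The main obstacle I anticipate is the correctness bookkeeping in the inductive step: making fully rigorous that the frontier target set $T'$ assembled from the recursive calls is ``the same as'' the idealized $T'_{\bomdp}$ of \cref{lem:winninglocally} on exactly the part of the state space that influences $\win{\localbomdp{J}}{\cdot} \cap \observations_J$. This needs the facts that (a) winning is determined by observation (\cref{lem:beliefobservation}), (b) the only non-$\observations_J$ states reachable in $\localbomdp{J}$ in one step have supports $J'$ with $\egraphE(J,J')$ and are made absorbing, and (c) $\win{\bomdp}{T_{\bomdp}}$ is downward-consistent along the belief order so that restricting attention to successors suffices. A secondary subtlety worth a sentence is that the same support $J'$ may be reachable from $J$ along different paths and re-explored (the algorithm does not memoize), which costs time but not space — and it is exactly this recomputation that the later practical algorithm avoids.
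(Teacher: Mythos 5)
Your proposal is correct and follows essentially the same route as the paper: induction over the acyclic environment graph showing that \textsc{Search}$(\memdp,J,T)$ computes $O(\win{\bomdp}{T_{\bomdp}})\cap\observations_J$ via \cref{lem:winninglocally}, then \cref{thm:winning_memdp_bomdp} for the initial states, and the space bound from the recursion depth $|I|$ (\cref{lem:envgraphshape}) together with the polynomial size of each local BOMDP (\cref{lem:locbsgsize}). You in fact supply more detail than the paper's sketch, in particular the careful matching of the assembled frontier set $T'$ with the idealized $T'_{\bomdp}$ of \cref{lem:winninglocally}.
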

To prove correctness, we first note that $\textsc{Search}(\memdp, J, T)$ computes $\win{\bomdp}{T_{\bomdp}} \cap \observations_J$. 
We show this by induction over the structure of the environment graph. 
For all $J$ without outgoing edges, the local BOMDP coincides with a BOMDP just for environments $J$ (\cref{lem:restrictedbsg}). 
Otherwise, observe that $T'$ in line~\ref{line:deftprime} coincides with its definition in \cref{lem:winninglocally} and thus, by the same lemma, we return $\win{\bomdp}{T_{\bomdp}} \cap \observations_J$.
To finalize the proof, a winning policy exists in the MEMDP if the observation of the initial states of the BOMDP are winning (\cref{thm:winning_memdp_bomdp}).
The algorithm must terminate as it recurses over all paths of a finite acyclic graph, see~\cref{lem:envgraphshape}.
Following~\cref{lem:locbsgsize}, the number of frontier states is then bounded by $|S|^2\cdot|A|$. 
The main body of the algorithm therefore requires polynomial space, and the maximal recursion depth (stack height) is $|I|$ (\cref{lem:envgraphshape}).
Together, this yields a space complexity in $\mathcal{O}(|S|^{2}\cdot|A|\cdot|I|^2)$.

\subsection{Deciding Almost-Sure Winning for MEMDPs Is PSPACE-hard} It is not possible to improve the algorithm beyond PSPACE. 
\begin{restatable}{lemma}{pspacehard}\label{lem:decision_pspace_hard}
  The MEMDP decision problem is PSPACE-hard.
\end{restatable}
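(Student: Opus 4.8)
The plan is to reduce from the \emph{true quantified Boolean formula} problem (TQBF), which the introduction already announced as the intended source of hardness. Given a closed QBF $\Phi = Q_1 x_1\, Q_2 x_2 \cdots Q_n x_n.\ \varphi(x_1,\dots,x_n)$ with $\varphi$ in, say, 3-CNF, I would build in polynomial time a MEMDP $\memdp_\Phi$ with an almost-sure reachability target $T$ such that a winning policy exists if and only if $\Phi$ is true. The key modelling idea exploits exactly the monotone-belief feature emphasized in the paper: the existential choices of the policy are made "openly" (the policy picks an action encoding a truth value for an $\exists$-variable), whereas the universal choices are forced on the policy by the \emph{environment} — different environments disagree about the assignment to $x_i$ for a $\forall$-quantifier, and since the policy cannot observe which environment it is in, it must play a strategy that succeeds against every universal assignment simultaneously. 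The belief-support over environments is what encodes "which universal assignment am I being tested against," and because belief only shrinks, once the environment has revealed a universal bit it stays revealed.

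Concretely, I would use one environment per satisfying assignment of the universal variables, or more economically encode the $k$ universal variables with $2^k$ environments indexed by their $\{0,1\}^k$ assignments (this is only exponential in the number of $\forall$'s, hence polynomial in the instance size — we need $|I|$ polynomial, which it is since $|I| = 2^k \le 2^n$… wait: that is not polynomial). So the more careful design is to let the policy, and not a blow-up in $|I|$, carry the combinatorics: I would have $O(n)$ "gadget" phases, one per quantifier, visited in order $x_1, \dots, x_n$. In an $\exists$-phase the policy freely commits to a bit. In a $\forall$-phase there are two actions available, and the MEMDP is built so that the belief-support splits — in one part of the belief we are in environments that "expect" $x_i = 0$, in the other part environments that "expect" $x_i = 1$ — regardless of which action the policy plays; the policy cannot tell the halves apart, so it must continue in a way consistent with both. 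To keep $|I|$ polynomial, the trick (standard for TQBF-to-PSPACE reductions with a stack of exponential depth) is to have the environments not remember the whole universal assignment at once but rather be "checked" against a single clause or a single variable at a time, with the path length of the environment graph (bounded by $|I|$ in Lemma~\ref{lem:envgraphshape}) doing the work; so I would size $I$ to be polynomial in $n+m$ (e.g.\ one environment per clause plus a constant number of bookkeeping environments) and let the policy's \emph{exponential memory} (which the paper separately shows is needed) be what simulates the recursion over assignments. Reaching $T$ almost surely then amounts to: along every branch of the belief-splitting (= every universal assignment), together with the policy's existential commitments (= an existential Skolem choice), the chosen clause is satisfied — and this holds in all environments iff $\Phi$ is true.

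I would then argue the two directions. If $\Phi$ is true, a winning strategy for the QBF semantic game translates into a MEMDP policy: read the belief-support at each $\forall$-phase to recover the partial universal assignment implicitly, respond with the winning existential move; since $\varphi$ is satisfied under the resulting full assignment, the target-reaching clause gadget succeeds in every environment, so ${\Pr}_{\memdp_i}(T\mid\sigma)=1$ for all $i$. Conversely, a winning MEMDP policy must, because it is observation-based and the belief-support on universal-phase paths is identical across environments that agree so far (invoke Lemma~\ref{lem:beliefobservation}), behave uniformly against both values of each $\forall$-variable, hence it induces a winning QBF strategy. Finally I would check the construction is logspace/polytime: $|S|$, $|A|$, and $|I|$ are all polynomial in $n+m$, the transition functions are explicitly listed, and acyclicity is easy to enforce by laying the phases out in a DAG.

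The main obstacle I anticipate is the tension highlighted above: naively, modelling universal alternation wants exponentially many environments (one per universal assignment), but the reduction must be polynomial, so the environment set has to be kept small while the \emph{recursion depth} of the alternation is pushed into the policy's memory and into the length of paths through the environment graph. Getting the gadget so that (i) the belief genuinely splits at each $\forall$-phase, (ii) the splits compose correctly across the $n$ phases without needing $2^n$ environments, and (iii) a single almost-sure reachability target faithfully encodes "$\varphi$ holds under the assembled assignment" — that interlocking is the delicate part. Everything after the gadget (the two-way correctness argument and the polynomial-size bookkeeping) I expect to be routine.
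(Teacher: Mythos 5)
There is a genuine gap, and it sits exactly where you flag the ``delicate part'': the gadget for universal quantification. You propose to have the $\forall$-choices ``forced on the policy by the environment'' via a belief-support split that the policy ``cannot tell apart.'' This mechanism cannot work as described. By \cref{lem:beliefobservation} the belief-support is a deterministic function of the \emph{observable} history, so the policy always knows the current belief; a split it cannot observe is a contradiction in terms. Worse, if the universal values really were hidden from the policy, later existential choices could not depend on earlier universal ones, and the reduction would no longer encode alternation at all (already $\forall x\,\exists y\,[x \leftrightarrow y]$ needs $y$ to depend on $x$); you would at best capture a single $\exists\forall$ block, not PSPACE. Your fallback --- push the combinatorics into the policy's exponential memory and the environment-graph path length --- is not a construction, and you explicitly leave the interlocking of (i)--(iii) unresolved.

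The paper's reduction uses a different and simpler mechanism that your proposal never mentions: universal variables are resolved by a \emph{fair coin flip in the fully observable state space}, identically in every environment (state $y_j$ has a single action leading to $y_j\top$ or $y_j\bot$ with probability $\sfrac12$ each). The almost-sure requirement --- probability $1$ in each MDP --- then forces the policy to succeed on \emph{every} outcome of these flips, which is precisely universal quantification; and since $y_j\top$ and $y_j\bot$ are distinct observable states, the policy sees the universal outcome and can adapt its subsequent existential choices, giving genuine alternation. The environments play only the role you correctly guessed for them (one per clause, so $|I|$ is polynomial), and the ``for all environments'' clause of the winning condition forces every clause of $\Phi$ to be satisfied along every branch. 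In short: you located the right source problem and the right sizing of $I$, but the load-bearing idea --- $\forall$ via randomness plus the probability-one semantics, not via belief --- is absent, so the proof does not go through as proposed.
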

\noindent 
Hardness holds even for acyclic MEMDPs and uses the following fact.
\begin{restatable}{lemma}{detsuffices}\label{lem:acyc_det_suffices}
	If a winning policy exists for an acyclic MEMDP, there also exists a winning policy that is deterministic.
\end{restatable}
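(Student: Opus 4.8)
The plan is to start from an arbitrary winning (possibly randomized, possibly history-dependent) policy $\sigma$ for the acyclic MEMDP $\memdp$ and to extract a deterministic winning policy from it. First I would lift to the BOMDP picture: by \cref{thm:winning_memdp_bomdp} a winning policy for $\memdp$ exists iff a winning observation-based policy exists in $\bomdp$, and for almost-sure reachability observation-based \emph{memoryless} policies suffice (as already noted in the excerpt, citing \cite{DBLP:journals/ai/ChatterjeeCGK16,DBLP:journals/jcss/ChatterjeeCT16}). So fix an observation-based memoryless winning policy $\sigma\colon \observations \to \dist(A)$ in $\bomdp$. Note that acyclicity of $\memdp$ carries over to $\bomdp$: every BOMDP-path projects onto an MDP-path in some $\memdp_j$ (\cref{lem:paths}), so a cycle in $\bomdp$ would give a cycle in some $\memdp_j$ through a non-absorbing state.

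The core of the argument is a backward induction over the acyclic structure of $\bomdp$ restricted to the states reachable under $\sigma$. Order the reachable states so that successors come before predecessors (possible since the reachable fragment under $\sigma$ is a finite DAG once we collapse the absorbing target/sink states). For each reachable observation $\obs$, I claim we can replace the distribution $\sigma(\obs)$ by a single action $a \in \supp(\sigma(\obs))$ without destroying the almost-sure property. The key local fact: if $\sigma$ is winning from $\obs$, then taking \emph{some} action $a$ in $\supp(\sigma(\obs))$ and then following $\sigma$ must again be winning from every $\obs'$ reachable in one step via $a$ --- indeed, under a winning policy every state reachable with positive probability is itself winning (otherwise there is a positive-probability path avoiding the target forever, contradicting probability one; here acyclicity makes ``avoiding forever'' concrete as ``reaching a non-target sink''). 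Since at least one $a \in \supp(\sigma(\obs))$ is used with positive probability, pick that $a$; all its $\sigma$-successors are winning, and by the induction hypothesis the already-fixed deterministic choices on those successor observations are winning there. Because the fragment below is acyclic and finite, ``all reachable sinks are targets'' is exactly almost-sure reaching of $T$. Iterating from the leaves up to $\initdist$ yields a deterministic observation-based policy that is winning in $\bomdp$, hence a deterministic winning policy in $\memdp$ by \cref{thm:winning_memdp_bomdp} and \cref{thm:unionpomdp}.

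The main obstacle is making the ``pick one action from the support and stay winning'' step fully rigorous in the presence of the belief-support bookkeeping: one has to be careful that fixing the action at observation $\obs = \tuple{s,J}$ simultaneously in all environments $j \in J$ is consistent (it is, precisely because policies are observation-based, so $\sigma$ already commits to the same action in all $j \in J$), and that the successor observations carry strictly smaller or equal belief-supports so the induction is well-founded together with the acyclicity within a fixed belief-support. A clean way to package both dimensions is a single induction on the pair (belief-support $J$, ordered by $\supsetneq$) lexicographically with (distance to an absorbing state in $\localbomdp{J}$), mirroring the recursion structure already used in Algorithm~\ref{alg:search_algorithm}. Everything else --- that a positive-probability successor of a winning state is winning, and that in an acyclic model almost-sure reachability is equivalent to every reachable absorbing state being a target --- is routine.
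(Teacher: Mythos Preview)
Your argument is correct, but the paper's proof is a two-line sketch that takes a much more direct route: in an acyclic MEMDP, almost-sure reachability coincides with never reaching a non-target sink state, which is a safety property; and for safety, randomization can only enlarge the set of reachable states, so it never helps --- any action in the support of a winning randomized policy is itself safe to take deterministically. Your core local step (``every positive-probability successor of a winning state is itself winning, so picking any single action from the support preserves winning'') is exactly this safety observation, but you surround it with the BOMDP lift, the appeal to memoryless sufficiency, and a lexicographic backward induction on $(J,\text{distance to sink})$. None of that scaffolding is needed here: the lemma only asks for a deterministic (possibly history-dependent) policy on the MEMDP, so one can argue directly at the MEMDP level with history-dependent policies and never touch the belief machinery. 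Your route does buy something if the goal were \emph{memoryless} deterministic policies on $\bomdp$ --- there the BOMDP and the well-founded induction are the right shape --- but for the lemma as stated the paper's safety argument is both shorter and more transparent.
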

\noindent
In particular, almost-sure reachability coincides with avoiding the sink states. This is a safety property. For safety, deterministic policies are sufficient, as randomization visits only additional states, which is not beneficial for safety.

Regarding \Cref{lem:decision_pspace_hard}, we sketch a polynomial-time reduction from the PSPACE-complete TQBF problem~\cite{DBLP:books/fm/GareyJ79} problem to the MEMDP decision problem. 
  Let $\Psi$ be a QBF formula,
$
    \Psi = \exists x_{1} \forall y_{1} \exists x_{2} \forall y_{2} \ldots \exists x_{n} \forall y_{n}\big[\Phi\big]
 $
with $\Phi$ a Boolean formula in conjunctive normal form. %
The problem is to decide whether $\Psi$ is true.
\begin{figure}[t]
  \centering
  \resizebox{\linewidth}{!}{
    \begin{tikzpicture}
  \node[state,initial where=above] (x1) {$x$};
  \node[left=of x1, xshift=0.5cm] (x1i) {$\mathcal{M}_{1}\colon$};
  \node[state, right=of x1] (x1f) {$x\bot$};
  \node[state, above=0.6cm of x1f] (x1t) {$x\top$};

  \draw[->] (x1i) -- (x1);
  \node[right=of x1, xshift=-0.5cm, inner sep=0mm, outer sep=0mm] (x1m) {};
  \filldraw (x1m) circle (1pt);
  \draw (x1) -- node[above] {$\envact$}(x1m);
  \draw[->] (x1m) -- node[above, left] {$\frac12$} (x1t);
  \draw[->] (x1m) -- node[below] {$\frac12$} (x1f);

  \node[state, right=of x1f] (y1) {$y$};

  \node[state, right=of x1t, accepting] (w1) {$W$};
  \node[state, right=of y1] (f1) {$F$};

  \draw[->] (x1t) -- node[above]{$\envact$} (w1);
  \draw[->] (x1f) -- node[above]{$\envact$} (y1);
  \draw[->] (y1) -- node[right]{$\top$} (w1);
  \draw[->] (y1) -- node[above]{$\bot$} (f1);

  \draw[->] (w1) to[in=45, out=135, looseness=3] (w1);
  \draw[->] (f1) to[in=45, out=135, looseness=3] (f1);

  \node[state, right=of f1, initial where=above, xshift=1cm] (x2) {$x$};
  \node[left=of x2, xshift=0.5cm] (x2i) {$\mathcal{M}_{2}\colon$};
  \node[state, right=of x2] (x2t) {$x\top$};
  \node[state, above=0.6cm of x2t] (x2f) {$x\bot$};

  \draw[->] (x2i) -- (x2);
  \node[right=of x2, xshift=-0.5cm, inner sep=0mm, outer sep=0mm] (x2m) {};
  \filldraw (x2m) circle (1pt);
  \draw (x2) -- node[above] {$\envact$}(x2m);
  \draw[->] (x2m) -- node[below] {$\frac12$} (x2t);
  \draw[->] (x2m) -- node[above, left] {$\frac12$} (x2f);

  \node[state, right=of x2t] (y2) {$y$};

  \node[state, right=of x2f, accepting] (w2) {$W$};
  \node[state, right=of y2] (f2) {$F$};

  \draw[->] (x2t) -- node[above]{$\envact$} (y2);
  \draw[->] (x2f) -- node[above]{$\envact$} (w2);
  \draw[->] (y2) -- node[above]{$\top$} (f2);
  \draw[->] (y2) -- node[right]{$\bot$} (w2);

  \draw[->] (w2) to[in=45, out=135, looseness=3] (w2);
  \draw[->] (f2) to[in=45, out=135, looseness=3] (f2);
\end{tikzpicture}
  }

  \caption{Constructed MEMDP for the QBF formula $\forall x \exists y \big [ (x \vee y) \wedge (\neg x \vee \neg y) \big]$.}
  \label{fig:example_qbf}
\end{figure}
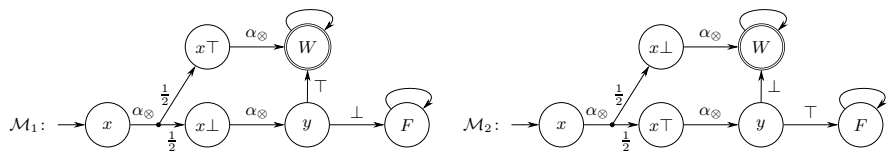
\begin{example}
  Consider the QBF formula
  $\Psi = \forall x \exists y \big [ (x \vee y) \wedge (\neg x \vee \neg y) \big]$.
  We construct a MEMDP with an environment for every clause, see \cref{fig:example_qbf}\footnote{We depict a slightly simplified MEMDP for conciseness.}. %
  The state space consists of three states for each variable $v \in V$: the state $v$ and the states $v\top$ and $v\bot$ that encode their assignment.
Additionally, we have a dedicated target $W$ and sink state $F$.   %
  We consider three actions: The actions \emph{true} ($\top$) and \emph{false} ($\bot$) semantically describe the assignment to existentially quantified variables. The action \emph{any} $\envact$ is used for all other states. Every environment reaches the target state iff one literal in the clause is assigned true.

  In the example, intuitively, a policy should assign the negation of $x$ to $y$. Formally, the policy $\sigma$, characterized by $ \sigma(\pi \cdot y) = \top$ iff $x_\bot \in \pi$, is winning.
\exampleend
\end{example}
As a consequence of this construction, we may also deduce the following theorem.%
\begin{theorem}
	Deciding whether a memoryless winning policy exists is NP-complete.
\end{theorem}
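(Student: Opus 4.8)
The plan is to prove membership in NP and NP-hardness separately, leveraging the TQBF reduction that was just set up for the PSPACE-hardness argument.

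For NP membership, I would argue as follows. A memoryless observation-based policy for the MEMDP is a function $\sigma\colon S \to A$ (since observations in the MEMDP are just states), which has polynomial size. The certificate is such a $\sigma$. To verify it in polynomial time, note that fixing $\sigma$ in the MEMDP $\memdp$ induces, in each environment $\memdp_i$, a Markov chain; checking that $\Pr_{\memdp_i}(T \mid \sigma) = 1$ for every $i \in I$ is a graph-reachability condition on each induced Markov chain (almost-sure reachability in a Markov chain holds iff from every reachable state one can still reach $T$), and there are only $|I|$ such chains, each of size $|S|$. So the whole check runs in polynomial time, giving an NP procedure.

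For NP-hardness, I would reuse the construction of \cref{lem:decision_pspace_hard} but specialize the input QBF to have no universal quantifiers, i.e.\ start from a plain SAT instance $\Phi = \exists x_1 \ldots \exists x_n [\Phi]$ with $\Phi$ in CNF. The reduction builds a MEMDP with one environment per clause; a winning policy must pick, for each existential variable $x_k$, a truth value, and the policy is winning iff every clause is satisfied by that assignment. Since there are no universal variables, the environment-branching structure that forced exponential memory disappears, so the relevant winning policies can be taken memoryless: the choice at each variable-state depends only on that state. I would therefore argue that $\Phi$ is satisfiable iff the constructed MEMDP admits a memoryless winning policy, and since the construction is polynomial-time, this reduces SAT to the memoryless-policy decision problem. (Alternatively, one can invoke the already-cited NP-hardness of finding small-memory policies in POMDPs~\cite{de1999verification} together with the MEMDP-to-POMDP correspondence, but the direct SAT reduction is cleaner and self-contained.)

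The main obstacle is the NP-hardness direction: I must check that, in the specialized (quantifier-alternation-free) instance, a winning policy genuinely exists iff $\Phi$ is satisfiable \emph{and} that memorylessness is without loss of generality here — i.e.\ that no advantage is gained by letting the action at a variable-state depend on history. This should follow because, with only existential variables, the belief-support along any path stays essentially determined by the variable currently being decided (all environments proceed in lockstep through the variable gadgets until a clause-test state is reached), so a history-dependent policy can be collapsed to a memoryless one choosing the same literal values. Making this collapse argument precise against the exact gadget of \cref{fig:example_qbf} is the only genuinely fiddly part; the NP-membership direction is routine.
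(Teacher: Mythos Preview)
Your proposal is correct and matches the paper's approach exactly: NP membership by guessing a memoryless policy and verifying it independently in each environment, NP-hardness by restricting the TQBF construction of \cref{lem:decision_pspace_hard} to the purely existential (SAT) fragment. The paper's own argument is in fact just a two-sentence sketch to the same effect, so your write-up is already more detailed than what appears there.

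One small point worth tightening: you take the certificate to be a function $\sigma\colon S \to A$, but the statement concerns memoryless policies, which in the paper's definitions may be randomized ($\sigma\colon S \to \dist(A)$). Randomization can strictly help among memoryless policies for almost-sure reachability in MEMDPs: consider a single state $s$ with actions $a,b$, where in environment~1 action $a$ goes to $W$ and $b$ self-loops at $s$, and the roles are swapped in environment~2; no deterministic memoryless policy wins, but the uniform one does. The fix is immediate---almost-sure reachability in each induced chain depends only on the support, so the certificate can be a subset of actions per state (with uniform randomization), still polynomial in size. Your hardness direction is unaffected: the SAT-MEMDP is acyclic, so any randomized memoryless winner contains a deterministic one in its support, and the correspondence with satisfying assignments goes through as you describe.
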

The proof of NP hardness uses a similar construction for the propositional SAT fragment of QBF, without universal quantifiers.
Additionally, the problem for memoryless policies is in NP, because one can nondeterministically guess a (polynomially sized) memoryless policy and verify in each environment independently.

\subsection{Policy Problem}
\label{sec:policy_problem}
Policies, mapping histories to actions, are generally infinite objects. 
However, we may extract winning policies from the BOMDP, which is (only) exponential in the MEMDP. 
Finite state controllers~\cite{DBLP:conf/uai/MeuleauPKK99} are a suitable and widespread representation of policies that require only a finite amount of memory. %
Intuitively, the number of memory states reflects the number of equivalence classes of histories that a policy can distinguish. %
In general, we cannot hope to find smaller policies than those obtained via a BOMDP.
\begin{restatable}{theorem}{exppolicy}\label{thm:exp_policy}
There is a family of MEMDPs $\{ \memdp^n \}_{n \geq 1}$ where for each $n$, $\memdp^n$ has $2n$ environments and $\mathcal{O}(n)$ states and where every winning policy for $\memdp^n$ requires at least $2^n$ memory states.
\end{restatable}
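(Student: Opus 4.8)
The plan is to construct a MEMDP where a winning policy must, in effect, remember a full subset of $\{1,\dots,n\}$, so that $2^n$ memory states are genuinely required. The high-level idea: use $2n$ environments grouped into $n$ pairs, say environment $\tuple{k,0}$ and $\tuple{k,1}$ for each $k \in [n]$. The MEMDP proceeds in two phases along a path of length $\mathcal{O}(n)$. In the \emph{query phase}, the agent visits a sequence of $n$ gadget states, one per index $k$; at the $k$-th gadget it is forced to take an action whose \emph{stochastic} outcome reveals, in each environment pair $k$, which of the two environments $\tuple{k,0},\tuple{k,1}$ it is in — but this information is delivered only through a branching whose branch is observable (the observation $\obsfun = \tuple{s,J}$ records the updated belief support), and crucially the agent cannot act on it yet because the relevant ``guessing'' action comes only in the second phase. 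So after the query phase the belief support has been narrowed within each pair, encoding one bit $b_k \in \{0,1\}$ per index, i.e. one of $2^n$ possible supports. In the \emph{answer phase}, the agent is led through $n$ further states, and at the $k$-th it must output an action matching $b_k$ (the bit learned in phase one); a wrong choice leads to the sink $\bad$ in the environments still under consideration. Almost-sure reachability then forces the policy to reproduce all $n$ bits, hence to distinguish all $2^n$ query-phase histories.

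The key steps, in order, are: (1) Give the explicit construction of $\memdp^n$ with $2n$ environments and $\mathcal{O}(n)$ states, specifying the two-phase layout and the transition functions $p_{\tuple{k,b}}$ so that (a) the query phase is forced and its branching narrows exactly the $k$-th pair, (b) the answer phase requires matching the learned bit, and (c) nothing in the query phase lets the agent ``commit'' early. (2) Verify a winning policy exists: the policy that records the $n$ observed branches and replays them in the answer phase reaches $\good$ with probability one in every environment — this is a direct check using \cref{thm:winning_memdp_bomdp}, since the corresponding memoryless policy on the BOMDP is winning (the belief support carries exactly the needed information). (3) Prove the lower bound: suppose a finite-state controller with fewer than $2^n$ memory states is winning. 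Since there are $2^n$ distinct query-phase histories (indexed by the branch vectors $\vec b \in \{0,1\}^n$) reachable with positive probability, two distinct vectors $\vec b \ne \vec b'$ lead the controller into the same memory state. Pick an index $k$ with $b_k \ne b_k'$; run the answer phase. The controller's action at the $k$-th answer state depends only on its memory state (and the identical answer-phase observations, which are the same along both continuations since the state component is the same), so it makes the same choice in both cases — but the correct choice differs, so in one of the two scenarios it errs in an environment that is still in the belief support, and with positive probability reaches $\bad$, contradicting almost-sure reachability. Hence at least $2^n$ memory states are needed.

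The main obstacle I expect is step (1): arranging the gadgets so that the branch in environment pair $k$ is \emph{observable} (it must be, since MEMDP/BOMDP observations include the belief support, and the support genuinely shrinks), while simultaneously ensuring the agent cannot exploit this knowledge until the answer phase, and that the answer-phase observations along the two continuations from a merged memory state are truly identical (so the merged controller is forced into the same action). This requires the query-phase narrowing to be ``latent'': the support changes, but the only action available in the remaining query-phase states is a neutral $\envact$-like action with identical dynamics across the still-live environments, so no progress toward $\good$ or $\bad$ is possible there. One must also double-check that making the branching stochastic (rather than adversarial) still forces the policy to handle every branch vector, which it does precisely because almost-sure reachability quantifies over all positive-probability paths. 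A secondary subtlety is bounding memory of a \emph{finite-state controller} rather than of an arbitrary policy; but since any winning policy can be taken memoryless on the BOMDP and the reachable BOMDP observations along these runs are exactly the $2^n$ distinct supports, the counting argument in step (3) applies to any finite representation.
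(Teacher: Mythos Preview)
Your proposal is correct and follows essentially the same approach as the paper: a two-phase construction where $n$ forced stochastic gadgets each rule out one environment from a pair (yielding $2^n$ distinct belief supports at the end of the query phase), followed by an answer phase where the policy must reproduce the learned bits, with the lower bound obtained by a pigeonhole argument on FSC memory states. The paper's answer phase differs only cosmetically---it uses $2n$ actions per state (guess any environment; the $n$ guesses must cover the $n$-element support) rather than your indexed binary choices---and one minor clarification: the branch in the query gadgets is observable simply because it lands in a distinct \emph{MEMDP state}, not because the belief support is part of the observation (MEMDP policies see only the state sequence).
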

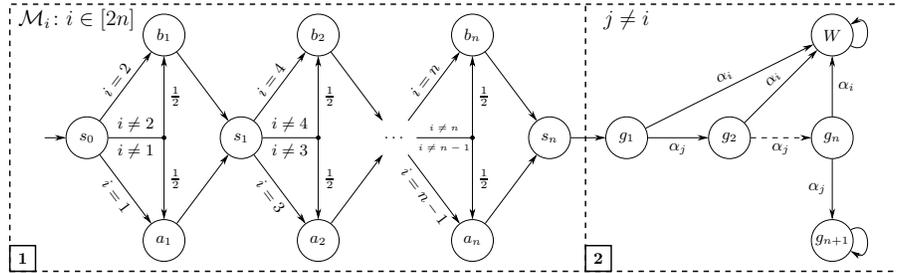
\begin{figure}[t]
  \centering
  \resizebox{\linewidth}{!}{
    \begin{tikzpicture}[
  >=stealth,
  initial text=$ $, %
  m/.style={ inner sep=0mm, outer sep=0mm },
  s/.style={ state, inner sep=0mm, minimum width=0.7cm },
  ]
  \node[s, initial] (s0) at (0,0) {$s_0$};
  \node[s] (a0) at (1.5,-2) {$a_1$};
  \node[s] (b0) at (1.5,2) {$b_1$};
  \node[m] (m0) at (1.5, 0) {};

  \draw[->] (s0) edge node[below, sloped] {$i=1$} (a0);
  \draw[->] (s0) edge node[above, sloped] {$i=2$} (b0);
  \draw (s0) -- node[below] {$i \ne 1$} node[above]{$i \ne 2$} (m0);
  \filldraw (m0) circle (1pt);
  \draw[->] (m0) edge node[right]{$\frac 12$} (a0);
  \draw[->] (m0) edge node[right]{$\frac 12$} (b0);

  \node[s] (s1) at (3,0) {$s_1$};
  \draw[->] (a0) edge (s1);
  \draw[->] (b0) edge (s1);

  \node[s] (a1) at (4.5,-2) {$a_2$};
  \node[s] (b1) at (4.5,2) {$b_2$};
  \draw[->] (s1) edge node[below,sloped] {$i=3$} (a1);
  \draw[->] (s1) edge node[above,sloped] {$i=4$}(b1);

  \node[m] (m1) at (4.5, 0) {};
  \draw (s1) --  node[below] {$i \ne 3$} node[above] {$i \ne 4$} (m1);
  \filldraw (m1) circle (1pt);
  \draw[->] (m1) edge node[right]{$\frac 12$} (a1);
  \draw[->] (m1) edge node[right]{$\frac 12$} (b1);

  \node[s, draw=none, fill=none] (ldots) at (6,0) {\ldots};
  \draw[->] (a1) edge (ldots);
  \draw[->] (b1) edge (ldots);

  \node[s] (sn2) at (9,0) {$s_n$};
  \node[s] (an2) at (7.5,-2) {$a_n$};
  \node[s] (bn2) at (7.5,2) {$b_n$};
  \node[m] (mn2) at (7.5, 0) {};

  \draw[->] (ldots) edge node[below, sloped] {$i=n-1$} (an2);
  \draw[->] (ldots) edge node[above, sloped] {$i=n$} (bn2);
  \filldraw (mn2) circle (1pt) ;
  \draw (ldots) -- node[below] {\tiny$i \ne n-1$} node[above] {\tiny$i \ne n$} (mn2) ;
  \draw[->] (mn2) edge node[right]{$\frac 12$} (an2);
  \draw[->] (mn2) edge node[right]{$\frac 12$} (bn2);

  \draw[->] (an2) edge (sn2);
  \draw[->] (bn2) edge (sn2);

  \node[s] (g1) at (10.5,0) {$g_1$};
  \draw[->] (sn2) edge (g1);
  \node[s] (W) at (14.5,2) {$W$};

  \node[s] (g2) at (12.5,0) {$g_2$};

  \node[s] (gn2) at (14.5,0) {$g_n$};

  \draw[->] (g1) edge node[above, sloped] {$\alpha_i$} (W) ;
  \draw[->] (g2) edge node[above, sloped] {$\alpha_i$} (W);
  \draw[->] (gn2) edge node[right] {$\alpha_i$} (W);

  \node[s] (F) at (14.5,-2) {$g_{n+1}$};
  \draw[->] (gn2) edge node[left] {$\alpha_j$} (F);
  \draw[->] (g1) edge node[below] {$\alpha_j$} (g2);
  \draw[->, dashed] (g2) edge node[below] {$\alpha_j$} (gn2);

  \node (M) at (-0.2, 2.3) {\large$\mathcal M_i\colon i \in [2n]$};
  \node (X) at (10.5, 2.3) {\large $j \neq i$};

  \draw[->] (W) to[out=-30, in=30, looseness=3] (W);
  \draw[->] (F) to[out=30, in=-30, looseness=3] (F);

  \draw[thick, dashed] (-1.5,-2.6) rectangle (16, 2.6);
  \draw[thick, dashed] (9.7, -2.6) -- (9.7, 2.6);

  \draw[thick] (-1.5,-2.6) rectangle node{\textbf{1}} (-1,-2.1);
  \draw[thick] (9.7,-2.6) rectangle node{\textbf{2}} (10.2,-2.1);
\end{tikzpicture}
  }
  \caption{Witness for exponential memory requirement for winning policies.}
  \label{fig:example_exp_policy}
\end{figure}
\noindent We illustrate the witness. 
Consider a family of MEMDPs $\{\memdp^n\}_n$, where $\memdp^n$ has $2n$ MDPs, $4n$ states partitioned into two parts, and at most $2n$ outgoing actions per state. We outline the MEMDP family in~\cref{fig:example_exp_policy}.
In the first part, there is only one action per state. The notation is as follows: in state $s_0$ and MDP $\memdp^n_1$, we transition with probability one to state $a_0$, whereas in $\memdp^n_2$ we transition with probability one to state $b_0$. 
In every other MDP, we transition with probability one half to either state. 
In state $s_1$, we do the analogous construction for environments $3$, $4$, and all others. 
A path $s_0b_1 \hdots$ is thus consistent with every MDP except $\memdp^n_1$.
The first part ends in state $s_n$.
By construction, there are $2^n$ paths ending in  $s_n$. 
Each of them is (in)consistent with a unique set of $n$ environments.
In the second part, a policy may guess $n$ times an environment  by selecting an action $\alpha_i$ for every $i \in [2n]$. 
Only in MDP $\memdp^n_i$, action $\alpha_i$ leads to a target state. 
In all other MDPs, the transition leads from state $g_j$ to $g_{j+1}$. The state $g_{n+1}$ is absorbing in all MDPs. 
Importantly, after taking an action $\alpha_i$ and arriving in $g_{j+1}$, there is (at most) one more MDP inconsistent with the path.

Every MEMDP $\memdp^n$ in this family has a winning policy which takes $\sigma(\pi \cdot g_i) = \alpha_{2i-1}$ if $a_{i} \in \pi$ and $\sigma(\pi \cdot g_i) = \alpha_{2i}$ otherwise. %
 Furthermore, when arriving in state $s_n$, the state of a finite memory controller must reflect the precise set of environments consistent with the history. 
There are $2^{n}$ such sets.
The proof shows that if we store less information, two paths will lead to the same memory state, but with different sets of environments being consistent with these paths. 
 As we can rule out only $n$ environments using the $n$ actions in the second part of the MEMDP, we cannot ensure winning in every environment.
\section{A Partial Game Exploration Algorithm}
\label{sec:algorithm}

In this section, we present an algorithm for the policy problem.
We tune the algorithm towards runtime instead of memory complexity, but aim to avoid running out of memory. 
We use several key ingredients to create a pragmatic variation of Alg.~\ref{alg:search_algorithm}, with support for extracting the winning policy.

First, we use an abstraction from BOMDPs to a belief stochastic game (BSG) similar to~\cite{DBLP:journals/tac/WintererJWJTKB21} that reduces the number of states and simplifies the iterative construction\footnote{At the time of writing, we were unaware of a polytime algorithm for BOMDPs.}.
Second, we tailor and generalize ideas from \emph{bounded model checking}~\cite{DBLP:journals/ac/BiereCCSZ03} to build and model check only a fragment of the BSG, using explicit \emph{partial exploration} approaches as in, e.g.,~\cite{DBLP:conf/icml/McMahanLG05,DBLP:conf/atva/BrazdilCCFKKPU14,DBLP:journals/tii/VolkJK18,DBLP:journals/lmcs/KretinskyM20}. 
Third, our exploration does not continuously extend the fragment, but can also prune this fragment by using the model checking results obtained so far. 
The structure of the BSG as captured by the environment graph makes the approach promising and yields some natural heuristics.
Fourth, the structure of the winning region allows to generalize results to unseen states. 
We thereby operationalize an idea from~\cite{DBLP:conf/cav/JungesJS21} in a partial exploration context. 
Finally, we analyze individual MDPs as an efficient and significant preprocessing step. 
In the following we discuss these ingredients.

\subsubsection{Abstraction to Belief Support Games.}
\label{sec:beliefsupportgames}
We briefly recap stochastic games (SGs). See~\cite{SGs,DBLP:conf/csl/ChatterjeeJH03} for more details.
\begin{definition}[SG]
  A \emph{stochastic game} is a tuple $\game = \tuple{\mdp, S_{\pa}, S_{\pb}}$, where $\mdp = \tuple{S, A, \initdist, p}$ is an MDP and $(S_{\pa}, S_{\pb})$ is a partition of $S$.
\end{definition}
$S_{\pa}$ are Player~1 states, and $S_{\pb}$ are Player~2 states.
As common, we also `partition' (memoryless deterministic) policies into two functions ${\sigma_{\pa} \colon S_{\pa} \rightarrow A}$ and ${\sigma_{\pa} \colon S_{\pb} \rightarrow A}$. 
A Player 1 policy $\sigma_{\pa}$ is winning for state $s$ if $\Pr(T \mid \sigma_\pa, \sigma_\pb)$ for all $\sigma_\pb$. We (re)use $\win{\bsg}{T}$ to denote the set of states with a winning policy.

We apply a game-based abstraction to group states that have the same observation. 
Player~1 states capture the observation in the BOMDP, i.e., tuples $\tuple{s, J}$ of MEMDP states $s$ and subsets $J$ of the environments. Player~1 selects the action $a$, the result is Player~2 state $\tuple{\tuple{s, J}, a}$. Then Player~2 chooses an environment $j\in J$, and the game mimics the outgoing transition from $\tuple{s, j, J}$, i.e., it mimics the transition from $s$ in $\memdp_j$. Formally:
\begin{definition}[BSG]
  Let $\bomdp$ be a BOMDP with $\bomdp = \tuple{\tuple{\stateset, \actionset, \initdist, p}, \observations, \obsfun}$.
  A \emph{belief support game} $\bsg$ for $\bomdp$ is an SG $\bsg = \tuple{\tuple{S', A', \initdist', p}, S_{\pa}, S_{\pb}}$ with $S' = S_{\pa} \cup S_{\pb}$ as usual, Player 1 states~$S_{\pa} = \observations$,  Player~2 states $S_\pb = Z \times A$, actions $A' = A \cup I$, initial distribution $\initdist'(\tuple{s, I}) = \sum_{i \in I} \initdist(\tuple{s, i, I})$, and the (partial) transition function $p$ defined separately for Player~1 and~2:
  \begin{align*}
    p'(\obs, a) &= \dirac({\tuple{\obs, a}}) &\text{(Player~1)}\\ %
    p'(\tuple{\obs, a}, j, \obs') &= p(\tuple{s, j, J}, a, \tuple{s', j, J'}) \text{ with }\obs = \tuple{s,J}, \obs' = \tuple{s',J'} &\text{(Player~2)}
  \end{align*}%
\end{definition}%
\begin{restatable}{lemma}{winningmemdpgame} \label{thm:winning_memdp_game}
An (acyclic) MEMDP $\memdp$ with target states $T$ is winning if(f) there exists a winning policy in the BSG $\bsg$ with target states $T_\observations$.\end{restatable}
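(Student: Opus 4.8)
The plan is to prove the equivalence by factoring through the BOMDP $\bomdp$: \Cref{thm:winning_memdp_bomdp} already ties winning in $\memdp$ to winning in $\bomdp$, so it suffices to relate winning in $\bomdp$ to winning in the BSG $\bsg$. I will use two structural facts repeatedly. First, a memoryless observation-based policy $\sigma_1\colon \observations \to A$ of $\bomdp$ is literally the same object as a memoryless deterministic Player~1 policy of $\bsg$, and such a $\sigma_1$ induces the MEMDP policy $\hat\sigma(\pi) = \sigma_1(\tuple{\last(\pi), \pathbelief(\pi)})$. Second, Player~2's environment choice in $\bsg$ leaves the belief component of an observation untouched (it only fixes which transition probabilities are used), so along any play of $\bsg$ the sequence of beliefs obeys exactly the same update rule as along a path of $\bomdp$; moreover whenever Player~2 picks some $j$ from the current belief, $j$ remains in the belief after that step, hence the current belief always contains at least the environment most recently chosen by Player~2.

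For the direction ``winning in $\bsg$ implies $\memdp$ is winning'' (which does not need acyclicity), I would take a winning Player~1 policy $\sigma_1$ and, for each environment $i$, consider the Player~2 policy $\tau_i$ that always plays $i$ as long as $i$ lies in the current belief. Because every transition $\tau_i$ takes has positive probability in $\memdp_i$, the environment $i$ is never dropped from the belief, so $\tau_i$ is well defined on all reachable Player~2 states, and the Markov chain induced by $(\sigma_1, \tau_i)$ on $\bsg$ is isomorphic to the fragment of the Markov chain induced by $\sigma_1$ on $\bomdp$ consisting of states of the form $\tuple{s,i,J}$, with $T_\observations$ matching $T$ in $\memdp_i$. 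Since $\sigma_1$ is winning, that chain reaches $T_\observations$ almost surely, hence $\Pr_{\memdp_i}(T \mid \hat\sigma) = 1$; as $i$ is arbitrary, $\sigma_1$ is winning in $\bomdp$, and \Cref{thm:winning_memdp_bomdp} gives a winning policy of $\memdp$.

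For the converse ``$\memdp$ acyclic and winning implies winning in $\bsg$'', I would start from \Cref{thm:winning_memdp_bomdp} to get a winning policy in $\bomdp$ and then use that almost-sure reachability in POMDPs admits a \emph{deterministic memoryless} observation-based winning policy~\cite{DBLP:journals/ai/ChatterjeeCGK16} (in the same spirit as \Cref{lem:acyc_det_suffices}) to obtain such a $\sigma_1\colon \observations \to A$. I claim $\sigma_1$ is winning for Player~1 in $\bsg$. Suppose not: some Player~2 policy $\sigma_2$ makes the play under $(\sigma_1, \sigma_2)$ avoid $T_\observations$ with positive probability. Since $\memdp$, and therefore $\bomdp$ and $\bsg$, is acyclic, every play almost surely reaches an absorbing observation, so with positive probability the play reaches a \emph{non-target} absorbing observation $\tuple{s_n, J_n}$ along a finite prefix. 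Projecting this prefix yields a MEMDP path $s_0 a_0 s_1 \cdots s_n$ with $a_k = \sigma_1(\tuple{s_k, \pathbelief(s_0 a_0 \cdots s_k)})$; by the belief bookkeeping above $J_n$ contains Player~2's last choice, so $J_n \neq \emptyset$, and for any $i \in J_n$ this path exists in $\memdp_i$ (as in \Cref{lem:paths}), with $s_n$ absorbing and non-target in $\memdp_i$. This path has positive probability under $\hat\sigma$ in $\memdp_i$ and never reaches $T$, contradicting that $\hat\sigma$ is winning.

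I expect the converse direction --- and precisely the place where acyclicity enters --- to be the main obstacle. Without acyclicity, the play under $(\sigma_1,\sigma_2)$ could instead get trapped in a bottom SCC consisting of several observations: each individual environment would, under $\hat\sigma$, eventually escape that SCC towards $T$, but Player~2 can keep switching to whichever environment does not escape right now, so no finite prefix is consistent with a single environment, and the equivalence genuinely degrades to a one-way implication. Acyclicity rules this out by forcing bottom SCCs to be single absorbing states, which is exactly what makes the finite-witness-path argument go through. The remaining work --- verifying the isomorphism between the observation-Markov-chains of $\bsg$ under a fixed Player~2 policy and the belief-indexed fragments of the Markov chain of $\bomdp$, and that the belief computed along $\bsg$-plays coincides with $\pathbelief$ on the projected path --- is routine bookkeeping on the two constructions.
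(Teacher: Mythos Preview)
Your proposal is correct and uses the same core argument as the paper: for the ``if'' direction, fixing Player~2 to a single environment $i$ recovers $\memdp_i$; for the converse under acyclicity, a losing play in $\bsg$ is necessarily finite and, by monotonicity of the belief, yields a losing path in $\memdp_i$ for any $i$ in the final belief. The only structural difference is that you factor everything through the BOMDP via \Cref{thm:winning_memdp_bomdp}, whereas the paper relates the MEMDP policy and the BSG Player~1 policy directly; this detour is harmless but adds nothing, since the ``fragment of the BOMDP Markov chain with environment index $i$'' is literally $\memdp_i$ with belief annotations. One minor imprecision: the result you cite from~\cite{DBLP:journals/ai/ChatterjeeCGK16} yields \emph{randomized} memoryless observation-based policies, not deterministic ones. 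Your argument does not actually need determinism (a randomized $\sigma_1$ still produces a positive-probability losing path in $\memdp_i$), so this does not break the proof, but if you want determinism it is the acyclicity argument behind \Cref{lem:acyc_det_suffices} that supplies it, not the general POMDP literature.
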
%
\noindent 
Thus, on acyclic MEMDPs, a BSG-based algorithm is sound and complete, however, on cyclic MDPs, it may not find the winning policy. The remainder of the algorithm is formulated on the BSG, we use sliced BSGs as the BSG of a sliced BOMDP, or equivalently, as a BSG with some states made absorbing.

\subsubsection{Main algorithm. }
\begin{algorithm}[t]
\begin{algorithmic}[1]
\Function{FindPolicy}{MEMDP $\memdp = \tuple{S, A, {\{p_i\}}_{i \in I}, \initdist}$, targets $T \subseteq S$}
    \State $\wintmp \assign \{\tuple{s, J} \mid s \in T, J \subseteq I\}$; $\notwin \assign \emptyset$; $i \assign 1$; $S_{\text{init}} \assign \supp(\initdist) \times \{I\}$
    \While{$S_{\text{init}} \cap \wintmp \neq \wintmp \text{ and } S_{\text{init}} \cap \notwin = \emptyset$}
        \State $\tuple{\game, F} \assign \text{GenerateGameSlice}(\memdp, \wintmp, \notwin, i)$
        \State $\wintmp \assign \wintmp \cup \win{\game}{W}$
        \State $\notwin \assign \notwin \cup S \setminus \win{\game}{ W \cup F}$
        \State $i \assign i + 1$
    \EndWhile

    \IfThenElse{$S_{\text{init}} \subseteq \wintmp$}%
    {\Return $\text{ExtractPolicy}(\wintmp)$}
    {\Return $\bot$}
\EndFunction
\end{algorithmic}
\caption{Policy finding algorithm}
\label{alg:find_policy}
\end{algorithm}
We outline~\cref{alg:find_policy} for the \emph{policy problem}.
We track the sets of almost-sure observations  and losing observations (states in the BSG).
Initially, target states are winning. 
Furthermore, via a simple preprocessing, we determine some winning and losing states on the individual MDPs.

We iterate until the initial state is winning or losing.
Our algorithm constructs a sliced BSG and decides \emph{on-the-fly} whether a state should be a frontier state, returning the sliced BSG and the used frontier states. 
We discuss the implementation below. 
 For the sliced BSG, we compute the winning region twice: Once assuming that the frontier states are winning, once assuming they are loosing.
 This yields an approximation of the winning and losing states, see~\cref{lem:winningfrontier}.
 From the winning states, we can extract a randomized winning policy~\cite{DBLP:journals/ai/ChatterjeeCGK16}.

\paragraph{Soundness.}
Assuming that the $\bsg$ is indeed a sliced BSG with frontier $F$. Then the following invariant holds:
$
 W \subseteq \win{\bsg}{T} \text{ and } L \cap \win{\bsg}{T} = \emptyset. 
 $
This invariant exploits that from a sliced BSG we can (implicitly) slice the complete BSG while preserving the winning status of every state, formalized below. 
In future iterations we only explore the implicitly sliced BSG. 
\begin{restatable}{lemma}{implicitslicing}\label{lem:implicitslicing}
	Given $W \subseteq \win{\bsg}{T_{\bsg}}$ and $L \subseteq S \setminus \win{\bsg}{T_{\bsg}}$:
  $
    \win{\bsg}{T_{\bsg}} = \win{\cut{\bsg}{W \cup L}}{T_{\bsg} \cup W}
  $
\end{restatable}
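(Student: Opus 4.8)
The plan is to prove the two inclusions between $\win{\bsg}{T_{\bsg}}$ and $\win{\cut{\bsg}{W \cup L}}{T_{\bsg} \cup W}$ separately, using the hypotheses $W \subseteq \win{\bsg}{T_{\bsg}}$ and $L \cap \win{\bsg}{T_{\bsg}} = \emptyset$ to show that slicing at $W \cup L$ does not change the winning status of any state. First I would observe that \Cref{lem:winningfrontier} (phrased here for BSGs via the game version of a sliced BOMDP) already gives, for the frontier $F' = W \cup L$, the chain $\win{\cut{\bsg}{W\cup L}}{T_{\bsg}} \subseteq \win{\bsg}{T_{\bsg}} \subseteq \win{\cut{\bsg}{W \cup L}}{T_{\bsg} \cup W \cup L}$. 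So the task reduces to (i) upgrading the left inclusion to allow $T_{\bsg}\cup W$ as targets, and (ii) sharpening the right inclusion to drop $L$ from the target set.

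For (i), i.e.\ $\win{\cut{\bsg}{W \cup L}}{T_{\bsg} \cup W} \subseteq \win{\bsg}{T_{\bsg}}$: take a Player~1 policy $\sigma_\pa$ that is winning in $\cut{\bsg}{W\cup L}$ for targets $T_{\bsg}\cup W$ from some state $s$. Because $W \subseteq \win{\bsg}{T_{\bsg}}$ and winning regions in a BSG are policy-independent in the relevant sense (for almost-sure/safety objectives, any family of locally winning policies can be combined into one winning policy, as noted after \Cref{thm:winning_memdp_bomdp}), fix a single Player~1 policy $\tau$ that is winning from every state of $W$ in the unsliced $\bsg$. Define $\hat\sigma_\pa$ to follow $\sigma_\pa$ until a state of $W$ is first reached, and to switch to $\tau$ thereafter; on states of $L$ the behaviour is irrelevant since, as argued next, they are never reached. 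Under $\hat\sigma_\pa$ in $\bsg$, against any Player~2 policy, every path first behaves exactly as in $\cut{\bsg}{W\cup L}$ under $\sigma_\pa$ (the transition functions of $\bsg$ and $\cut{\bsg}{W\cup L}$ agree on all non-frontier states); hence with probability one such a path reaches $T_{\bsg} \cup W$ before visiting any frontier state, so in particular it never enters $L$ (if it did, it would have reached a frontier state that is in $L$, not in $T_{\bsg}\cup W$, contradicting the winning assumption — here I use that $W, L, T_{\bsg}$ can be taken disjoint, after removing $W\cap T_{\bsg}$ from $W$, which does not change anything). Once in $W$, policy $\tau$ almost-surely reaches $T_{\bsg}$. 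Concatenating, $\hat\sigma_\pa$ almost-surely reaches $T_{\bsg}$ from $s$ in $\bsg$, so $s \in \win{\bsg}{T_{\bsg}}$.

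For (ii), i.e.\ $\win{\bsg}{T_{\bsg}} \subseteq \win{\cut{\bsg}{W \cup L}}{T_{\bsg} \cup W}$: let $\sigma_\pa$ be winning from $s$ in $\bsg$ for $T_{\bsg}$. I claim the \emph{same} policy, read as a policy on $\cut{\bsg}{W\cup L}$, is winning from $s$ for targets $T_{\bsg}\cup W$. Under $\sigma_\pa$ in $\bsg$ no path from $s$ ever visits $L$, since $L \cap \win{\bsg}{T_{\bsg}} = \emptyset$ and winning is closed under reachability (a successor of a winning state under a winning policy is winning). Therefore, along every path, the first frontier state encountered — if any — lies in $W$, and until that point the dynamics of $\bsg$ and $\cut{\bsg}{W\cup L}$ coincide; a path that never hits a frontier state reaches $T_{\bsg}$ with the same probability in both games. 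Hence in $\cut{\bsg}{W\cup L}$ the policy almost-surely reaches $T_{\bsg}\cup W$ from $s$. Combining (i) and (ii) yields the claimed equality. The main obstacle is the bookkeeping in step~(i): carefully arguing that the $L$-states are genuinely unreachable under the stitched policy (so that how $\hat\sigma_\pa$ is defined on them is immaterial) and that stitching two almost-sure-winning policies at the first visit to $W$ preserves the almost-sure guarantee against an adversarial Player~2; this is where I would invoke the policy-combination property for almost-sure objectives in games from~\cite{DBLP:journals/ai/ChatterjeeCGK16} most carefully.
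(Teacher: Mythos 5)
Your proposal is correct and follows essentially the same route as the paper: the inclusion $\win{\bsg}{T_{\bsg}} \subseteq \win{\cut{\bsg}{W \cup L}}{T_{\bsg} \cup W}$ is shown by reusing the winning policy and arguing that $L$-states are never visited (since a winning policy cannot leave the winning region), and the converse inclusion by stitching the sliced-game policy with a policy that wins from every state of $W$ in the full game, exactly as in the paper's appeal to the argument of \cref{lem:winninglocally}. Your extra bookkeeping about disjointness of $W$, $L$, $T_{\bsg}$ and the measurability of the concatenation is a harmless refinement of the same argument.
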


\noindent
\emph{Termination} depends on the sliced game generation. 
It suffices to ensure that in the long run, either $W$ or $L$ grow as there are only finitely many states. 
If $W$ and $L$ remain the same longer than some number of iterations, $W \cup L$ will be used as frontier.
Then, the new game will suffice to determine if $s \in W$ in one shot.

\begin{algorithm}[t]
\begin{algorithmic}[1]
\Function{GenerateGameSlice}{MEMDP $\memdp$, $W$, $L$, $i$}
    \State $Q \assign \{s_\iota \}$; $E = \{ s_\iota \}$
    \While{$s \in Q$ and $|E| \leq \texttt{Bound}[i]$ exists}
        \State $E \assign E \cup \{ s \}$ \Comment{Mark $s$ as explored}
        \State $\game \assign \cut{\bsg}{(S \setminus E)}$ \label{line:gameconstruction}\Comment{Extend game, cut-off everything not  explored}
        \State $Q \assign \reachable{\game} \setminus (E \cup W \cup L)$ \Comment{Add newly reached states}     
    \EndWhile
    \State \Return $\game, Q$
\EndFunction
\end{algorithmic}
\caption{Game generation algorithm}
\label{alg:game_generation}
\end{algorithm}

\subsubsection{Generating the sliced BSG.}
Algorithm~\ref{alg:game_generation} outlines the generation of the sliced BSG. In particular, we explore the implicit BSG from the initial state but make every state that we do not explicitly explore absorbing. In every iteration, we first check if there are states in $Q$ left to explore and if the number of explored states in $E$ is below a threshold $\textsf{Bound}[i]$.
Then, we take a state from the priority queue and add it to $E$. We find new reachable states\footnote{In l.~\ref{line:gameconstruction} we do not rebuild the game $\game$ from scratch but incrementally construct the data structures. Likewise, reachable states are a direct byproduct of this construction.} and add them to the queue~$Q$.

\subsubsection{Generalizing the winning and losing states. }
We aim to determine that a state in the game $\bsg$ is winning without ever exploring it. First, observe:
\begin{lemma}\label{lemma:winning_losing_inclusion}
	A winning policy in MEMDP $\memdp$ is winning in $\restrictenv{\memdp}{J}$ for any $J$.
\end{lemma}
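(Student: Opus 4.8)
The plan is to unfold the definitions on both sides and exhibit the obvious identity map between policies, checking that winning is preserved. A winning policy $\sigma$ for $\memdp$ satisfies ${\Pr}_{\memdp_i}(T \mid \sigma) = 1$ for every $i \in I$. Now fix $J \subseteq I$. The restricted MEMDP $\restrictenv{\memdp}{J}$ has exactly the same states and actions, the same initial distribution, and transition functions $\{p_i\}_{i \in J}$ — i.e., it simply drops the environments outside $J$. Since $\sigma$ is a function on paths (or on $S$, if memoryless), it is equally a policy for $\restrictenv{\memdp}{J}$. So the first step is just to observe that the policy carries over verbatim.

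**Next I would** argue the winning condition. For each $i \in J$, the MDP $(\restrictenv{\memdp}{J})_i = \tuple{S, A, \initdist, p_i}$ is literally identical to $\memdp_i = \tuple{S, A, \initdist, p_i}$ — the restriction only removes other environments' transition functions and does not touch $p_i$. Hence the induced Markov chains under $\sigma$ coincide, and ${\Pr}_{(\restrictenv{\memdp}{J})_i}(T \mid \sigma) = {\Pr}_{\memdp_i}(T \mid \sigma) = 1$. Since this holds for all $i \in J$, the policy $\sigma$ satisfies the almost-sure reachability property $T$ for $\restrictenv{\memdp}{J}$ by definition, i.e., $\sigma$ is winning in $\restrictenv{\memdp}{J}$.

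**There is essentially no obstacle here:** the statement is immediate from the fact that restriction of environments is a purely syntactic projection that preserves each retained environment's MDP exactly. The only thing to be slightly careful about is the edge case $J = \emptyset$, where the almost-sure condition $\forall i \in J$ is vacuously true, so every policy (including $\sigma$) is trivially winning — consistent with the claim. The lemma is used in the algorithm to justify that winning/losing verdicts computed on local (restricted) BOMDPs/BSGs transfer to the full model; formally it is a one-line consequence of Definition of MEMDP restriction together with the definition of almost-sure reachability, and no monotonicity or belief-tracking machinery is needed.
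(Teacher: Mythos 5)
Your proof is correct and matches the paper's treatment: the paper states this lemma without proof precisely because, as you observe, restricting the environment set to $J \subseteq I$ leaves each retained MDP $\memdp_i$ untouched, so the universally quantified winning condition over $I$ immediately implies it over $J$. No further comment is needed.
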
  
A direct consequence is the following statement for two environments $J_1 \subseteq J_2$:
\[ \tuple{s, J_{2}} \in \win{\bsg}{T} \quad\text{implies}\quad \tuple{s, J_{1}}  \in \win{\bsg}{T} . \]
Consequently, we can store $W$ (and symmetrically, $L$) as follows. For every MEMDP state $s \in S$, $W_s = \{ J \mid \tuple{s, J} \in W\}$ is downward closed on the partial order $P=(I, \subset)$. This allows for efficient storage: We only have to store the set of pairwise maximal elements, i.e., the antichain,
\[ W_s^{\max} = \{ J \in W_s \mid \forall J' \in W_s \text{ with } J \not\subseteq J' \}.\] 
To determine whether $\tuple{s,J}$ is winning, we check whether $J \subseteq J'$ for some $J' \in W_s^{\max}$. Adding $J$ to $W_s^{\max}$ requires removing all $J' \subseteq J$ and then adding $J$. Note, however, that $|W_s^{\max}|$ is still exponential in $|I|$ in the worst case.

\subsubsection{Selection of heuristics. }
The algorithm allows some degrees of freedom.
We evaluate the following aspects empirically.
\begin{inparaenum}[(1)]
\item 
The maximal size $\texttt{bound}[i]$ of a sliced BSG at iteration $i$ is critical. 
If it is too small, the sets $W$ and $L$ will grow slowly in every iteration. 
The trade-off is further complicated by the fact that the sets $W$ and $L$ may generalize to unseen states.
\item 
For a fixed $\texttt{bound}[i]$, it is unclear how to prioritize the exploration of states.
The PSPACE algorithm suggests that going deep is good, whereas the potential for generalization to unseen states is largest when going broad. 
\item 
Finally, there is overhead in computing both $W$ and $L$. 
If there is a winning policy, we only need to compute $W$. 
However, computing $L$ may ensure that we can prune parts of the state space. 
A similar observation holds for computing $W$ on unsatisfiable instances.
\end{inparaenum}

\begin{remark}
Algorithm~\ref{alg:find_policy} can be mildly tweaked to meet the PSPACE algorithm in Algorithm~\ref{alg:search_algorithm}.
The priority queue must ensure to always include complete (reachable) local BSGs and to explore states $\tuple{s, J}$ with small $J$ first. 
Furthermore, $W$ and $L$ require regular pruning, and we cannot extract a policy if we prune $W$ to a polynomial size bound. 
Practically, we may write pruned parts of $W$ to disk. 
\end{remark}

\section{Experiments}%
\label{sec:implementation_experiments}

We highlight two aspects: (1) A comparison of our prototype to existing baselines for POMDPs, and (2) an examination of the exploration heuristics. The technical report~\cite{technicalreport} contains details on the implementation, the benchmarks, and more results.

\paragraph{Implementation.}
We provide a novel \emph{PArtial Game Exploration} (\PaGE{}) prototype, based on \cref{alg:find_policy}, on top of the probabilistic model checker \storm~\cite{DBLP:journals/sttt/HenselJKQV22}.
 We represent MEMDPs using the \prism language with integer constants. 
 Every assignment to these constants induces an explicit MDP. 
SGs are constructed and solved using existing data structures and graph algorithms.

\paragraph{Setup.}
We create a set of benchmarks inspired by the POMDP and MEMDP literature~\cite{DBLP:conf/cav/JungesJS21,DBLP:conf/aaai/ChatterjeeCD16,DBLP:conf/tacas/HartmannsKPQR19}. 
We consider a combination of satisfiable and unsatisfiable benchmarks.
In the latter case, a winning policy does not exist.
We construct POMDPs from MEMDPs as in \cref{def:unionpomdp}. 
As baselines, we use the following two existing POMDP algorithms. 
For almost-sure properties, a \emph{belief-MDP construction}~\cite{DBLP:conf/atva/BorkJKQ20} acts similar to an efficiently engineered variant of our game-construction, but tailored towards more general quantitative properties. A \emph{SAT-based approach}~\cite{DBLP:conf/cav/JungesJS21} aims to find increasingly larger policies. 
We evaluate all benchmarks on a system with a 3GHz Intel Core i9-10980XE processor. 
We use a time limit of 30 minutes and a memory limit of 32 GB.

\paragraph{Results.}
\begin{figure}[t]
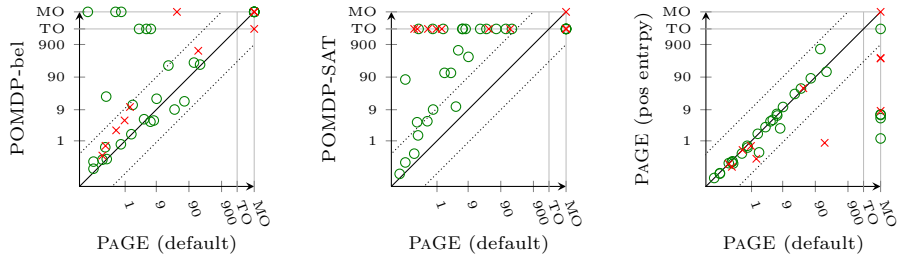

  \centering
  \makebox[\textwidth][c]{
  \subfigure {
    \benchmark{heuristicnentropylowerupper}{\PaGE{} (default)}{pomdpbelief}{POMDP-bel}
  }%
  \subfigure {
    \benchmark{heuristicnentropylowerupper}{\PaGE{} (default)}{pomdpsat}{POMDP-SAT}
  }%
  \subfigure {
    \benchmark{heuristicnentropylowerupper}{\PaGE{} (default)}{heuristicentropylowerupper}{\PaGE{} (pos entrpy)}
  }%
  }
  \vspace{-1.4em}
  \caption{Performance of baselines and novel \PaGE{} algorithm}\label{fig:scatter_plots}
  \vspace{-1.4em}
\end{figure}%
\cref{fig:scatter_plots} shows the (log scale) performance comparisons between different configurations\footnote{Every point $\tuple{x,y}$ in the graph reflects a benchmarks which was solved by the configuration on the x-axis in x time and by the configuration on the y-axis in y time. Points above the diagonal are thus faster for the configuration on the x-axis.}.
Green circles reflect satisfiable and red crosses unsatisfiable benchmarks. 
On the x-axis is \PaGE{} in its default configuration.
The first plot compares to the belief-MDP construction. 
The tailored heuristics and representation of the belief-support give a significant edge in almost all cases. 
The few points below the line are due to a higher exploration rate when building the state space. 
The second plot compares to the SAT-based approach, which is only suitable for finding policies, not for disproving their existence. 
This approach implicitly searches for a particular class of policies, whose structure is not appropriate for some MEMDPs.
The third plot compares \PaGE{} in the default configuration -- with negative entropy as priority function -- with \PaGE{} using positive entropy. As expected, different priorities have a significant impact on the performance. %

\newcommand{\mo}{MO}
\begin{table}[t]
  \centering
  \caption{Satisfiable and unsatisfiable benchmark results}
  \resizebox{0.52\linewidth}{!}{
  \begin{tabular}{lrrR{0.5cm}|R{1.0cm}R{1.15cm}R{1.0cm}R{1.0cm}|rr}
    \toprule
    &&&& \multicolumn{2}{c}{\PaGE (posentr)} & \multicolumn{2}{c}{\PaGE (negentr)} & \thead{Belief} & \thead{SAT}\\\ %
    & $|I|$ & $|S|$ & $|A|$ & \thead{t} & \thead{n} & \thead{t} & \thead{n} & \thead{t} & \thead{t}\\
    \midrule
    \multirow{3}{*}{\rotatebox{90}{Grid}} %
&19&132&4 &0.2 & 3002&0.2 & 3002&0.6 &3.7 \\
&39&152&4 &0.4 & 9007&1.6 & 41029&12.6 &121.3 \\
&199&474&4&6.4 & 337177&MO& &MO&TO\\
    \cmidrule(lr){1-10}
    \multirow{3}{*}{\rotatebox{90}{Catch}} %
&256&625&4  &6.6 & 93614&5.9 & 41094&3.8 &TO\\
&256&6561&4 &40.1 & 749295&32.6 & 337899&9.1 &TO\\
&256&14641&4&82.5 & 1826922&65.3 & 338079&16.2 &TO\\
    \cmidrule(lr){1-10}
    \multirow{3}{*}{\rotatebox{90}{Exp}} %
&8&19&9&0.1 & 349&0.1 & 349&0.1 &75.9 \\
&20&43&21&131.4 & 192163&197.6 & 448443&217.6 &TO\\
&24&51&25&TO& &MO& &MO&TO\\
    \cmidrule(lr){1-10}
    \multirow{4}{*}{\rotatebox{90}{Frogger}} %
&10&1200&4&0.2 & 1200&0.2 & 1200&22.7 &1.4 \\
&20&1200&4&0.4 & 1200&0.5 & 1200&MO&3.9 \\
&80&4000&4&4.4 & 4000&4.4 & 4000&TO&597.3 \\
&99&4000&4&5.9 & 8001&6.1 & 8001&TO&TO\\
    \bottomrule
  \end{tabular}
  }
  \resizebox{0.47\linewidth}{!}{
  \begin{tabular}{lrrr|R{1.0cm}R{1.0cm}R{1.0cm}R{1.0cm}|R{1.0cm}}
    \toprule
    &&&& \multicolumn{2}{c}{\PaGE (posentr)} & \multicolumn{2}{c}{\PaGE (negentr)} & \thead{Belief}\\
    &$|I|$&$|S|$&$|A|$& \thead{t} & \thead{n} & \thead{t} & \thead{n} & \thead{t} \\
    \midrule
    \multirow{4}{*}{\rotatebox{90}{MMind}} %
&16&21&16&0.1& 1003&0.2& 1445&0.3\\
&27&17&27&0.5& 5167&0.5& 7579&2.0\\
&32&25&32&0.6& 7799&0.9& 11809&4.2\\
&81&21&81&41.1& 170291&38.6& 296407&MO\\

    \cmidrule(lr){1-9}
    \multirow{3}{*}{\rotatebox{90}{Exp}} %
&20&42&21&0.8& 9005&173.8& 388127&576.1\\
&24&50&25&8.3& 41022&MO& &MO\\
&32&66&33&347.7& 337177&MO& &MO\\
    \bottomrule
  \end{tabular}
  }
  \label{table:benchmarks}
\end{table}%

\cref{table:benchmarks} shows an overview of satisfiable and unsatisfiable benchmarks. %
Each table shows the number of environments, states, and actions-per-state in the MEMDP. For \PaGE{}, we include both the default configuration (negative entropy) and variation  (positive entropy). For both configurations, we provide columns with the time and the maximum size of the BSG constructed.  We also include the time for the two baselines. Unsurprisingly, the number of states to be explored is a good predictor for the performance and the relative performance is as in Fig.~\ref{fig:scatter_plots}.%

\section{Conclusion}
\label{sec:conclusion}
This paper considers multi-environment MDPs with an arbitrary number of environments and an almost-sure reachability objective. 
We show novel and tight complexity bounds and use these insights to derive a new algorithm. 
This algorithm outperforms approaches for POMDPs on a broad set of benchmarks. 
For future work, we will apply an algorithm directly on the BOMDP~\cite{DBLP:journals/jcss/ChatterjeeCT16}. %

\clearpage
\pagebreak
\subsection*{Data-Availability Statement}
Supplementary material related to this paper is openly available on Zenodo at: \url{https://doi.org/10.5281/zenodo.7560675}
\bibliography{literature.bib}
\bibliographystyle{plain}
\clearpage
\pagebreak
\appendix
\section{Definition of FSCs for MEMDPs}
\label{app:fscs}

As a more concrete representation of a policy, we will use finite state controllers (FSCs). Due to \cref{lem:acyc_det_suffices} and the fact that the construction in \cref{thm:exp_policy} uses an acyclic MEMDP, we simplify matters and only consider \emph{deterministic FSCs}.

\begin{definition}[FSCs]
  A finite state controller (FSC) is a tuple $C = (Q, A, \delta, a, q_\iota)$ where:
  \begin{itemize}
    \item $Q$ is the finite state set, i.e., the policy's memory,
    \item $A$ is the finite action set,
    \item $\delta\colon Q \times S \to Q$ is the transition function,
    \item $a\colon Q \times S \to A$ is the action function,
    \item $q_{\iota} \in Q$ is the initial state.
  \end{itemize}
\end{definition}
A finite state controller is an automaton that can be used as policy for an MDP  (and MEMDP/BSG by extension).
The FSC and MDP supply each other with input in alternating fashion.
Initially, the FSC supplies the initial action $a(q_{\iota}, \initstate)$, after which the MDP performs this action and transitions to some state $s'$.
The FSC then transitions from memory state $q$ to $q'=\delta(q, s')$.
This process is then repeated indefinitely.
We extend the transition function to MDP paths:%
\begin{align*}
  \delta^{+}(\initstate) &= q_{\iota}\\
  \delta^{+}(\initstate \hdots s\ i\ s') &= \delta(\delta^{+}(\initstate \hdots s), s')
\end{align*}
Intuitively, $\delta^{+}(\pi)$ denotes the memory state reached in the finite state controller after processing $\pi$.
We can then define a policy $\sigma$ using the finite state controller:
\[
  \sigma(\pi) = a(\delta^{+}(\pi))
\]
The size of a policy is then the size of its state space $|Q|$.
A policy is \emph{memoryless} when $|Q| = 1$.

\clearpage
\section{Proofs regarding POMDPs in \cref{sec:pomdps}}
\subsection{Proof of \cref{thm:unionpomdp}}
Recall \cref{thm:unionpomdp}:
\unionpomdp*
\begin{proof}
  The observation set of $\union{\memdp}$ is equal to the state set of $\memdp$, $O=S$. Because of this, we can map a policy on $\memdp$ one-to-one to an (observation-based) policy on $\union{\memdp}$ by treating states as observations, and vice versa.
  \item($\Rightarrow$): Given a winning policy $\sigma$ for MEMDP $\memdp$, we show that $\sigma$ is also winning for union-POMDP $\union{\memdp}$. By definition of a winning policy for a MEMDP, this means that the probability for reaching the target state in each environment is equal to 1. By construction of $\union{\memdp}$, this also mean that the probability of reaching the target state is 1 for each of the initial states of $\union{\memdp}$, which implies that $\sigma$ is winning in $\union{\memdp}$.
  \item($\Leftarrow$): Given a winning policy $\sigma$ for union-POMDP $\union{\memdp}$, we show that $\sigma$ is also winning for MEMDP $\memdp$. We know that $\sigma$ is winning for the initial distribution $\initdist$, which implies that the probability of reaching the target set $T$ is 1 in each of the initial states $s_{i} \in \supp(\initdist)$. By construction, each initial state $s_{i}$ belongs to a disjoint part of the $\union{\memdp}$, which is the same as an environment in the MEMDP, and thus, $\sigma$ wins in all environments and consequently the MEMDP.
  \qed
\end{proof}

\clearpage
\section{Proofs regarding BOMDPs in \cref{sec:beliefsupportgames}}
\subsection{Proof of \cref{lem:paths}}
Recall \cref{lem:paths}:
\pathslemma*
\begin{proof}
  \item ($\Rightarrow$):
  The existence of the path $s_{1} \hdots s_{n}$ in $\memdp_{i}$ tells us that for all $k$: $p_{i}(s_{k}, a_{k})(s_{k+1}) > 0$.
  As stated by the lemma, we know that $i \in J_{1}$.
  Inductively applying the definition of $J_{k}$ we obtain that indeed for all $1 \le k \le n$: $i \in J_{k}$, and thus that $i \in J_{n}$.

  \item ($\Leftarrow$):
  We have that $J_{n} \subseteq J_{n-1} \subseteq \cdots \subseteq J_{1}$, due to the definition of the belief update function $\mathsf{Up}$.
  Combined with the assumption that $i \in J_{n}$ we have that for all $1 \le k \le n$: $i \in J_{k}$.
  Due to the definition of $J_{k}$, we obtain that $p_{i}(s_{k}, a_{k})(s_{k+1}) > 0$, which means that $s_{1} \hdots s_{n}$ is a path in~$\memdp_{i}$.
  \qed
\end{proof}

\subsection{Proof of \cref{lem:beliefobservation}}
Recall \cref{lem:beliefobservation}:
\beliefobservation*
\begin{proof}
  Assuming $O(\last(\pi))=O(\last(\pi'))$, we need to show that $\pathbelief(\pi)=\pathbelief(\pi')$.
  Without loss of generality, we fix $O(\last(\pi))=O(\last(\pi'))=\tuple{s, J}$ and $\pi = \tuple{s_{0}, i, J_{0}} a_{0} \tuple{s_{1}, i, J_{1}} \ldots \tuple{s_{k}, i, J_{k}}$ and $\pi' = \tuple{s_{0}', i', J'_{0}} a'_{0} \tuple{s'_{1}, i', J'_{1}} \ldots \tuple{s'_{l}, i', J'_{l}}$. This means that $J_{k}=J'_{l}=J$ and $s_{k}=s'_{l}=s$.

  Due to the fact that $\pi$ and $\pi'$ are paths in a BOMDP, we know that \[p(\tuple{s_{k}, i, J_{k}}, a_{k}, \tuple{s_{k+1}, i, J_{k+1}}) > 0\] for all $k$, which implies $J_{k+1} = \update{J_{k}}{s_{k}}{a_{k}}{s_{k+1}}$. In other words, $J$ is equal to repeatedly applying $\mathsf{Up}$, which is exactly the definition of $\pathbelief(\pi)$.
  \qed
\end{proof}

\subsection{Proof of \cref{thm:winning_memdp_bomdp}}\label{sec:proof_memdp_bomdp}
Recall \cref{thm:winning_memdp_bomdp}:
\winningmemdpbomdp*
\begin{proof}
  This follows from a more general result shown in~\cite[Theorem 20]{DBLP:conf/concur/SuilenVJ24}.
  We observe that their definition of BOMDPs is slightly different but equivalent, as the observation function reveals only the state, similar to MEMDPs.
  The theorem states that a policy that wins a Rabin objective for a given state $s$ for every environment $j \in J$ is also winning in the BOMDP state $\tuple{s, J}$, and vice-versa (applying necessary liftings).
  Without loss of generality, we can assume that each target state is a sink state, which means that reachability trivially reduces to a Rabin objective and we can apply~\cite[Theorem 20]{DBLP:conf/concur/SuilenVJ24} to our setting.
\qed
\end{proof}

\subsection{Proof for \cref{lem:winningfrontier}} \label{proof:winningfrontier}
Recall \cref{lem:winningfrontier}:
\winningfrontier*
\begin{proof}
  \item ($\win{\cut{\bomdp}{F}}{T} \subseteq \win{\bomdp}{T}$):
  Assuming that $s \in \win{\cut{\bomdp}{F}}{T}$, we have to show that $s \in \win{\bomdp}{T}$.
  Suppose policy $\sigma$ is winning in $\cut{\bomdp}{F}$, this means that the probability of reaching a state in $F$ is 0.
  Thus, all paths remain in the `sub-BOMDP' that is equal in $\bomdp$, which shows that $\sigma$ is also winning for $s$ in $\bomdp$ with target set $T$, i.e., $s \in \win{\bomdp}{T}$.

  \item ($\win{\bomdp}{T} \subseteq \win{\cut{\bomdp}{F}}{T \cup F}$):
  Assuming that $s \in \win{\bomdp}{T}$, we have to show that $s \in \win{\cut{\bomdp}{F}}{T \cup F}$.
  Suppose policy $\sigma$ is winning in $\bomdp$ with target set $T$.
  Any path produced by $\sigma$ almost-surely reaches a state in $T$.
  Along such a path we possibly visit a state in $F$, which would be winning in $\cut{\bomdp}{F}$ with target set $T \cup F$.
  $\cut{\bomdp}{F}$ is equal to $\bomdp$ except for states in $F$, so $\sigma$ is also winning in state $s$ of $\cut{\bomdp}{F}$ with target set $T \cup F$, i.e., $s \in \win{\cut{\bomdp}{F}}{T \cup F}$.
  \qed
\end{proof}

\subsection{Proof for \cref{lem:localpath}}
Recall \cref{lem:localpath}:
\localpath*
\begin{proof}
  Given a path $\pi = \pi_{1}\ldots\pi_{n}$ in $\bomdp$ s.t.\ every $\pi_{i}$ is $J_{i}$-local.
  We fix $\pi_{i}=\tuple{s_{i,0}, j, J_{i}}a_{i,0} \ldots \tuple{s_{i,k},j,J_{i}}$.
  We need to show that there is a path $J_{1} \ldots J_{n}$ in $\egraph$.
  Because every path fragment $\pi_{i}$ is $J_{i}$-local, we must have that $J_{i+1}=\update{J_{i}}{s_{i,k}}{a_{i,k}}{s_{i+1,0}}$, which implies the existence the transition from $J_{i}$ to $J_{i+1}$ in $\egraph$.
  \qed
\end{proof}

\subsection{Proof of \cref{lem:winninglocally}}\label{proof:winninglocally}
Recall \cref{lem:winninglocally}:
\winninglocally*%
\begin{proof}
  We show that for all states $s$
  \[
    \tuple{s, J} \in O\left(\win{\localbomdp{J}}{T'_{\bomdp}}\right)
    \Leftrightarrow
    \tuple{s, J} \in O\left(\win{\bomdp}{T_{\bomdp}}\right)
  \]
  which implies the lemma.

\item{($\Rightarrow$)}
Assuming $\tuple{s, J} \in O\left(\win{\localbomdp{J}}{T'_{\bomdp}}\right)$, we know there exists an observation-based policy $\sigma$ that is winning for observation $\tuple{s, J}$.
This means that playing $\sigma$ means that either a state $s'$ in $T_{\bomdp}$ or in $\win{\bomdp}{T_{\bomdp}} \setminus S_J$ is almost-surely reached.
In case a path reaches a state in $T_{\bomdp}$, the path would be winning in $\bomdp$ too.
Otherwise, such a path reaches a state $s'$ in $\win{\bomdp}{T_{\bomdp}} \setminus S_J$, which has some winning policy $\sigma'$ in $\bomdp$.
This shows us that we can construct a new policy for $\bomdp$ that wins in $s$, by switching to $\sigma'$ at the frontier state $s'$.

\item{($\Leftarrow$)}
Assuming $\tuple{s, J} \in O\left(\win{\bomdp}{T_{\bomdp}}\right)$, we know there exist an observation-based policy $\sigma$ that is winning for observation $\tuple{s, J}$.
This policy must also be winning $\localbomdp{J}$ with target set $T'_{\bomdp}$.
If a path produced by this policy stays within $S_J$, it must almost-surely reach a target state in $T_{\bomdp}$ (because it is winning).
If the path goes outside of $S_J$, it shows us that the frontier state $s' \in S \setminus S_J$ must be winning, i.e., $s' \in \win{\bomdp}{T_{\bomdp}} \setminus S_J$.
Thus, $\sigma$ almost-surely reaches target set $T'_{\bomdp}$ in $\localbomdp{J}$.
\qed
\end{proof}

\subsection{Proof of \cref{lem:locbsgsize}}\label{proof:locbsgsize}
Recall \cref{lem:locbsgsize}:
\locbsgsize*
\begin{proof}
  Recall the definition of a local BOMDP as (a fragment of) %
  \[ \localbomdp{J} = \onlyreachable{\cut{\custombomdp{\restrictenv{\memdp}{J}}}{F}}{\unif(S_J)}\quad\text{ where }\quad F = \observations \setminus \observations_J\ . \]
  There are $|S| \cdot |J|$ many states in $S_J$. %
  In each of these states, at most $|A|$ actions are enabled.
  For each state-action pair in the MEMDP, we may reach up to $|S|$ many states.
  In particular, for every $\tuple{s,j,J}$ and action $a$,
  it holds that if $\tuple{s',j, J'}$ and $\tuple{s',j, J''}$ can be reached in the local BOMDP from $\tuple{s,j,J}$, then $J' = J''$.
  There are thus $\mathcal{O}(|S|^2 \cdot |A| \cdot |J|)$ many states.
  However, the number of transitions is duplicated, and are thus $\mathcal{O}(|S|^2 \cdot |A| \cdot |J|^2)$ many.
  \qed
\end{proof}

\clearpage

\clearpage
\section{The PSPACE hardness proof (\cref{lem:decision_pspace_hard})}
\label{proof:pspacehard}

Recall~\cref{lem:decision_pspace_hard}:
\pspacehard*

\begin{figure}[H]
  \centering
  \begin{tikzpicture}
\node (x1) {$\exists x_{1}$};
\node (y1) [below=of x1] {$\forall y_{1}$};

\draw[->] (x1) -- node[right]{$a_{1}$} (y1);

\node (x2_0) [below=of y1, xshift=-1.5cm]{$\exists x_{2}$};
\node (x2_1) [below=of y1, xshift=1.5cm]{$\exists x_{2}$};

\draw[->] (y1) -- node[above, yshift=0.1cm]{$\top$} (x2_0);
\draw[->] (y1) -- node[above, yshift=0.1cm]{$\bot$} (x2_1);

\node (y2_0) [below=of x2_0]{$\forall y_{2}$};
\node (y2_1) [below=of x2_1]{$\forall y_{2}$};

\draw[->] (x2_0) -- node[right]{$a_{2}$} (y2_0);
\draw[->] (x2_1) -- node[right]{$a_{3}$} (y2_1);

\node (x3_0) [below=of y2_0, xshift=-1cm]{\vdots};
\node (x3_1) [below=of y2_0, xshift=1cm]{\vdots};
\node (x3_2) [below=of y2_1, xshift=-1cm]{\vdots};
\node (x3_3) [below=of y2_1, xshift=1cm]{\vdots};

\draw[->, dashed] (y2_0) -- node[above]{$\top$} (x3_0);
\draw[->, dashed] (y2_0) -- node[above]{$\bot$} (x3_1);
\draw[->, dashed] (y2_1) -- node[above]{$\top$} (x3_2);
\draw[->, dashed] (y2_1) -- node[above]{$\bot$} (x3_3);
\end{tikzpicture}
  \caption{Solution tree of $\Psi$}
  \label{fig:solution_set_qbf}
\end{figure}
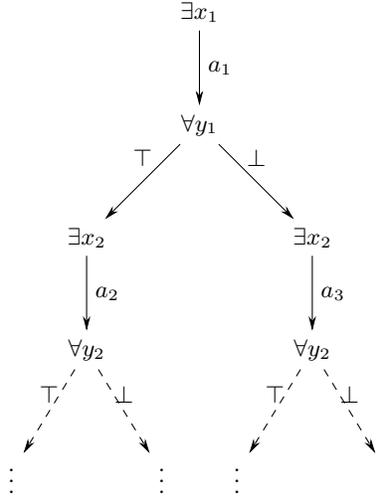

\paragraph{Solution trees for TQBF\@.}
Before we provide a proof, let us first remark that for every QBF formula $\Psi$ in the aforementioned form, we can express the truth value recursively as follows:
\begin{align*}
  \Psi &=
  E_{1}([x_{1} \mapsto \top]) \vee E_{1}([x_{1} \mapsto \bot]) \text{ where}\\
  E_{1}(m) &= F_{1}(m \cdot [y_{1} \mapsto \top]) \wedge F_{1}(m \cdot [y_{1} \mapsto \bot])\text{ where}\\
  F_{1}(m) &= E_{2}(m \cdot [x_{2} \mapsto \top]) \vee E_{2}(m \cdot [x_{2} \mapsto \bot])\text{ where}\\
  \vdots\\
  F_{n}(m) &= \Phi[m]
\end{align*}
where $\Phi[m]$ expresses the truth value of $\Phi$ after applying the substitution $m$. This shows that a witness to a formula belonging to TQBF, such as $m$, is tree-shaped. \cref{fig:solution_set_qbf} shows an example of such a witness, which we will call a \emph{solution tree}.
Such a tree has the following properties:
\begin{itemize}
  \item Any path in the tree assigns exactly one value $\top$ or $\bot$ to each variable in the order as indicated by $\Psi$.
  \item Each $\forall$ variable must permit both assignments $\top$ and $\bot$.
  \item Each $\exists$ variable must permit one of the assignments $\top$ or $\bot$.
  \item The assignments $m$ belonging to any path satisfies $\Phi$.
\end{itemize}

\paragraph{Construction}
Given a QBF formula
\[
  \Psi = \exists x_{1} \forall y_{1} \exists x_{2} \forall y_{2} \hdots \exists x_{n} \forall y_{n} \Big[ \Phi \Big]
\]
over variables $V=\{x_1, y_1, x_2, y_2, \hdots, x_n, y_n\}$ where $\Phi$ is a Boolean formula over $V$.
We assume that $\Phi$ is in conjunctive normal form, i.e., it is a conjunction of disjunctions of positive and negative literals.
The $i$-th clause of the conjunction is denoted as $\Phi_{i}$.
\begin{figure}[t]
  \centering
  \resizebox{\linewidth}{!}{
    \begin{tikzpicture}
  \node[state] (x1) {$x_{1}$};
  \node[state] (x1t) [right=of x1, yshift=1cm] {$x_{1}\top$};
  \node[state] (x1f) [right=of x1, yshift=-1cm] {$x_{1}\bot$};

  \draw[->] (x1) -- node[above]{$\top$} (x1t);
  \draw[->] (x1) -- node[below]{$\bot$} (x1f);

  \node[state] (y1) [right=of x1t, yshift=-1cm, xshift=1cm] {$y_{1}$};

  \node (x1tm) [inner sep=0mm, right=of x1t] {};
  \node (x1fm) [inner sep=0mm, right=of x1f] {};

  \node[state] (x1tw) [above=of x1tm] {$W$};
  \node[state] (x1fw) [below=of x1fm] {$W$};

  \draw (x1t) -- node[above]{$\envact$} (x1tm);
  \draw (x1f) -- node[above]{$\envact$} (x1fm);

  \draw[->] (x1tm) -- node[right]{$[\Phi_{i}(x_{1})]$} (x1tw);
  \draw[->] (x1fm) -- node[right]{$[\Phi_{i}(\neg x_{1})]$} (x1fw);

  \draw[->] (x1tm) -- node[right,yshift= 0.1cm]{$[\neg\Phi_{i}(x_{1})]$} (y1);
  \draw[->] (x1fm) -- node[right,yshift=-0.1cm]{$[\neg\Phi_{i}(\neg x_{1})]$} (y1);

  \filldraw (x1tm) circle (1pt);
  \filldraw (x1fm) circle (1pt);

  \node (y1m) [inner sep=0mm, right=of y1] {};
  \filldraw (y1m) circle (1pt);
  \draw (y1) -- node[above]{$\envact$} (y1m);

  \node[state] (y1t) [right=of y1m, yshift=1cm] {$y_{1}\top$};
  \node[state] (y1f) [right=of y1m, yshift=-1cm] {$y_{1}\bot$};

  \draw[->] (y1m) -- node[above]{$\frac12$} (y1t);
  \draw[->] (y1m) -- node[below]{$\frac12$} (y1f);

  \node (y1tm) [inner sep=0mm, right=of y1t] {};
  \node (y1fm) [inner sep=0mm, right=of y1f] {};

  \filldraw (y1tm) circle (1pt);
  \filldraw (y1fm) circle (1pt);

  \draw (y1t) -- node[above]{$\envact$} (y1tm);
  \draw (y1f) -- node[above]{$\envact$} (y1fm);

  \node[state] (y1tw) [above=of y1tm] {$W$};
  \node[state] (y1fw) [below=of y1fm] {$W$};

  \draw[->] (y1tm) -- node[right] {$[\Phi_{i}(y_{1})]$}(y1tw);
  \draw[->] (y1fm) -- node[right] {$[\Phi_{i}(\neg y_{1})]$}(y1fw);

  \node[state] (x2) [right=of y1tm, yshift=-1cm] {$x_{2}$};
  \draw[->] (y1tm) -- node[right,xshift=-0.2cm,yshift=0.2cm] {$[\neg\Phi_{i}(y_{1})]$} (x2);
  \draw[->] (y1fm) -- node[right,xshift=-0.2cm,yshift=-0.2cm] {$[\neg\Phi_{i}(\neg y_{1})]$} (x2);

  \node[state] (xn1) [right=of x2] {$x_{n+1}$};

  \draw[->, dashed] (x2) -- (xn1);
  \draw[->] (xn1) to[in=45,out=-45,looseness=3] (xn1);

  \node (Q) [left=of x1, xshift=0.5cm] {$\mathcal{Q}_{i}\colon$};
  \draw [->] (Q) -- (x1);
\end{tikzpicture}
  }
  \caption{Construction of $\mathcal{Q}$. We use Iverson brackets, $[\Phi] = \begin{cases}
                                                                  1 & \text{if } \Phi\\
                                                                  0 & \text{otherwise}
                                                                \end{cases}$}
  \label{fig:construction_qbf}
\end{figure}
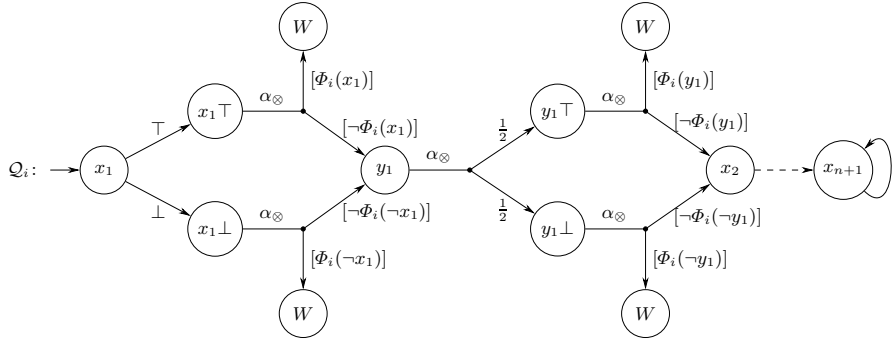

\newcommand{\nextvar}[1]{\mathsf{nextvar}(#1)}
To decide whether $\Phi$ is satisfiable, we construct a MEMDP which has one environment for each clause:
More precisely, the MEMDP $\mathcal{Q} = (S, A, \initstate, \{p_{i}\}_{i \in [n]})$ as illustrated by \cref{fig:construction_qbf}, has state space:
\[ S = \{W, x_{n+1}\} \cup  \left(\bigcup_{i \in [n]} \{x_{i}, x_{i}\top, x_{i}\bot, y_{i}, y_{i}\top, y_{i}\bot\}\right), \] actions $A = \{\top, \bot, \envact\}$ and initial state $\initstate = x_{1}$ as explained in the main text.
We define $\{p_{i}\}_{i \in [n]}$ in parts, for all $i, j \in [n], v \in \{x_{j}, y_{j}\}$:
\begin{itemize}
  \item $p_{i}(x_{j}, \top) = [x_{j}\top \mapsto 1]$,
  \item $p_{i}(x_{j}, \bot) = [x_{j}\bot \mapsto 1]$,
  \item $p_{i}(y_{j}, \envact) = [y_{j}\top \mapsto \sfrac12, y_{j}\bot \mapsto \sfrac12]$,

  \item $p_{i}(v\top, \envact) = \Big[ W \mapsto [\Phi_{i}(v)], \nextvar{v} \mapsto [\neg\Phi_{i}(v)] \Big]$,
  \item $p_{i}(v\bot, \envact) = \Big[ W \mapsto [\Phi_{i}(\neg v)], \nextvar{v} \mapsto [\neg\Phi_{i}(\neg v)] \Big]$,
  \item \ldots otherwise, $p_{i}(s, a) = [x_{n+1} \mapsto 1]$,
\end{itemize}
where $\nextvar{x_j} = y_j$ and $\nextvar{y_j} = x_{j+1}$.

The reduction can be performed in polynomial time.

We fix the target set to $\{W\}$.
We will now prove the following statement, which is equivalent to the original theorem:
$\mathcal{Q}$ has a winning policy $\Leftrightarrow \Psi$ is a true formula.

\paragraph{$(\Rightarrow)$}
Assuming we can obtain a winning deterministic policy $\sigma\colon (S \times A)^{*} \times S \to A$, we have to prove that $\Psi$ is a true formula.
We now show that $\sigma$ can be converted into a witness solution tree.
By the definition of $\mathcal{Q}$ and the definition of a winning policy, $\sigma$ and the $\mathcal{Q}$ have the following properties:
\begin{itemize}
  \item As $\sigma$ is winning, each path produced by $\sigma$ eventually reaches $W$, and thus satisfies $\Phi$ with the chosen assignment.
  \item Along every path, we reach either $v\top$ or $v\bot$ for every variable $v$. 
  \item $\sigma$ cannot prevent $y\bot$ or $y\top$ as the $y$ states only have a $\envact$ action, which chooses either assignment with equal probability.
\end{itemize}
In other words, the set of all paths produced by $\sigma$ forms a solution tree, after performing some necessary conversions.
Representing the solution tree as a set of assignments $m'_{n}$:
\begin{align*}
  m_{1} &= \begin{cases}
            \{ [x_{1} \mapsto \top]\} & \text{if } \sigma(x_{1}) = \top\\
            \{ [x_{1} \mapsto \bot]\} & \text{if } \sigma(x_{1}) = \bot\\
          \end{cases}\\
  m'_{1} &= \{ m \cdot [y_{1} \mapsto \top \mid m \in m_{1}]\} \cup \{ m \cdot [y_{1} \mapsto \bot] \mid m \in m_{1}\}\\
  m_{i+1} &= \begin{cases}
            \{ m \cdot [x_{i+1} \mapsto \top] \mid m \in m'_{i}\} & \text{if } \sigma(x_{i+1}) = \top\\
            \{ m \cdot [x_{i+1} \mapsto \bot] \mid m \in m'_{i}\} & \text{if } \sigma(x_{i+1}) = \bot\\
          \end{cases}\\
  m'_{i+1} &= \{ m \cdot [y_{i+1} \mapsto \top \mid m \in m_{i+1}]\} \cup \{ m \cdot [y_{i+1} \mapsto \bot] \mid m \in m_{i+1}\}\\
\end{align*}

\paragraph{$(\Leftarrow)$}
Assuming that $\Psi$ is a true formula, we know there must exist some witness solution tree $m$.
We can transform this solution tree into a winning policy $\sigma$ as follows:
\begin{align*}
  \sigma(x_{1}^{a_{1}} y_{1}^{a_{2}}\ldots x_{i}) &=
  \begin{cases}
    \top & \text{if } \exists p\colon [x_{1} \mapsto a_{1}, y_{1} \mapsto a_{2}, \ldots, x_{i} \mapsto \top] \cdot p \in m\\
    \bot & \text{if } \exists p\colon [x_{1} \mapsto a_{1}, y_{1} \mapsto a_{2}, \ldots, x_{i} \mapsto \bot] \cdot p \in m\\
  \end{cases}\\
  &\text{otherwise}\\
  \sigma(\pi) &= \envact
\end{align*}
This is winning because by the properties of the solution tree, any path generated by this satisfies all clauses in $\Phi$, which in turn means winning in $Q$.
\clearpage
\section{Exponential Policies (Proof of \cref{thm:exp_policy})}
\label{proof:exp_policy}
\begin{figure}[t]
  \centering
  \resizebox{\linewidth}{!}{
    \begin{tikzpicture}[
  >=stealth,
  initial text=$ $, %
  m/.style={ inner sep=0mm, outer sep=0mm },
  s/.style={ state, inner sep=0mm, minimum width=0.7cm },
  ]
  \node[s, initial] (s0) at (0,0) {$s_0$};
  \node[s] (a0) at (1.5,-2) {$a_1$};
  \node[s] (b0) at (1.5,2) {$b_1$};
  \node[m] (m0) at (1.5, 0) {};

  \draw[->] (s0) edge node[below, sloped] {$i=1$} (a0);
  \draw[->] (s0) edge node[above, sloped] {$i=2$} (b0);
  \draw (s0) -- node[below] {$i \ne 1$} node[above]{$i \ne 2$} (m0);
  \filldraw (m0) circle (1pt);
  \draw[->] (m0) edge node[right]{$\frac 12$} (a0);
  \draw[->] (m0) edge node[right]{$\frac 12$} (b0);

  \node[s] (s1) at (3,0) {$s_1$};
  \draw[->] (a0) edge (s1);
  \draw[->] (b0) edge (s1);

  \node[s] (a1) at (4.5,-2) {$a_2$};
  \node[s] (b1) at (4.5,2) {$b_2$};
  \draw[->] (s1) edge node[below,sloped] {$i=3$} (a1);
  \draw[->] (s1) edge node[above,sloped] {$i=4$}(b1);

  \node[m] (m1) at (4.5, 0) {};
  \draw (s1) --  node[below] {$i \ne 3$} node[above] {$i \ne 4$} (m1);
  \filldraw (m1) circle (1pt);
  \draw[->] (m1) edge node[right]{$\frac 12$} (a1);
  \draw[->] (m1) edge node[right]{$\frac 12$} (b1);

  \node[s, draw=none, fill=none] (ldots) at (6,0) {\ldots};
  \draw[->] (a1) edge (ldots);
  \draw[->] (b1) edge (ldots);

  \node[s] (sn2) at (9,0) {$s_n$};
  \node[s] (an2) at (7.5,-2) {$a_n$};
  \node[s] (bn2) at (7.5,2) {$b_n$};
  \node[m] (mn2) at (7.5, 0) {};

  \draw[->] (ldots) edge node[below, sloped] {$i=n-1$} (an2);
  \draw[->] (ldots) edge node[above, sloped] {$i=n$} (bn2);
  \filldraw (mn2) circle (1pt) ;
  \draw (ldots) -- node[below] {\tiny$i \ne n-1$} node[above] {\tiny$i \ne n$} (mn2) ;
  \draw[->] (mn2) edge node[right]{$\frac 12$} (an2);
  \draw[->] (mn2) edge node[right]{$\frac 12$} (bn2);

  \draw[->] (an2) edge (sn2);
  \draw[->] (bn2) edge (sn2);

  \node[s] (g1) at (10.5,0) {$g_1$};
  \draw[->] (sn2) edge (g1);
  \node[s] (W) at (14.5,2) {$W$};

  \node[s] (g2) at (12.5,0) {$g_2$};

  \node[s] (gn2) at (14.5,0) {$g_n$};

  \draw[->] (g1) edge node[above, sloped] {$\alpha_i$} (W) ;
  \draw[->] (g2) edge node[above, sloped] {$\alpha_i$} (W);
  \draw[->] (gn2) edge node[right] {$\alpha_i$} (W);

  \node[s] (F) at (14.5,-2) {$g_{n+1}$};
  \draw[->] (gn2) edge node[left] {$\alpha_j$} (F);
  \draw[->] (g1) edge node[below] {$\alpha_j$} (g2);
  \draw[->, dashed] (g2) edge node[below] {$\alpha_j$} (gn2);

  \node (M) at (-0.2, 2.3) {\large$\mathcal M_i\colon i \in [2n]$};
  \node (X) at (10.5, 2.3) {\large $j \neq i$};

  \draw[->] (W) to[out=-30, in=30, looseness=3] (W);
  \draw[->] (F) to[out=30, in=-30, looseness=3] (F);

  \draw[thick, dashed] (-1.5,-2.6) rectangle (16, 2.6);
  \draw[thick, dashed] (9.7, -2.6) -- (9.7, 2.6);

  \draw[thick] (-1.5,-2.6) rectangle node{\textbf{1}} (-1,-2.1);
  \draw[thick] (9.7,-2.6) rectangle node{\textbf{2}} (10.2,-2.1);
\end{tikzpicture}
  }
  \caption{Witness for exponential memory requirement for winning policies (Repeats \cref{fig:example_exp_policy} for convenience)}
  \label{fig:example_exp_policy_repeated}
\end{figure}
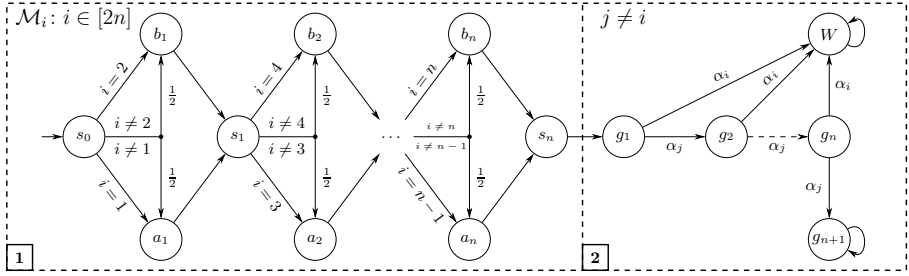

Recall \cref{thm:exp_policy}:
\exppolicy*

\begin{proof}
  We prove this by using the MEMDP family presented in \cref{fig:example_exp_policy_repeated} as a witness.
  We define a member of this family $\memdp^{n} = (S, A, \initstate, \{p_{i}\}_{i \in [2n]})$ as follows:
  \begin{itemize}
    \item $S = \{s_{0}, W, g_{n+1}\} \cup \bigcup_{i \in [n]} \{s_{i}, a_{i}, b_{i}\} \cup \{ g_{i} \mid i \in [n]\}$
    \item $A = \{\envact\} \cup \{\alpha_{i} \mid i \in [n]\}$
    \item $\initstate = s_{0}$
    \item We define $\{p_{i}\}_{i \in [2n]}$ in parts, for all $i, k \in [2n], j \in [n]$:
          \begin{itemize}
            \item $p_{i}(s_{j}, \envact) =
                  \begin{cases}
                    [a_{j+1} \mapsto 1] & \text{if } i=2j + 1\\
                    [b_{j+1} \mapsto 1] & \text{if } i=2j + 2\\
                    [a_{j+1} \mapsto \sfrac12, b_{j+1} \mapsto \sfrac12] & \text{otherwise}
                  \end{cases}$
            \item for all $s \in \{a_{j}, b_{j}\}\colon p_{i}(s, \envact) = [s_{j} \mapsto 1]$
            \item $p_{i}(g_{j}, \alpha_{k}) = \begin{cases}
 	[W \mapsto 1] & \text{ if i=k } \\
 	[g_{j+1} \mapsto 1] & \text{otherwise }
 \end{cases}$
          \end{itemize}
  \end{itemize}

  We show that any winning policy grows exponentially in $n$.

  Looking at the definition of $p_{i}(s_{j}, \envact)$, we see that for each $s_{j}$, most environments transition to $a_{j+1}$ and $b_{j+1}$ with equal probability.
  Exactly one environment only allows transitioning to state $a_{j+1}$ ($i=2j+1$), and exactly one environment only allows transitioning to state $b_{j+1}$ ($i=2j+2$).
  As a consequence, for path $\pi$ in environment $i$, we have:
  \begin{equation}\label{eq:oneenv}
    \Big[a_{j} \in \pi \Rightarrow \pi \not\in \Path_{\memdp_{2j+2}}\Big] \wedge \Big[b_{j} \in \pi \Rightarrow \pi \not\in \Path_{\memdp_{2j+1}}\Big]
  \end{equation}
  For any path $\pi$ such that $\last(\pi) = s_{n}$ we obtain a set of possible environments $B_{\pi}$ (notice that there is no nondeterminism in this part of the MEMDP).
  \begin{align*}
    B_{\pi} = \{ i \in [2n] \mid \forall \sigma.~ {\Pr}_{\memdp^{n}_{i}}(\pi \mid \sigma) > 0\}
  \end{align*}
 
  Due to \cref{eq:oneenv}, we have that $|B_{\pi}|=n$ and that the mapping from $\pi$ to $B_{\pi}$ is one-to-one.
  As a result, because there are $2^{n}$ possible paths leading to $s_{n}$, there are also $2^{n}$ possible sets of environments for $s_{n}$.

  The last part of the model requires `guessing' the correct environment.
  The only winning action in environment $i$ is $\alpha_{i}$.
  If the correct guess is not performed within $n$ guesses, a failing deadlock state is reached.

  To guarantee winning, the actions performed by the policy after observing $\pi$ must therefore be some permutation of $\{\alpha_{i} \mid i \in B_{\pi}\}$.
Now suppose for contradiction that there exists some policy $\sigma$, represented by an FSC $C=(Q, A, \delta, a, q_{\iota})$, which has $|Q| < 2^{n}$.
  Then, because there are $2^{n}$ paths reaching state $s_{n}$, there must be two paths $\pi \ne \pi', \last(\pi)=\last(\pi')=s_{n}$ such that $\delta^{+}(\pi) = \delta^{+}(\pi')$.
  It then follows that $\sigma(\pi \cdot \tau) = \sigma(\pi' \cdot \tau)$ for all $\tau \in (A \times S)^{*}$, due to the definition of an FSC policy.
  Because $\pi \ne \pi'$, we know that $B_{\pi} \ne B_{\pi'}$.
  Combined with the fact that $|B_{\pi}|=|B_{\pi'}|=n$, we know that $B_{\pi}$ and $B_{\pi'}$ cannot be permutations of each other, and the assumption that $\sigma$ is winning is false.

  \qed
\end{proof}

\clearpage

\section{Proofs for \cref{sec:algorithm}}
\subsection{Proof of \cref{thm:winning_memdp_game}}
Recall \cref{thm:winning_memdp_game}:
\winningmemdpgame*
\begin{proof}
  \paragraph{($\Rightarrow$):}
  Assuming that $\memdp$ is acyclic, we have that $\bsg$ is also acyclic.
  If we have a policy $\sigma$ that is winning in $\memdp$, we show that it must also be winning in $\bsg$.
  $\sigma$ winning means that it is winning in every environment $\memdp_{i}$.
  Due to acyclicity of $\memdp$, we know that $\sigma$ reaches a target state in a finite amount of transitions.
  Now assume for contradiction that $\sigma$ is not winning for $\bsg$, then we have some finite trace $\pi = s_{0,1}\ a_{0}\ s_{0,2}\ i_{0}\ s_{1,1}\ \ldots\ i_{n}\ s_{n,1}$ where $s_{n,1} \not\in T$.
  Due to monotonicity of the belief-support, we know that environment $i_{n}$ chosen by Player~2 could have been picked for each Player~2 state, i.e., $i_{k} = i_{n}$ for all $k$ in $\pi$.
  This describes a path which by definition of $\bsg$ is also present in $\memdp_{i}$, which contradicts the assumption that $\sigma$ is winning.

  \paragraph{($\Leftarrow$):}
  Assuming that there exists a winning policy $\sigma$ for $\bsg$, we can turn $\sigma$ into a policy $\sigma'$ which applies to $\memdp$ by projecting away the Player~2 states.
  If we consider Player~2 policies in the $\bsg$ that only pick 1 environment, the induced MDP is exactly that environment.
  We know that $\sigma$ wins against every Player~2 policy, and therefore it is winning in every environment.
  \qed
\end{proof}

\subsection{Proof for \cref{lem:implicitslicing}}
Recall \cref{lem:implicitslicing}:
\implicitslicing*
\begin{proof}%
  The theorem is equivalent to: $s \in \win{\bsg}{T_{\bsg}} \Leftrightarrow s \in \win{\cut{\bsg}{W \cup L}}{T_{\bsg} \cup W}$.
  We split the proof into the two parts.
  \item ($\Rightarrow$):
  Assuming $s \in \win{\bsg}{T_{\bsg}}$, we show that $s \in \win{\cut{\bsg}{W \cup L}}{T_{\bsg} \cup W}$.
  Due to our assumption, there exists some winning policy $\sigma$ for $s$ in $\bsg$.
  Then, necessarily, $\sigma$ must be winning in $\cut{\bsg}{W \cup L}$ with target set $T_{\bsg} \cup W$.
  All paths produced by $\sigma$ are identical in $\cut{\bsg}{W \cup L}$ except for those that contain a state in $W$ or $L$.
  In case a path reaches a $W$ state then it is clearly winning.
  Now suppose for contradiction that a path can contain a state in $L$.
  This would mean that $\sigma$ is capable of leaving the winning region, which contradicts our assumption that $\sigma$ is winning.

  \item ($\Leftarrow$):
  Assuming $s \in \win{\cut{\bsg}{W \cup L}}{T_{\bsg} \cup W}$, we show that $s \in \win{\bsg}{T_{\bsg}}$.
  We employ a similar argument to the proof given for \cref{lem:winninglocally}.
  Due to our assumption, there exists some winning policy $\sigma$ in $\cut{\bsg}{W \cup L}$ with target set $T_{\bsg} \cup W$.
  We can construct a new policy that is winning in $\bsg$ by extending $\sigma$.
  We follow $\sigma$, and if we reach a state $s' \in W$, we can switch to a policy that is winning for $s'$ in $\bsg$.
  \qed
\end{proof}

\clearpage

\section{Benchmark Data}
\subsection{Models} \label{sec:benchmark_models}
We will give a small description of each of the MEMDP families used in the benchmark.

\paragraph{Exponential.}
The same MEMDP family as the one used in the proof of \cref{thm:exp_policy}.
For MEMDP $\memdp_{n}$ the amount of states is about $4 \cdot n$ and contains $2 \cdot n$ environments.

\paragraph{Frogger.}
A grid of size $N \times M$ in which the agent starts at $(1,1)$ and tries to reach $(N, M)$ by moving in the cardinal directions.
Along this path, there is one car driving back and forth.
The amount of possible starting locations of this car is indicated by the amount of environments.
The amount of states is about $2 \cdot N^{2} \cdot M$.

\paragraph{Pacman.}
A grid of size $N \times M$ in which the agent starts at $(1,1)$ and tries to reach $(N, M)$ by moving in the cardinal directions.
Starting at $(1, M)$ there is a ghost that walks on the grid.
For each direction the agent moves, the ghost also moves in a fixed (but initially unknown) direction.
For each of the four cardinal directions the agent moves, the ghost can respond with any of the four cardinal directions, which is encoded by the environment.
Because of this, the MEMDP has a maximum of $4^{4}=256$ possible environments.

\paragraph{Catchman.}
Same as the Pacman family, but instead the ghost is the target.

\paragraph{Mastermind.}
This MEMDP family is a simplified version of the Mastermind board game.
The agent tries to guess a code consisting of a sequence of colored balls.
Specifically, the agent tries to guess a sequence of length $N$, in which the balls can be any of $C$ colors, within $G$ guesses.
After each guess the agent is able to observe the amount of balls that were exactly right (same color and same position).
Each environment represents one of the ($C^{N}$) possible codes.
If the correct code is guessed, a target state is reached.
When the last guess is not correct, a fail state is reached.

\paragraph{Infinite Belief.}
Dummy MEMDP designed to show that quantitative methods can `trip up'.
This is due to the fact that the POMDP constructed from this MEMDP has an infinite belief space representation.

\paragraph{Grid.}
A grid of size $N \times M$ in which the agent starts at $(1,1)$ and tries to reach $(N,M)$ by moving in the cardinal directions.
In this grid there is one hole, although its position is not certain.
If the agent walks into the hole, there is no way of getting out, thus failing to reach the target state.
Each environment represents a different hole location.
The agent can sense danger when one tile away from the hole.

\paragraph{Unsatisfiable variants.}
Some of the above (satisfiable) MEMDPs have a unsatisfiable counterpart. The Frogger MEMDPs have a wider car, which makes it impossible to cross the street. The Pacman MEMDPs have specific dimensions such that the ghost that mirrors the agents movements is always capable of preventing the agent from reaching the target state. The Mastermind and Exponential MEMDPs allow one guess fewer than is necessary to obtain a winning policy.

\subsection{Heuristics} \label{sec:benchmark_heuristics}
Here we discuss all the heuristics that were tested in the benchmark.
For every heuristic, we tested three different variants.
These variants are related to whether in the \PaGE{} algorithm we compute the lower bound, the upper bound, or both.
In \cref{table:benchsat1,table:benchsat2,table:benchunsat1,table:benchunsat2} we refer to these variants by -, + and -+ respectively. \cref{fig:extra_scatter_plots1,fig:extra_scatter_plots2} illustrate the effects of using different heuristics.

\paragraph{BFS\@. }
Breadth-first search heuristic which explores game states in the order they are first discovered.
\paragraph{DFS\@. }
Depth-first search heuristic which explores game states in the order they are last discovered.
\paragraph{Entropy\@. }%
Positive entropy heuristic which explores game states with the smallest environment set first.
\paragraph{NEntropy\@. }%
Negative entropy heuristic which explores game states with the biggest environment set first.

\clearpage
\subsection{Benchmark Results} \label{sec:benchmark_results}
{\setlength{\tabcolsep}{6pt}
\begin{table}[H]
  \centering
  \begin{tabular}{lrrr|rrrrrrr}
    \toprule
    & \thead{$|I|$} & \thead{$|S|$}& \thead{$|A|$} & \thead{BFS-} & \thead{BFS+} & \thead{BFS-+} & \thead{DFS-} & \thead{DFS+} & \thead{DFS-+} & \thead{Comp.}\\
    \midrule
\multirow{5}{*}{\rotatebox{90}{Exponential}}
&8&19&9&0.1&0.1&0.1&0.1&0.1&0.1&0.1\\
&12&27&13&0.2&0.2&0.2&0.2&0.2&0.2&0.2\\
&16&35&17&1.5&2.9&1.6&1.7&2.6&2.4&2.7\\
&20&43&21&36.1&39.2&41.6&40.1&129.6&129.1&39.4\\
&24&51&25&MO&MO&MO&MO&TO&TO&MO\\

\cmidrule(lr){1-11}
\multirow{6}{*}{\rotatebox{90}{Frogger}}
&10&1200&4&0.2&0.2&0.2&0.2&0.2&0.2&20.0\\
&20&1200&4&0.4&0.4&0.4&0.4&0.4&0.4&MO\\
&29&1200&4&0.6&0.6&0.6&0.6&0.6&0.6&MO\\
&50&4000&4&2.7&2.7&2.7&2.6&2.6&2.6&MO\\
&80&4000&4&4.5&4.4&4.4&4.4&4.4&4.3&MO\\
&99&4000&4&6.1&6.0&6.2&6.0&5.9&5.9&MO\\

\cmidrule(lr){1-11}
\multirow{3}{*}{\rotatebox{90}{CMan}}
&256&625&4&5.9&7.0&5.8&6.8&6.7&6.6&7.0\\
&256&6561&4&39.5&43.2&39.9&40.6&40.6&40.0&46.2\\
&256&14641&4&85.5&86.5&86.0&84.1&82.7&82.6&97.1\\

\cmidrule(lr){1-11}
\multirow{3}{*}{\rotatebox{90}{PMan}}
&256&256&4&3.9&4.2&3.8&4.0&4.2&3.7&4.2\\
&256&1296&4&11.5&11.4&11.3&11.2&10.7&10.8&12.6\\
&256&4096&4&23.2&30.1&23.1&27.5&28.8&27.6&30.8\\

\cmidrule(lr){1-11}
\multirow{7}{*}{\rotatebox{90}{Grid}}
&9&122&4&0.1&0.1&0.1&0.1&0.1&0.1&0.1\\
&19&132&4&0.6&1.2&0.6&0.5&1.1&0.5&1.0\\
&39&152&4&6.4&MO&9.0&119.5&MO&145.1&MO\\
&79&192&4&MO&MO&MO&MO&MO&MO&MO\\
&99&374&4&MO&MO&MO&MO&MO&MO&MO\\
&149&424&4&MO&MO&MO&MO&MO&MO&MO\\
&199&474&4&MO&MO&MO&MO&MO&MO&MO\\

\cmidrule(lr){1-11}
\multirow{1}{*}{\rotatebox{90}{Inf}}
&5&4&2&0.0&0.0&0.0&0.0&0.0&0.0&0.0\\

\cmidrule(lr){1-11}
\multirow{4}{*}{\rotatebox{90}{MMind}}
&16&26&16&0.2&0.2&0.2&0.2&0.2&0.2&0.2\\
&27&21&27&0.5&0.9&0.5&0.4&0.7&0.3&0.8\\
&32&37&32&1.5&1.7&1.6&0.8&1.9&0.9&2.1\\
&81&31&81&50.1&166.6&49.5&8.2&TO&12.8&190.8\\
    \bottomrule
  \end{tabular}
  \caption{Satisfiable benchmark results (part 1)}
  \label{table:benchsat1}
\end{table}
}

{\setlength{\tabcolsep}{6pt}
\begin{table}[H]
  \centering
\scalebox{0.9}{
  \begin{tabular}{lrrr|rrrrrrrr}
    \toprule
    & \thead{$|I|$} & \thead{$|S|$}& \thead{$|A|$} & \thead{Ent-} & \thead{Ent+} & \thead{Ent-+} & \thead{NEnt-} & \thead{NEnt+} & \thead{NEnt-+} & \thead{SAT} & \thead{Belief}\\
    \midrule
\multirow{5}{*}{\rotatebox{90}{Exponential}}
&8&19&9&0.1&0.1&0.1&0.0&0.1&0.1&75.9&0.1\\
&12&27&13&0.2&0.2&0.2&0.2&0.2&0.2&TO&0.2\\
&16&35&17&1.8&2.5&2.4&2.7&7.3&7.5&TO&4.1\\
&20&43&21&40.1&132.0&131.4&34.5&196.3&197.6&TO&217.6\\
&24&51&25&MO&TO&TO&MO&MO&MO&TO&MO\\
\cmidrule(lr){1-12}
\multirow{6}{*}{\rotatebox{90}{Frogger}}
&10&1200&4&0.2&0.2&0.2&0.2&0.2&0.2&1.4&22.7\\
&20&1200&4&0.4&0.4&0.4&0.4&0.4&0.5&3.9&MO\\
&29&1200&4&0.6&0.6&0.6&0.6&0.6&0.7&9.1&MO\\
&50&4000&4&2.6&2.6&2.6&2.6&2.6&2.6&121.2&TO\\
&80&4000&4&4.3&4.4&4.4&4.3&4.3&4.4&597.3&TO\\
&99&4000&4&6.0&6.0&5.9&5.8&5.9&6.1&TO&TO\\

\cmidrule(lr){1-12}
\multirow{3}{*}{\rotatebox{90}{CMan}}
&256&625&4&6.6&6.6&6.6&5.8&7.2&5.9&TO&3.8\\
&256&6561&4&40.4&40.7&40.1&32.6&43.7&32.6&TO&9.1\\
&256&14641&4&84.0&82.4&82.5&65.4&88.8&65.3&TO&16.2\\

\cmidrule(lr){1-12}
\multirow{3}{*}{\rotatebox{90}{PMan}}
&256&256&4&4.0&4.2&3.9&3.5&4.2&3.8&11.2&4.6\\
&256&1296&4&10.8&10.9&10.9&9.1&11.4&9.1&375.7&19.5\\
&256&4096&4&27.2&28.7&27.3&21.0&30.9&21.1&TO&203.5\\

\cmidrule(lr){1-12}
\multirow{7}{*}{\rotatebox{90}{Grid}}
&9&122&4&0.1&0.1&0.0&0.1&0.1&0.1&0.2&0.2\\
&19&132&4&0.2&1.1&0.2&0.2&1.3&0.2&3.7&0.6\\
&39&152&4&0.4&MO&0.4&1.5&MO&1.6&121.3&12.6\\
&79&192&4&1.2&MO&1.1&MO&MO&MO&TO&MO\\
&99&374&4&4.9&MO&4.9&MO&MO&MO&TO&MO\\
&149&424&4&6.2&MO&6.2&MO&MO&MO&TO&MO\\
&199&474&4&6.3&MO&6.4&MO&MO&MO&TO&MO\\

\cmidrule(lr){1-12}
\multirow{1}{*}{\rotatebox{90}{Inf}}
&5&4&2&0.0&0.0&0.0&0.0&0.0&0.0&0.0&MO\\

\cmidrule(lr){1-12}
\multirow{4}{*}{\rotatebox{90}{MMind}}
&16&26&16&0.2&0.2&0.2&0.2&0.1&0.2&0.4&0.2\\
&27&21&27&0.6&0.7&0.7&0.6&0.7&0.7&TO&0.7\\
&32&37&32&1.6&1.9&1.6&1.5&1.6&1.5&TO&1.5\\
&81&31&81&153.8&TO&653.5&128.9&126.7&128.1&TO&250.4\\
    \bottomrule
  \end{tabular}
  }
  \caption{Satisfiable benchmark results (part 2)}
  \label{table:benchsat2}
\end{table}
}
\begin{table}[H]
  \centering
\scalebox{0.9}{
  \begin{tabular}{lrrr|rrrrrrr}
    \toprule
    & \thead{$|I|$} & \thead{$|S|$}& \thead{$|A|$} & \thead{BFS-} & \thead{BFS+} & \thead{BFS-+} & \thead{DFS-} & \thead{DFS+} & \thead{DFS-+} & \thead{Comp.}\\
    \midrule

\multirow{4}{*}{\rotatebox{90}{Frogger}}
&5&360&4&0.0&0.0&0.0&0.0&0.0&0.0&0.1\\
&10&360&4&0.1&0.1&0.1&0.1&0.1&0.1&4.2\\
&15&360&4&0.2&0.1&0.1&0.2&0.1&0.1&150.8\\
&29&360&4&0.4&0.3&0.3&0.3&0.3&0.3&MO\\

\cmidrule(lr){1-11}
\multirow{3}{*}{\rotatebox{90}{Pacman}}
&256&81&4&2.8&2.7&2.6&2.9&2.6&2.7&3.1\\
&256&625&4&5.6&5.3&5.3&5.7&5.4&5.5&7.2\\
&256&6561&4&30.4&30.4&30.4&30.8&30.5&30.4&46.2\\

\cmidrule(lr){1-11}
\multirow{4}{*}{\rotatebox{90}{MMind}}
&16&21&16&0.1&0.2&0.1&0.2&0.1&0.2&0.1\\
&27&17&27&0.5&0.5&0.5&0.5&0.5&0.5&0.6\\
&32&25&32&1.0&1.0&1.0&1.1&0.6&0.7&1.1\\
&81&21&81&37.4&37.3&37.4&43.2&56.2&56.0&76.8\\

\cmidrule(lr){1-11}
\multirow{6}{*}{\rotatebox{90}{Exponential}}
&12&26&13&0.2&0.2&0.2&0.2&0.1&0.1&0.2\\
&16&34&17&1.3&1.4&1.4&1.4&0.2&0.2&2.2\\
&20&42&21&32.3&34.9&37.9&33.8&0.3&0.3&33.3\\
&24&50&25&MO&MO&MO&MO&0.9&1.0&MO\\
&28&58&29&MO&MO&MO&MO&8.8&9.1&MO\\
&32&66&33&MO&MO&MO&MO&425.1&429.5&MO\\
\bottomrule
  \end{tabular}
}
  \caption{Unsatisfiable benchmark results (part 1)}
  \label{table:benchunsat1}
\end{table}
\begin{table}[H]
  \centering
\scalebox{0.9}{
  \begin{tabular}{lrrr|rrrrrrrr}
    \toprule
    & \thead{$|I|$} & \thead{$|S|$}& \thead{$|A|$} & \thead{Ent-} & \thead{Ent+} & \thead{Ent-+} & \thead{NEnt-} & \thead{NEnt+} & \thead{NEnt-+} & \thead{SAT} & \thead{Belief}\\

    \midrule

\multirow{4}{*}{\rotatebox{90}{Frogger}}
&5&360&4&0.0&0.0&0.0&0.0&0.0&0.0&TO&0.5\\
&10&360&4&0.1&0.1&0.1&0.1&0.1&0.1&TO&10.6\\
&15&360&4&0.1&0.1&0.2&0.2&0.2&0.2&TO&744.3\\
&29&360&4&0.3&0.3&0.3&0.3&0.3&0.3&TO&MO\\

\cmidrule(lr){1-12}
\multirow{3}{*}{\rotatebox{90}{Pacman}}
&256&81&4&2.8&2.6&2.6&2.7&2.6&2.5&TO&3.2\\
&256&625&4&5.5&5.3&5.2&5.6&5.2&5.0&TO&11.0\\
&256&6561&4&30.5&30.3&30.2&30.7&30.1&29.9&TO&TO\\

\cmidrule(lr){1-12}
\multirow{4}{*}{\rotatebox{90}{MMind}}
&16&21&16&0.1&0.1&0.1&0.1&0.1&0.2&TO&0.3\\
&27&17&27&0.5&0.5&0.5&0.5&0.5&0.5&TO&2.0\\
&32&25&32&1.1&0.7&0.6&0.9&0.9&0.9&TO&4.2\\
&81&21&81&41.9&47.1&41.1&39.8&38.7&38.6&TO&MO\\

\cmidrule(lr){1-12}
\multirow{6}{*}{\rotatebox{90}{Exponential}}
&12&26&13&0.2&0.1&0.1&0.2&0.2&0.2&TO&0.7\\
&16&34&17&1.4&0.2&0.2&1.3&1.3&1.4&TO&10.8\\
&20&42&21&33.8&0.8&0.8&31.0&163.5&173.8&TO&576.1\\
&24&50&25&MO&7.8&8.3&MO&MO&MO&TO&MO\\
&28&58&29&MO&329.7&334.3&MO&MO&MO&TO&TO\\
&32&66&33&MO&343.7&347.7&MO&MO&MO&TO&MO\\
    \bottomrule
  \end{tabular}
}
  \caption{Unsatisfiable benchmark results (part 2)}
  \label{table:benchunsat2}
\end{table}

\clearpage
\paragraph{Plots.}
\begin{figure}[H]
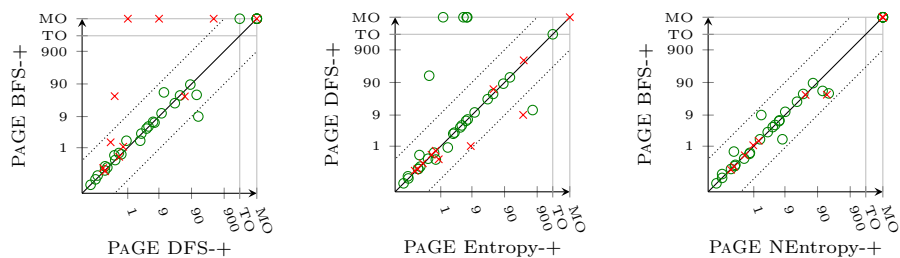

  \centering
  \makebox[\textwidth][c]{
  \subfigure {
    \benchmark{heuristicdfslowerupper}{\PaGE{} DFS-+}{heuristicbfslowerupper}{\PaGE{} BFS-+}
  }%
  \subfigure {
    \benchmark{heuristicentropylowerupper}{\PaGE{} Entropy-+}{heuristicdfslowerupper}{\PaGE{} DFS-+}
  }%
  \subfigure {
    \benchmark{heuristicnentropylowerupper}{\PaGE{} NEntropy-+}{heuristicbfslowerupper}{\PaGE{} BFS-+}
  }%
  }
  \vspace{-1em}
  \caption{Relative performance of \PaGE{} heuristics}
  \label{fig:extra_scatter_plots1}
\end{figure}
\begin{figure}[H]
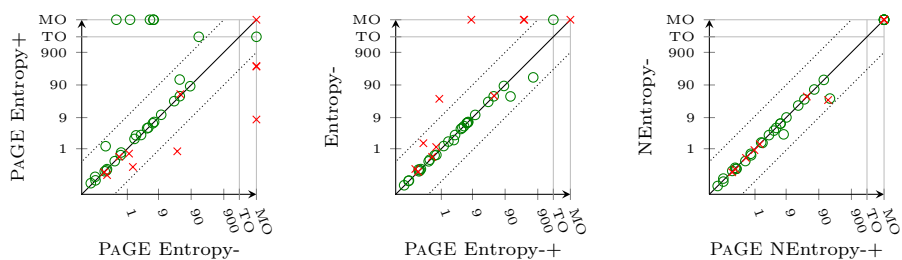

  \centering
  \makebox[\textwidth][c]{
  \subfigure {
    \benchmark{heuristicentropylower}{\PaGE{} Entropy-}{heuristicentropyupper}{\PaGE{} Entropy+}
  }%
  \subfigure {
    \benchmark{heuristicentropylowerupper}{\PaGE{} Entropy-+}{heuristicentropylower}{Entropy-}
  }%
  \subfigure {
    \benchmark{heuristicnentropylowerupper}{\PaGE{} NEntropy-+}{heuristicnentropylower}{NEntropy-}
  }%
  }
  \vspace{-1em}
  \caption{Relative performance of \PaGE{} algorithm using either a lower bound, upper bound, or both}
  \label{fig:extra_scatter_plots2}
\end{figure}

\end{document}